\documentclass[
 onecolumn,
 11pt, 
 draftcls
]{IEEEtran}

\usepackage{array}
\usepackage{epsfig}
\usepackage{cite}
\usepackage{amsfonts}
\usepackage{amsmath}
\usepackage{amssymb}
\usepackage{amsthm}
\usepackage{amsxtra}
\usepackage{varioref}
\usepackage{multicol}
\usepackage{float}
\usepackage{rotate}
\usepackage{color}
\usepackage{enumerate}
\usepackage[active]{srcltx} 
\usepackage{mathrsfs}
\usepackage[usenames,dvipsnames,svgnames,table]{xcolor}
\usepackage{rotating}
\usepackage{Milancommands}

\definecolor{milancolor}{RGB}{160,80,0}
\definecolor{matiascolor}{RGB}{0,0,153}

\newcommand{\rows}[3][\bQ_{n}]{[#1]^{#2}_{#3}}
\newcommand{\block}[3][\bQ_{n}]{{^{#2}\![#1]_{#3}}}
\newcommand{\bSigma}{\boldsymbol{\Sigma}}

\usepackage{times}
\usepackage[english]{babel}
\usepackage[utf8]{inputenc}
\usepackage{fontenc}
\usepackage{mathtext}
\usepackage[colorlinks=true,linkcolor=black,citecolor=black]{hyperref}


\newcommand{\Dmn}{{^{m+1}\![\bD_{n}]_{n}}}
\newcommand{\bPhi}{\boldsymbol{\Phi}}
\newcommand{\bPsi}{\boldsymbol{\Psi}}
\newcommand{\bTheta}{\boldsymbol{\Theta}}

\newtheorem{thm}{Theorem}
\newtheorem{defn}{Definition}
\newtheorem{coro}{Corollary}
\newtheorem{lem}{Lemma}
\newtheorem{rem}{Remark}
\newtheorem{prop}{Proposition}

\newtheorem{assu}{\textbf{Assumption}}


\title{The Entropy Gain of Linear Time-Invariant Filters and Some of its Implications
} 

\author{Milan S. Derpich, 
Matías Müller\thanks{
Milan S. Derpich and Matías Müller are with the Department of Electronic Engineering, Universidad Técnica Federico Santa María, Chile.
Emails: milan.derpich@usm.cl,matias.muller@alumnos.usm.cl. 
Jan {\O}stergaard is with the
Department of Electronic Systems,
Aalborg University, Denmark.
Email: jo@es.aau.dk.
This work is partially supported by CONICYT PCHA/ Beca Magíster Complementario/Año 2013 – Folio 221320226, CONICYT Fondecyt grant 1140384 and CONICYT Basal Fund FB0008.} 
and Jan {\O}stergaard}

\begin{document}

\maketitle
\begin{abstract}
We study the increase in per-sample differential entropy rate of random sequences and processes after being passed through
a non minimum-phase (NMP) discrete-time, linear time-invariant (LTI) filter $G$.
For LTI discrete-time filters and random processes, it has long been established that this entropy gain, $\Gsp(G)$, equals the integral of $\log\abs{G\ejw}$.
It is also known that, if the first sample of the impulse response of $G$ has unit-magnitude, then the latter integral 
equals the sum of the logarithm of the magnitudes of the non-minimum phase zeros of $G$ (i.e., its zeros outside the unit circle), say $\Bsp(G)$.
These existing results have been derived in the frequency domain as well as in the time domain.
In this note, we begin by showing that existing time-domain proofs, which consider finite length-$n$ sequences and then let $n$ tend to infinity, have neglected significant mathematical terms and, therefore, are inaccurate.
We discuss some of the implications of this oversight when considering random processes.
We then present a rigorous time-domain analysis of the entropy gain of LTI filters for random processes.
In particular, we show that the entropy gain between equal-length input and output sequences is upper bounded by $\Bsp(G)$ and arises if and only if there exists an output additive disturbance with finite differential entropy (no matter how small) or a random initial state.
Unlike what happens with linear maps, the entropy gain in this case depends on the distribution of all the signals involved.
Instead, when comparing the input differential entropy to that of the entire (longer) output of $G$, the entropy gain equals $\Bsp(G)$ irrespective of the distributions and without the need for additional exogenous random signals.
We illustrate some of the consequences of these results by presenting their implications in three different problems. Specifically: 
a simple derivation of the rate-distortion function for Gaussian non-stationary sources, 
conditions for equality in an information inequality of importance in networked control problems,
and an observation on the capacity of auto-regressive Gaussian channels with feedback.
\end{abstract}

\section{Introduction}\label{sec:Intro}
In his seminal 1948 paper~\cite{shawea49}, Claude Shannon gave a formula for the increase in differential entropy per degree of freedom that a continuous-time, band-limited random process $\rvau(t)$ experiences after passing through a linear time-invariant (LTI) continuous-time filter.
In this formula, if the input process is band-limited to a frequency range $[0,B]$, has differential entropy rate (per degree of freedom) $\bar{h}(\rvau)$, and the LTI filter has frequency response $G\jw$, then the resulting differential entropy rate of the output process $\rvay(t)$  is given by%
\cite[Theorem~14]{shawea49}
\begin{align}\label{eq:hgain_Shannon}
 \bar{h}(\rvay) = \bar{h}(\rvau)  + \frac{2}{B}\Intfromto{0}{B} \log \abs{G\jw}d\w.
\end{align}
The last term on the right-hand side (RHS) of~\eqref{eq:hgain_Shannon} 
can be understood as the \textit{entropy gain} (entropy amplification or entropy boost)
introduced by the filter~$G\jw$.
Shannon proved this result by arguing that an LTI filter can be seen as a linear operator that selectively scales its input signal 
along infinitely many frequencies, each of them representing an orthogonal component of the source.
The result is then obtained by writing down the determinant of the Jacobian of this operator as the product of the frequency response of the filter over $n$ frequency bands, applying logarithm and then taking the limit as the number of frequency components tends to infinity.

An analogous result can be obtained for discrete-time input $\procu$ and output $\procy$ processes,  and an LTI discrete-time filter $G(z)$ by relating them to their continuous-time counterparts, which yields
\begin{align}\label{eq:hgain_discrete}
 \bar{h}(\procy) = \bar{h}(\procu) + \frac{1}{2\pi}\intfromto{-\pi}{\pi}\log\abs{G\ejw}d\w,
\end{align}
where 
$$
\bar{h}(\procu)\eq \lim_{n\to\infty} \tfrac{1}{n} h(\rvau(1),\rvau(2),\ldots,\rvau(n))
$$
is the differential entropy rate of the process $\procu$.
Of course the same formula can also be obtained by applying the frequency-domain proof technique that Shannon followed in his derivation of~\eqref{eq:hgain_Shannon}.

The rightmost term in~\eqref{eq:hgain_discrete}, which corresponds to the entropy gain of $G(z)$, can be related to the structure of this filter. 
It is well known that if $G$ is causal with a rational transfer function $G(z)$ such that 
$\lim_{z\to\infty}|G(z)|=1$ (i.e., such that the first sample of its impulse response has unit magnitude), then  
\begin{align}\label{eq:Jensen}
\frac{1}{2\pi}\intfromto{-\pi}{\pi}\log\abs{G\ejw}d\w 
 = \Sumover{c_{i}\notin\mathbb{D}}\log\abs{\rho_{i}},  
\end{align}
where $\set{\rho_{i}}$ are the zeros of $G(z)$ and 
${\mathbb{D}}\eq \set{z\in\mathbb{C}: \abs{z} <1}$ is the open unit disk on the complex plane.
This provides a straightforward way to evaluate the entropy gain of a given LTI filter with rational transfer function $G(z)$.
In addition,~\eqref{eq:Jensen} shows that, if $\lim_{z\to\infty}|G(z)|=1$, then such gain is greater than one if and only if $G(z)$ has zeros outside $\mathbb{D}$.
A filter with the latter property is said to be \textit{non-minimum phase} (NMP); conversely, a filter with all its zeros inside $\mathbb{D}$ is said to be \textit{minimum phase} (MP)~\cite{serbra97}.

NMP filters appear naturally in various applications. 
For instance, any unstable LTI system stabilized via linear feedback control will yield transfer functions which are NMP~\cite{serbra97,googra00}.
Additionally, NMP-zeros also appear when a discrete-time with ZOH (\emph{zero order hold}) equivalent system is obtained from a plant whose number of poles exceeds its number of zeros by at least 2, as the sampling rate increases~\cite[Lemma 5.2]{yuzgoo14}. 
On the other hand, all linear-phase filters, 
which are specially suited for audio and image-processing applications, are NMP~\cite{hayes-96,vaidya93}. 
The same is true for any all-pass filter, which is an important building block in signal processing applications~\cite{smith-07,hayes-96}.


An alternative approach for obtaining the entropy gain of LTI filters is to work in the time domain; obtain $\rvay_{1}^{n}\eq \set{\rvay_{1},\rvay_{1},\ldots,\rvay_{n}}$
as a function of $\rvau_{1}^{n}$, for every $n\in\Nl$,
and evaluate the limit 
$
 \lim_{n\to\infty}\frac{1}{n}\left(h(\rvay_{1}^{n}) - h(\rvau_{1}^{n}) \right)
$.
More precisely, for a filter $G$ with impulse response $g_{0}^{\infty}$, we can write
\begin{align}\label{eq:y_of_u_matrix}
 \rvey^{1}_{n} = 
\underbrace{\begin{pmatrix}
 g_{0} & 0    &\cdots& 0\\
 g_{1} & g_{0}&\cdots &0\\
 \vdots & &\ddots &\vdots\\
 g_{n-1}& g_{n-2}&\cdots &g_{0}
\end{pmatrix}}_{\bG_{n}}
\rveu^{1}_{n},
\end{align}
where $\rvey^{1}_{n} \eq [\rvay_{1}\ \rvay_{1}\,\cdots \ \rvay_{n}]^{T}$ and the random  vector $\rveu^{1}_{n}$ is defined likewise. 
From this, it is clear that 
\begin{align}\label{eq:hy=hu+logdet}
 h(\rvey^{1}_{n}) = h(\rveu^{1}_{n}) + \log|\det(\bG_{n})|,
\end{align}
where $\det(\bG_n)$ (or simply $\det \bG_n$) stands for the determinant of $\bG_n$. Thus, 
%
%
%
\begin{align}\label{eq:f0=1_andtherest}
|g_{0}|=1\Longrightarrow |\det(\bG_{n})|=1, \;\forall n\in\Nl \Longleftrightarrow
h(\rvey^{1}_{n}) = h(\rveu^{1}_{n}),\;\forall n\in\Nl
\Longrightarrow
 \lim_{n\to\infty}\frac{1}{n}\left[h(\rvey_{1}^{n}) - h(\rveu_{1}^{n})\right] =0,
\end{align}
regardless of whether $G(z)$ (i.e., the polynomial $g_{0} + g_{1}z^{-1}+\cdots $) has zeros with magnitude greater than one, \textbf{which clearly
contradicts~\eqref{eq:hgain_discrete} and~\eqref{eq:Jensen}}. 
Perhaps surprisingly, the above contradiction not only has been overlooked in previous works (such as~\cite{aarmcd67,zanigl03}), but the time-domain formulation  in the form of~\eqref{eq:y_of_u_matrix} has been utilized as a means to prove or disprove~\eqref{eq:hgain_discrete} (see, for example, the reasoning in~\cite[p.~568]{papou91}).

A reason for why the contradiction between~\eqref{eq:hgain_discrete},~\eqref{eq:Jensen} and~\eqref{eq:f0=1_andtherest} arises 
can be obtained from the analysis developed in~\cite{mardah08} for an LTI system $P$ within a noisy feedback loop, as the one depicted in Fig.~\ref{fig:fbksystem}.
In this scheme, $C$ represents a causal feedback channel which combines the output of $P$ with an exogenous (noise) random process $\rvac_{1}^{\infty}$ to generate its output.
The process $\rvac_{1}^{\infty}$ is assumed independent of the initial state of $P$, represented by the random vector $\rvex_{0}$, which has finite differential entropy.
\begin{figure}[t]
\centering
\input{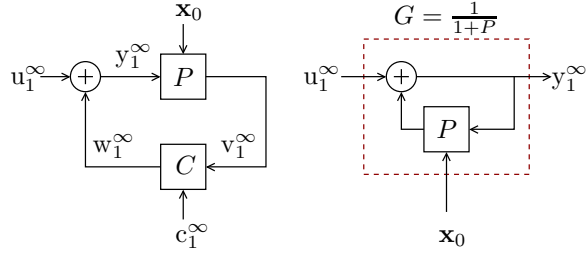}
\caption{Left: LTI system $P$ within a noisy feedback loop. Right: equivalent system when the feedback channel is noiseless and has unit gain.}
\label{fig:fbksystem}
\end{figure}
For this system, it is shown in~\cite[Theorem 4.2]{mardah08} that 
\begin{subequations}\label{eq:martins_both}
 \begin{align}\label{eq:martins}
 \bar{h}(\rvay_{1}^{\infty}) \geq \bar{h}(\rvau_{1}^{\infty}) + \lim_{n\to\infty}\frac{1}{n}I(\rvex_{0}; \rvay_{1}^{n}),
\end{align}
with equality if $\rvaw$ is a deterministic function of $\rvav$.
Furthermore, it is shown in~\cite[Lemma 3.2]{mardah05} that if $|h(\rvex_{0})|<\infty$ and the steady state variance of system $P$ remains asymptotically bounded as $k\to\infty$, then  
\begin{align}\label{eq:martins_I_bound}
\lim_{n\to\infty}\frac{1}{n}I(\rvex_{0}; \rvay_{1}^{n})
\geq \sumover{p_{i}\notin\mathbb{D}}\log\abs{p_{i}},
\end{align}
\end{subequations}
where $\set{p_{i}}$ are the poles of $P$.
Thus, for the (simplest) case in which $\rvaw=\rvav$, the output $\rvay_{1}^{\infty}$ is the result of filtering $\rvau_{1}^{\infty}$ by a filter $G=\frac{1}{1+P}$ (as shown in Fig.~\ref{fig:fbksystem}-right), and the resulting entropy rate of $\procy$ will exceed that of $\procu$ only if there is a random initial state with bounded differential entropy (see~\eqref{eq:martins}). 
Moreover, under the latter conditions,~\cite[Lemma 4.3]{mardah08} implies that if $G(z)$ is stable and $|h(\rvex_{0})|<\infty$, then this entropy gain will be lower bounded by the \textit{right-hand side} (RHS) of~\eqref{eq:Jensen}, which is greater than zero if and only if $G$ is NMP. 
However, the result obtained in~\eqref{eq:martins_I_bound} does not provide conditions under which the equality in the latter equation holds.

Additional results and intuition related to this problem can be obtained from in~\cite{kim-yh10}.
There it is shown that if $\procy$ is a two-sided Gaussian stationary random process generated by a state-space recursion of the form 
\begin{subequations}\label{subeq:State_Space_YHKIM}
\begin{align}
 \rves_{k+1}& = (\bA - \bg\bh^{H}) \rves_{k} - \bg \rvau_{k},\\
 \rvay_{k}  & = \bh^{H} \rves_{k} + \rvau_{n},
\end{align}
\end{subequations}
for some $\bA\in\mathbb{C}^{M\times M}$, $\bg\in\mathbb{C}^{M\times 1}$, $\bh\in\mathbb{C}^{M\times 1}$,
with unit-variance Gaussian i.i.d. innovations $\rvau_{-\infty}^{\infty}$, then its entropy rate will be exactly
$\frac{1}{2}\log(2\pi\expo{})$ (i.e., the differential entropy rate of $\rvau_{-\infty}^{\infty}$) plus the RHS of~\eqref{eq:Jensen} (with $\set{\rho_{i}}$ now being the eigenvalues of $\bA$ outside the unit circle).
However, as noted in~\cite{kim-yh10}, if the same system with zero (or deterministic) initial state is excited by a one-sided infinite Gaussian i.i.d. process $\rvau_{1}^{\infty}$ with unit sample variance, then the (asymptotic) entropy rate of the output process $\rvay_{1}^{\infty}$ is just~$\frac{1}{2}\log(2\pi\expo{})$ (i.e., there is no entropy gain).
Moreover, it is also shown that if $\rvav_{1}^{\ell}$ is a Gaussian random sequence with positive-definite covariance matrix and $\ell\geq M$, then the entropy rate of $\rvay_{1}^{\infty}+\rvav_{1}^{\ell}$ also exceeds that of $\rvau_{1}^{\infty}$ by the RHS of~\eqref{eq:Jensen}.
This suggests that for an LTI system which admits a state-space representation of the form~\eqref{subeq:State_Space_YHKIM}, the entropy gain 
for a single-sided Gaussian i.i.d. input is zero, and that the entropy gain from the input to the output-plus-disturbance is~\eqref{eq:Jensen}, for any Gaussian disturbance of length $M$ with positive definite covariance matrix (no matter how small this covariance matrix may be).


The previous analysis suggests that it is the absence of a random initial state or a random additive output disturbance that makes the time-domain formulation~\eqref{eq:y_of_u_matrix} yield a zero entropy gain.
But, how would the addition of such finite-energy exogenous random variables to~\eqref{eq:y_of_u_matrix} actually produce an increase in the differential entropy rate which asymptotically equals the RHS of~\eqref{eq:Jensen}? 
In a broader sense, it is not clear from the results mentioned above what the necessary and sufficient conditions 
are under which  
an entropy gain equal to the RHS of~\eqref{eq:Jensen} arises (the analysis in~\cite{kim-yh10} provides only a set of sufficient conditions and relies on second-order statistics and Gaussian innovations to derive the results previously described). 
Another important observation to be made is the following: it is well known that the entropy gain introduced by a linear mapping is independent of the input statistics~\cite{shawea49}. 
However, 
there is no reason to assume such independence 
when this entropy gain arises as the result of adding a random signal to the input of the mapping, i.e., when the mapping by itself does not produce the entropy gain. 
Hence, it remains to characterize the largest set of input statistics which yield an entropy gain, and the magnitude of this gain.

The first part of this paper provides answers to these questions. 
In particular, in Section~\ref{sec:geometric_interpretation} explain how and when the entropy gain arises (in the situations described above), starting with input and output sequences of finite length,
in a time-domain analysis similar to~\eqref{eq:y_of_u_matrix},
and then taking the limit as the length tends to infinity.
In Section~\ref{sec:entropy_gain_output_disturb} it is shown that, in the output-plus-disturbance scenario, 
the entropy gain is \emph{at most} the RHS of~\eqref{eq:Jensen}. 
We show that, for a broad class of input processes (not necessarily Gaussian or stationary), this maximum entropy gain
is reached only when the disturbance has bounded differential entropy and its length is at least equal to the number of non-minimum phase zeros of the filter.
We provide upper and lower bounds on the entropy gain if the latter condition is not met.
A similar result is shown to hold when there is a random initial state in the system (with finite differential entropy).
In addition, in Section~\ref{sec:entropy_gain_output_disturb} we study the entropy gain between the \emph{entire output sequence} that a filter yields as response to a shorter input sequence (in Section~\ref{sec:effective_entropy}).
In this case, however, it is necessary to consider a new definition for differential entropy, named \emph{effective differential entropy}.
Here we show that 
 an effective entropy gain equal to the RHS of~\eqref{eq:Jensen} is obtained provided the input has finite differential entropy rate, even when there is no random initial state or output disturbance.

In the second part of this paper (Section\ref{sec:implications}) we apply the conclusions obtained in the first part to three problems, namely, 
networked control,
the rate-distortion function for non-stationary Gaussian sources, 
and the Gaussian channel capacity with feedback.
In particular, we show that equality holds in~\eqref{eq:martins_I_bound} 
for the feedback system in Fig.~\ref{fig:fbksystem}-left 
under very general conditions (even when the channel $C$ is noisy).
For the problem of finding the quadratic rate-distortion function for non-stationary auto-regressive Gaussian sources, previously solved in~\cite{gray--70,hasari80,grahas08}, we provide a simpler proof based upon the results we derive in the first part.
This proof extends the result stated in~\cite{hasari80,grahas08} to a broader class of non-stationary sources. 
For the feedback Gaussian capacity problem, we show that capacity results based on 
using a short random sequence as channel input and relying on a feedback filter which boosts the entropy rate of the end-to-end channel noise (such as the one proposed in~\cite{kim-yh10}), crucially depend  upon the complete absence of any additional disturbance anywhere in the system.
Specifically, we show that the information rate of such capacity-achieving schemes drops to zero in the presence of any such additional disturbance.
As a consequence, the relevance of characterizing the robust (i.e., in the presence of disturbances) feedback capacity of Gaussian channels, which appears to be a fairly unexplored problem, becomes evident.

Finally, the main conclusions of this work are summarized in Section~\ref{sec:conclusions}.

Except where present, all proofs are presented in the appendix.

\subsection{Notation}
For any LTI system $G$, the transfer function $G(z)$ corresponds to the $z$-transform of the impulse response $g_{0},\, g_{1}, \ldots$, i.e., $G(z) = \sumfromto{i=0}{\infty} g_{i}z^{-i}$.
For a transfer function $G(z)$, we denote by $\bG_{n}\in\Rl^{n\times n}$ the lower triangular Toeplitz matrix having $[g_{0}\ \cdots \ g_{n-1}]^{T}$ as its first column.
We write $x_{1}^{n}$ as a shorthand for the sequence $\set{x_{1},\ldots, x_{n}}$ and, when convenient, we write $x_{1}^{n}$ in vector form as  $\bx^{1}_{n}\eq [x_{1}\ x_{2}\ \cdots \ x_{n}]^{T}$, where $()^{T}$ denotes transposition.
Random scalars  (vectors) are denoted using non-italic characters, such as $\rvax$ (non-italic and boldface characters, such as $\rvex$).
For matrices we use upper-case boldface symbols, such as $\bA$.
We write $\lambda_{i}(\bA)$
to the note the $i$-th smallest-magnitude eigenvalue of $\bA$. 
If $\bA_{n}\in\mathbb{C}^{n\times n}$, then $\bA_{i,j}$ denotes the entry in the intersection between the $i$-th row and the $j$-th column.
We write $[\bA_{n}]^{i_{1}}_{i_{2}}$, with $i_{1}\leq i_{2}\leq n$, to refer to the matrix formed by selecting the rows $i_{1}$ to $i_{2}$ of $\bA$.
The expression ${^{m_{1}}}\![\bA]_{m_{2}}$ corresponds to the square sub-matrix along the main diagonal of $\bA$, with its top-left and bottom-right corners on $\bA_{m_1,m_1}$ and $\bA_{m_{2},m_{2}}$, respectively.
A diagonal matrix whose entries are the elements in $\Dsp$ is denoted as $\diag\Dsp$

\section{Problem Definition and Assumptions}
Consider the discrete-time system depicted in Fig.~\ref{fig:general}.
In this setup, the input $\rvau_{1}^{\infty}$ is a random process
and the block $G$ is a causal, linear and time-invariant system with random initial state vector $\rvex_{0}$ and random output disturbance $\rvaz_{1}^{\infty}$.
\begin{figure}[t]
 \centering
\input{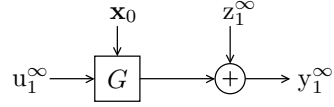}
\caption{Linear, causal, stable and time-invariant system $G$ with input and output processes, initial state and output disturbance.}
\label{fig:general}
\end{figure}
In vector notation,
\begin{align}\label{eq:system_equation}
 \rvey^{1}_{n}\eq \bG_{n}\rveu^{1}_{n} + \bar{\rvey}^1_n + \rvez^1_n,\fspace n\in\Nl,
\end{align}
where 
$\bar{\rvey}^1_n$
is the natural response of $G$ to the initial state $\rvex_{0}$.
We make the following further assumptions about $G$ and the signals around it:
\begin{assu}\label{assu:G_Factorized}
$G(z)$ is a causal, stable and rational transfer function of finite order, whose impulse response $g_{0},g_{1},\ldots$ satisfies
$ g_{0}=1$.
\finenunciado
\end{assu}
It is worth noting that there is no loss of generality in considering $g_{0}=1$, since otherwise one can write $G(z)$ as $G'(z)=g_{0}\cdot G(z)/g_{0}$,
  and thus the entropy gain introduced by $G'(z)$ would be $\log g_{0}$ plus the entropy gain due to  $G(z)/g_{0}$, which has an impulse response where the first sample equals $1$.

%
\begin{assu}
The random initial state $\rvex_{0}
$ is independent of $\rvau_{1}^{\infty}$.
\end{assu}
\begin{assu}\label{assu:z}
The disturbance $\rvaz_{1}^{\infty}$ is independent of $\rvau_{1}^{\infty}$ and  belongs to a $\kappa$-dimensional linear subspace, for some finite $\kappa\in\Nl$.
This subspace is spanned by the $\kappa$ orthonormal columns of a matrix $\boldsymbol{\Phi}\in\Rl^{|\Nl|\times \kappa}$ (where $|\Nl|$ stands for the countably infinite size of $\Nl$), such that 
$|h(\bPhi^{T}\rvez^{1}_{\infty})|<\infty$.  
Equivalently, $\rvez^{1}_{\infty} = \bPhi \rves^{1}_{\kappa}$, where the random vector 
$\rves^{1}_{\kappa}\eq \bPhi^{T}\rvez^{1}_{\infty}$ has finite differential entropy and is independent of $\rveu^{1}_{\infty}$.
\end{assu}

As anticipated in the Introduction, we are interested in characterizing the entropy gain $\Gsp$ of $G$ in the presence (or absence) of the random inputs
$\rvau_{1}^{\infty},\rvex_{0},\rvaz_{1}^{\infty}$, denoted by 
\begin{align}\label{eq:Gsp_def}
\Gsp(G,\rvex_{0},\rvau_{1}^{\infty},\rvaz_{1}^{\infty})\eq \lim_{n\to\infty}
\frac{1}{n}
\left(h(\rvay_{1}^{n})- h(\rvau_{1}^{n})\right).
\end{align}
In the next section we provide geometrical insight into the behaviour of $\Gsp(G,\rvex_{0},\rvau_{1}^{\infty},\rvaz_{1}^{\infty})$ for the situation where there is a random output disturbance and no random initial state. 
A formal and precise treatment of this scenario is then presented in Section~\ref{sec:entropy_gain_output_disturb}. 
The other scenarios are considered in the subsequent sections.

%
%

%

\section{Geometric Interpretation}\label{sec:geometric_interpretation}
In this section we provide an intuitive geometric interpretation of how and when the entropy gain defined in~\eqref{eq:Gsp_def} arises. 
This understanding will justify the introduction of the notion of an entropy-balanced random process (in Definition~\ref{def:entropy_balanced} below), which will be shown to play a key role in this and in related problems. 

\subsection{An Illustrative Example}
Suppose for the moment that $G$ in Fig.~\ref{fig:general} is an FIR filter with impulse response $g_{0}=1,\ g_{1}=2,\ g_{i}=0,\, \forall i\geq 2$.
Notice that this choice yields $G(z)= (z-2)/z$, and thus $G(z)$ has one non-minimum phase zero, at $z=2$.
The associated matrix $\bG_{n}$ for $n=3$ is
$$
\bG_{3}=\begin{pmatrix}
 1&0&0\\
 2&1&0\\
0&2&1
\end{pmatrix},
$$ 
whose determinant is clearly one (indeed, all its eigenvalues are $1$).
Hence, as discussed in the introduction, 
$h(\bG_{3}\rveu^{1}_{3})=h(\rveu^{1}_{3})$, and thus $\bG_{3}$ (and $\bG_{n}$ in general) does not introduce an entropy gain by itself.
However, an interesting phenomenon becomes evident by looking at the singular-value decomposition (SVD) of $\bG_{3}$, given by 
$\bG_{3}=\bQ_{3}^{T}\bD_{3}\bR_{3}$, 
where $\bQ_{3}$ and $\bR_{3}$ are unitary matrices and $\bD_{3}\eq \diag\set{d_{1},d_{2},d_{3}}$. 
In this case, $\bD_3 = \diag\set{ 0.19394,\, 1.90321, \, 2.70928}$, and thus one of the singular values of $\bG_{3}$ is much smaller than the others (although the product of all singular values yields $1$, as expected).
As will be shown in Section~\ref{sec:entropy_gain_output_disturb}, for a stable $G(z)$ such uneven distribution of singular values arises only when $G(z)$ has non-minimum phase zeros.
The effect of this can be visualized by looking at the image of the cube $[0,1]^{3}$ through $\bG_{3}$ shown in Fig.~\ref{fig:cube}.
\begin{figure}[htbp]
 \centering
 \includegraphics[width = 8 cm, trim= 0 70 0 70, clip]{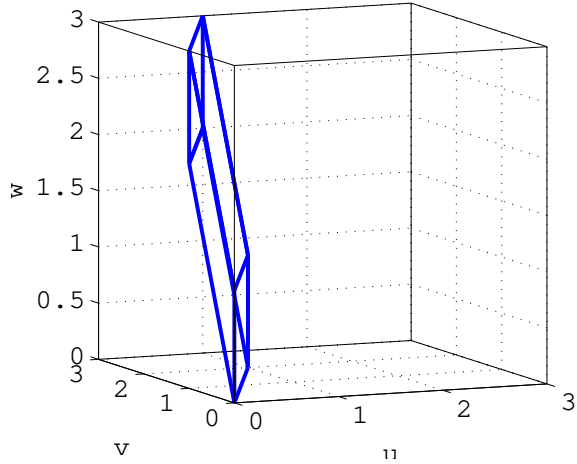} 
\caption{Image of the cube $[0,1]^{3}$ through the square matrix with columns 
$[1\; 2 \; 0]^{T}$,
$[0\; 1 \; 2]^{T}$ and
$[0\; 0 \; 1]^{T}$.
}
\label{fig:cube}
\end{figure}
If the input $\rveu^{1}_{3}$ were uniformly distributed over this cube (of unit volume), then $\bG_{3}\rveu^{1}_{3}$ would distribute uniformly over the unit-volume parallelepiped depicted in Fig.~\ref{fig:cube}, and hence
$h(\bG_{3}\rveu^{1}_{3})=h(\rveu^{1}_{3})$. 

Now, if we add to $\bG_{3}\rveu^{1}_{3}$ a disturbance $\rvez^{1}_{3}=\bPhi \rvas$, with scalar $\rvas$ uniformly distributed over $[-0.5,\ 0.5]$ independent of $\rveu^{1}_{3}$, and with $\bPhi\in\Rl^{3\times 1}$, the effect would be to ``thicken'' the support over which the resulting random vector 
$\rvey^{1}_{3}=\bG_{3}\rveu^{1}_{3}+\rvez^{1}_{3}$ is distributed, along the direction pointed by $\bPhi$.
If $\bPhi$ is aligned with the direction along which the support of $\bG_{3}\rveu^{1}_{3}$ is thinnest
(given by $\bq_{3,1}$, the first row of $\bQ_{3}$), then the resulting support would have its volume significantly increased, which can be associated with a large increase in the differential entropy of $\rvey^{1}_{3}$ with respect to $\rveu^{1}_{3}$.
Indeed, a relatively small variance of $\rvas$ and an approximately aligned $\bPhi$ would still produce a  significant entropy gain.

The above example suggests that the entropy gain from $\rveu^{1}_{n}$ to $\rvey^{1}_{n}$ appears as a combination of two factors.
The first of these is the uneven way in which the random vector $\bG_{n}\rveu^{1}_{n}$ is distributed over $\Rl^{n}$.
The second factor is the alignment of the disturbance vector $\rvez^{1}_{n}$ with respect to the 
span of the subset $\set{\bq_{n,i}}_{i\in\Omega_{n}}$ of columns of $\bQ_{n}$, associated with smallest singular values of $\bG_{n}$, indexed by the elements in the set $\W_n$.
As we shall discuss in the next section, if $G$ has $m$ non-minimum phase zeros, then, as $n$ increases, there will be $m$ singular values of $\bG_{n}$ going to zero exponentially.
Since the product of the singular values of $\bG_{n}$ equals $1$ for all $n$, it follows that $\prod_{i\notin \W_{n}}d_{n,i}$ must grow exponentially with $n$, where $d_{n,i}$ is the $i$-th diagonal entry of $\bD_n$.
This implies that $\bG_{n}\rveu^{1}_{n}$ expands with $n$ along the span of $\set{\bq_{n,i}}_{i\notin\W_{n}}$, compensating its shrinkage along the span of $\set{\bq_{n,i}}_{i\in\W_{n}}$, thus keeping $h(\bG_{n}\rveu^{1}_{n})=h(\rveu^{1}_{n})$ for all $n$.
Thus, as $n$ grows, any small disturbance distributed over the span of $\set{\bq_{n,i}}_{i\in\W_{n}}$, added to $\bG_{n}\rveu^{1}_{n}$, will keep the support of the resulting distribution from shrinking along this subspace.
Consequently, the expansion of 
$\bG_{n}\rveu^{1}_{n}$ with $n$ along the span of $\set{\bq_{n,i}}_{i\notin\W_{n}}$ is no longer compensated, yielding an entropy increase proportional to $\log(\prod_{i\notin\W_{n}} d_{n,i})$.

The above analysis allows one to anticipate a situation in which no entropy gain would take place even when some singular values of $\bG_{n}$ tend to zero as $n\to\infty$. 
Since the increase in entropy is made possible by the fact that, as $n$ grows, the support of the distribution of $\bG_{n}\rveu^{1}_{n}$ shrinks along the span of $\set{\bq_{n,i}}_{i\in\W_{n}}$, no such entropy gain should arise if the support of the distribution of the input $\rveu^{1}_{n}$ expands accordingly along the directions pointed by the rows $\set{\br_{n,i}}_{i\in\W_{n}}$ of $\bR_{n}$. 

An example of such situation can be easily constructed as follows: Let $G(z)$ in Fig.~\ref{fig:general} have non-minimum phase zeros and 
suppose that $\rvau_{1}^{\infty}$ is generated as $G^{-1}\tilde{\rvau}_{1}^{\infty}$, where $\tilde{\rvau}_{1}^{\infty}$ is an i.i.d. random process with bounded entropy rate.
Since the determinant of $\bG_{n}^{-1}$ equals $1$ for all $n$, we have that $h(\rveu^{1}_{n})=h(\tilde{\rveu}^{1}_{n})$, for all $n$.
On the other hand,
$\rvey^{1}_{n}
=
\bG_{n}\bG_{n}^{-1}\tilde{\rveu}^{1}_{n} + \rvez^{1}_{n}
=
\tilde{\rveu}^{1}_{n} + \rvez^{1}_{n}
$.
Since $\rvez^{1}_{n}=[\bPhi]^{1}_{n}\rves^{1}_{\kappa}$ for some finite $\kappa$ (recall Assumption~\ref{assu:z}), it is easy to show that 
$
\lim_{n\to\infty}\frac{1}{n} h(\rvey^{1}_{n})
= 
\lim_{n\to\infty}\frac{1}{n} h(\tilde{\rveu}^{1}_{n})
= 
\lim_{n\to\infty}\frac{1}{n} h({\rveu}^{1}_{n})
$,
and thus no entropy gain appears.

The preceding discussion reveals that the entropy gain produced by $G$ in the situation shown in Fig.~\ref{fig:general} \textbf{depends on the distribution of the input and on the support and distribution of the disturbance}.
This stands in stark contrast with the well known fact that the increase in differential entropy produced by an invertible linear operator depends only on its Jacobian, and not on the statistics of the input~\cite{shawea49}. 
We have also seen that the distribution of a random process along the different directions within the Euclidean space which contains it plays a key role as well.
This motivates the need to specify a class of random processes which distribute more or less evenly over all directions.
The following section introduces a rigorous definition of this class and characterizes a large family of processes belonging to it.

\subsection{Entropy-Balanced Processes}\label{subsec:entropy_balanced}
We begin by formally introducing the notion of an ``entropy-balanced'' process
 $\rvau_{1}^{\infty}$, being one in which, for every finite $\nu\in\Nl$, the differential entropy rate of the orthogonal projection of $\rvau_{1}^{n}$ into any subspace of dimension $n-\nu$ equals the entropy rate of $\rvau_{1}^{n}$ as $n\to\infty$.
This idea is precisely in the following definition.
\begin{defn}\label{def:entropy_balanced}
 A random process $\set{\rvav(k)}_{k=1}^{\infty}$ is said to be entropy balanced if, for every $\nu\in\Nl$, 
\begin{subequations}\label{eq:the_painful_assumption}
\begin{align}
\lim_{n\to\infty}\frac{1}{n}& 
\left( 
h(\boldsymbol\Phi_{n}\rvev^{1}_{n}  )- h(\rvev^{1}_{n})
\right) =0, 
\end{align} 
\end{subequations} 
for every sequence of matrices $\set{\boldsymbol{\Phi}_{n}}_{n=\nu+1}^{\infty}$, $\boldsymbol{\Phi}_n\in\Rl^{(n-\nu)\times n}$, with orthonormal rows. 
\finenunciado
\end{defn}
Equivalently, a random process $\set{\rvav(k)}$ is entropy balanced if every unitary transformation on $\rvav_{1}^{n}$ yields a random sequence $\rvay_{1}^{n}$ such that
$
 \lim_{n\to \infty}\frac{1}{n}|h(\rvay_{n-\nu+1}^{n}|\rvay_{1}^{n-\nu})| =0
$.
This property of the resulting random sequence $\rvay_{1}^{n}$ means that one cannot predict its last $\nu$ samples with arbitrary accuracy by using its previous $n-\nu$ samples, even if $n$ goes to infinity.

We now characterize a large family of entropy-balanced random processes and establish some of their properties.
Although intuition may suggest that most random processes (such as i.i.d. or stationary processes) should be entropy balanced, that statement seems rather difficult to prove.
In the following, we show that the entropy-balanced condition is met by i.i.d. processes with per-sample \textit{probability density function} (PDF) being  uniform, piece-wise constant or Gaussian.
It is also shown that adding to an entropy-balanced process an independent random processes independent of the former yields another entropy-balanced process, and that filtering an entropy-balanced process by a stable and minimum phase filter yields an entropy-balanced process as well.

\begin{prop}\label{prop:gaussian_is_entropy_balanced}
Let $\rvau_{1}^{\infty}$ be a Gaussian i.i.d. random process with positive and bounded per-sample variance.
 Then $\rvau_{1}^{\infty}$ is entropy balanced.\finenunciado
\end{prop}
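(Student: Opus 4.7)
The plan is to exploit the invariance of the isotropic Gaussian distribution under orthogonal transformations, which makes this proposition essentially a one-line calculation.

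First I would observe that since $\rvau_{1}^{\infty}$ is i.i.d.\ Gaussian with per-sample variance $\sigma^{2}>0$ (finite), the random vector $\rveu_{n}^{1}$ has covariance matrix $\sigma^{2}\bI_{n}$ (the mean plays no role in differential entropy, so one may assume it to be zero). For any matrix $\bPhi_{n}\in\Rl^{(n-\nu)\times n}$ with orthonormal rows we have $\bPhi_{n}\bPhi_{n}^{T}=\bI_{n-\nu}$, so $\bPhi_{n}\rveu_{n}^{1}$ is a zero-mean Gaussian vector with covariance $\sigma^{2}\bPhi_{n}\bPhi_{n}^{T}=\sigma^{2}\bI_{n-\nu}$.

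Next I would apply the closed-form expression for the differential entropy of a Gaussian vector. This gives
\begin{align*}
h(\rveu_{n}^{1}) &= \tfrac{n}{2}\log(2\pi\expo{}\sigma^{2}),\\
h(\bPhi_{n}\rveu_{n}^{1}) &= \tfrac{n-\nu}{2}\log(2\pi\expo{}\sigma^{2}),
\end{align*}
and hence
\begin{align*}
\frac{1}{n}\bigl(h(\bPhi_{n}\rveu_{n}^{1})-h(\rveu_{n}^{1})\bigr)
= -\frac{\nu}{2n}\log(2\pi\expo{}\sigma^{2}).
\end{align*}
Since $\nu$ is a fixed finite integer and $\log(2\pi\expo{}\sigma^{2})$ is a finite constant (as $\sigma^{2}$ is positive and bounded), this quantity tends to zero as $n\to\infty$, uniformly over the choice of the sequence $\{\bPhi_{n}\}$. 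This establishes the entropy-balanced condition of Definition~\ref{def:entropy_balanced}.

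There is no substantial obstacle to overcome here; the entire content of the proof rests on the fact that isotropic Gaussian measures are invariant under projection onto any orthonormal frame, so the entropy of $\bPhi_{n}\rveu_{n}^{1}$ differs from that of $\rveu_{n}^{1}$ only by the trivial term $-\tfrac{\nu}{2}\log(2\pi\expo{}\sigma^{2})$, which is $O(1)$ and therefore vanishes after division by $n$. The subsequent propositions (uniform/piece-wise constant i.i.d., closure under independent addition and under MP filtering) will require genuinely more work, but the Gaussian case is immediate from orthogonal invariance.
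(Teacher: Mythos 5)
Your proof is correct and follows essentially the same route as the paper's: compute the covariance $\sigma^{2}\bPhi_{n}\bPhi_{n}^{T}=\sigma^{2}\bI_{n-\nu}$, apply the closed-form Gaussian entropy, and note that the difference is $O(1)$ so its $\tfrac{1}{n}$-normalized limit vanishes. Nothing to add.
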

%
\begin{lem}\label{lem:piecewiseconstant}
 Let $\rvau_{1}^{\infty}$ be an i.i.d. process with finite differential entropy rate, in which each $\rvau_i$ is distributed according to a piece-wise constant PDF in which each interval where this PDF is constant has measure greater than $\epsilon$, for some bounded-away-from-zero constant $\epsilon$. 
 Then $\rvau_{1}^{\infty}$ is entropy balanced.\finenunciado
\end{lem}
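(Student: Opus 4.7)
The plan is to extend $\bPhi_n$ to an $n\times n$ orthogonal matrix $\bar{\bPhi}_n$ by adjoining $\nu$ orthonormal rows and to set $\rvey^1_n\eq \bar{\bPhi}_n\rveu^1_n=[\rvay_1\,\cdots\,\rvay_n]^{T}$. Orthogonality gives $h(\rvey^1_n)=h(\rveu^1_n)=n\,h(\rvau_1)$, while $\bPhi_n\rveu^1_n=\rvay_1^{n-\nu}$, so by the chain rule the lemma is equivalent to $\frac{1}{n}h(\rvay_{n-\nu+1}^n\mid\rvay_1^{n-\nu})\to 0$, which I address by a matching upper and lower bound on the conditional entropy.

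The upper bound $h(\rvay_{n-\nu+1}^n\mid\rvay_1^{n-\nu})\le O(1)$ is immediate: the orthogonal transformation preserves the isotropic covariance $\sigma^{2}I_n$ of $\rveu^1_n$ (with $\sigma^2=\mathrm{Var}(\rvau_1)<\infty$, which holds when $f$ has bounded support), so by \emph{conditioning reduces entropy} and the Gaussian maximum-entropy inequality $h(\rvay_{n-\nu+1}^n\mid\rvay_1^{n-\nu})\le h(\rvay_{n-\nu+1}^n)\le\tfrac{\nu}{2}\log(2\pi e\sigma^2)$. For the matching lower bound $h(\rvay_{n-\nu+1}^n\mid\rvay_1^{n-\nu})\ge -o(n)$---equivalently, $h(\rvay_1^{n-\nu})\le n\,h(\rvau_1)+o(n)$---I exploit the piecewise-constant structure of $f=\sum_j c_j\mathbb{1}_{I_j}$. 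Decomposing $\rvau_i=a_{\rva{\tilde u}_i}+\rvaq_i$, with $\rva{\tilde u}_i$ the discrete index of the interval containing $\rvau_i$ and $\rvaq_i\mid\rva{\tilde u}_i\sim U(0,|I_{\rva{\tilde u}_i}|)$, the conditional vector $\rveu^1_n\mid\rva{\tilde u}^1_n=\tilde u$ is uniform on the axis-aligned box $B_{\tilde u}=\prod_i I_{\tilde u_i}$ of volume $V_{\tilde u}=\prod_i|I_{\tilde u_i}|$, so the support of $\rvay_1^{n-\nu}\mid\rva{\tilde u}^1_n=\tilde u$ is the zonotope $Z_{\tilde u}=\bPhi_n B_{\tilde u}\subset\Rl^{n-\nu}$.

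By Shephard's volume formula $|Z_{\tilde u}|=\sum_{|S|=n-\nu}|\det\bPhi_n[:,S]|\prod_{i\in S}|I_{\tilde u_i}|$. The Cauchy--Binet identity applied to $\bPhi_n\bPhi_n^T=I_{n-\nu}$ gives $\sum_{|S|=n-\nu}(\det\bPhi_n[:,S])^2=1$, so the Cauchy--Schwarz inequality together with the reparametrization $T=S^c$ (with $|T|=\nu$) and the hypothesis $|I_{\tilde u_i}|\ge\epsilon$ yield
\[
|Z_{\tilde u}|^2\le\sum_{|S|=n-\nu}\prod_{i\in S}|I_{\tilde u_i}|^2=V_{\tilde u}^2\sum_{|T|=\nu}\prod_{i\in T}|I_{\tilde u_i}|^{-2}\le V_{\tilde u}^2\binom{n}{\nu}\epsilon^{-2\nu}.
\]
By the maximum-entropy bound on the support $Z_{\tilde u}$, $h(\rvay_1^{n-\nu}\mid\rva{\tilde u}^1_n=\tilde u)\le\log|Z_{\tilde u}|\le\log V_{\tilde u}+\tfrac{1}{2}\log\binom{n}{\nu}+\nu\log(1/\epsilon)=\log V_{\tilde u}+O(\nu\log n)$. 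Averaging over $\rva{\tilde u}^1_n$ (with $E[\log V_{\rva{\tilde u}}]=n\,E[\log|I_{\rva{\tilde u}_1}|]$ by i.i.d.) and using $h(\rvay_1^{n-\nu})\le h(\rvay_1^{n-\nu}\mid\rva{\tilde u}^1_n)+H(\rva{\tilde u}^1_n)$ with $H(\rva{\tilde u}^1_n)=n\,H(\rva{\tilde u}_1)$ produces $h(\rvay_1^{n-\nu})\le n\bigl(H(\rva{\tilde u}_1)+E[\log|I_{\rva{\tilde u}_1}|]\bigr)+O(\nu\log n)=n\,h(\rvau_1)+O(\nu\log n)$, completing the proof. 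The step I expect to be the main obstacle is precisely this zonotope-volume estimate: only the combination of Cauchy--Binet, Cauchy--Schwarz, and the essential use of $|I_{\tilde u_i}|\ge\epsilon$ to bound $\sum_{|T|=\nu}\prod_{i\in T}|I_{\tilde u_i}|^{-2}\le\binom{n}{\nu}\epsilon^{-2\nu}$ yields a sublinear $O(\nu\log n)$ correction; weaker estimates (Maclaurin, Hadamard) would give only $\Theta(n)$ and fail to establish the claim, which is precisely why the interval-length hypothesis $|I_j|\ge\epsilon$ cannot be dropped from the lemma.
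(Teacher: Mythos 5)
Your proof is correct and structurally parallels the paper's: you condition on the discrete bin index, which reduces the problem to independent uniforms on a box; you bound the $(n-\nu)$-volume of the image of that box under $\bPhi_n$ by the box volume up to a factor polynomial in $n$; and you use a Gaussian maximum-entropy bound for the other direction. The genuinely different ingredient is the projection-volume estimate. The paper, inside its auxiliary Lemma~\ref{lem:entropy_bounded_different_supports}, counts the $\binom{n}{\nu}2^\nu$ faces of dimension $n-\nu$ of the $n$-box and, after a WLOG rescaling so that every interval has length at least one, bounds the projection volume by the total face volume. You instead invoke the exact Shephard volume formula for the zonotope image of the box under $\bPhi_n$, and combine Cauchy--Binet ($\sum_{|S|=n-\nu}(\det\bPhi_n[:,S])^2=1$) with Cauchy--Schwarz to bound the zonotope volume by the box volume times $\sqrt{\binom{n}{\nu}}\,\epsilon^{-\nu}$. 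Both routes give an $O(\nu\log n)$ correction; yours is tighter, dispenses with the rescaling device, and replaces the (asserted but not proved) geometric fact that a box's shadow is covered by the shadows of its faces with a clean algebraic identity. One inaccuracy in your closing remark: Hadamard's inequality does \emph{not} fail here --- each row of $\bPhi_n[:,S]$ is a restriction of a unit vector, so $|\det\bPhi_n[:,S]|\le 1$ by Hadamard, which already bounds the zonotope volume by the box volume times $\binom{n}{\nu}\epsilon^{-\nu}$, again $O(\nu\log n)$; the paper's face-count bound is precisely of this Hadamard type. You are right that the hypothesis $|I_j|\ge\epsilon$ is essential. Finally, both your Gaussian step (which needs finite variance) and the paper's appeal to Lemma~\ref{lem:entropy_bounded_different_supports} (which assumes $a_{\max}<\infty$) implicitly require the intervals to be of bounded length as well; the lemma's statement does not guarantee this explicitly, but it is a shared implicit assumption rather than a defect specific to your argument.
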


\begin{lem}\label{lem:sum_yields_entropy_balanced}
 Let $\rvau_{1}^{\infty}$ and $\rvav_{1}^{\infty}$ be mutually independent random processes.
 If $\rvau_{1}^{\infty}$ is entropy balanced, then $\rvaw_{1}^{\infty}\eq \rvau_{1}^{\infty} + \rvav_{1}^{\infty}$ is also entropy balanced.\finenunciado
\end{lem}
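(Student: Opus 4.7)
The plan is to show that $\tfrac{1}{n}[h(\bPhi_n\rvew^1_n)-h(\rvew^1_n)]\to 0$ for every fixed $\nu\in\Nl$ and every $\bPhi_n\in\Rl^{(n-\nu)\times n}$ with orthonormal rows, where I write $\rvew^1_n\eq\rveu^1_n+\rvev^1_n$. First I would complete $\bPhi_n$ to an orthogonal matrix $\bQ_n\eq[\bPhi_n^T\;\tilde{\bPhi}_n^T]^T\in\Rl^{n\times n}$, with $\tilde{\bPhi}_n\in\Rl^{\nu\times n}$ having orthonormal rows spanning the orthogonal complement of the row space of $\bPhi_n$. Since $|\det\bQ_n|=1$, one has $h(\bQ_n\rvew^1_n)=h(\rvew^1_n)$, and the chain rule yields
\begin{align*}
h(\bPhi_n\rvew^1_n)-h(\rvew^1_n)=-h(\tilde{\bPhi}_n\rvew^1_n\mid\bPhi_n\rvew^1_n),
\end{align*}
reducing the lemma to proving that the conditional entropy on the right is $o(n)$.

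The lower bound follows from the mutual independence of $\rvau_1^\infty$ and $\rvav_1^\infty$: conditioning further on $\rvev^1_n$ converts $\rvew^1_n$ into a translate of $\rveu^1_n$, so by translation invariance of differential entropy and the independence $\rveu^1_n\perp\rvev^1_n$,
\begin{align*}
h(\tilde{\bPhi}_n\rvew^1_n\mid\bPhi_n\rvew^1_n)\geq h(\tilde{\bPhi}_n\rvew^1_n\mid\bPhi_n\rvew^1_n,\rvev^1_n)=h(\tilde{\bPhi}_n\rveu^1_n\mid\bPhi_n\rveu^1_n).
\end{align*}
Applying the same orthogonal-extension identity to $\rveu^1_n$ and invoking the entropy-balanced hypothesis on $\rvau_1^\infty$ for the matrix $\bPhi_n$ forces $\tfrac{1}{n}h(\tilde{\bPhi}_n\rveu^1_n\mid\bPhi_n\rveu^1_n)\to 0$, so $\liminf_{n\to\infty}\tfrac{1}{n}h(\tilde{\bPhi}_n\rvew^1_n\mid\bPhi_n\rvew^1_n)\geq 0$.

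For the matching upper bound I would exploit the fact that $\tilde{\bPhi}_n\rvew^1_n$ lives in the fixed dimension $\nu$. By ``conditioning reduces entropy'' together with the Gaussian maximum-entropy inequality,
\begin{align*}
h(\tilde{\bPhi}_n\rvew^1_n\mid\bPhi_n\rvew^1_n)\leq h(\tilde{\bPhi}_n\rvew^1_n)\leq \tfrac{\nu}{2}\log\!\big(2\pi e\,\|\tilde{\bPhi}_n\mathrm{Cov}(\rvew^1_n)\tilde{\bPhi}_n^T\|_{\mathrm{op}}\big)\leq \tfrac{\nu}{2}\log\!\big(2\pi e\,\|\mathrm{Cov}(\rvew^1_n)\|_{\mathrm{op}}\big),
\end{align*}
where the last inequality uses that $\tilde{\bPhi}_n$ has orthonormal rows. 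Provided $\|\mathrm{Cov}(\rvew^1_n)\|_{\mathrm{op}}$ grows at most polynomially in $n$, which is the case for the entropy-balanced processes characterized in Proposition~\ref{prop:gaussian_is_entropy_balanced} and Lemma~\ref{lem:piecewiseconstant} summed with any independent process of bounded per-sample variance, this upper bound is $O(\log n)=o(n)$, pinning the conditional entropy to $o(n)$ and proving the claim.

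The main obstacle is precisely this upper step. The definition of entropy balance does not by itself impose a moment constraint, so a fully general argument must either build in an implicit covariance hypothesis on $\rvaw_1^\infty$ or replace the Gaussian bound by a truncation-and-tightness argument that controls the $\nu$-dimensional tails of $\tilde{\bPhi}_n\rvew^1_n$ uniformly in $n$. The lower bound, by contrast, is purely information-theoretic and uses only the independence hypothesis together with translation invariance of differential entropy.
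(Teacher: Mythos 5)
Your proof follows the paper's own argument essentially verbatim: complete $\bPhi_n$ to an orthogonal matrix, use the chain rule to reduce the claim to showing $\tfrac{1}{n}h(\tilde{\bPhi}_n\rvew^1_n\mid\bPhi_n\rvew^1_n)\to 0$, lower-bound the conditional entropy by conditioning on $\rvev^1_n$ and invoking translation invariance together with the entropy-balanced hypothesis on $\rvau$, and upper-bound it by dropping the conditioning and applying the Gaussian maximum-entropy bound in dimension $\nu$. The moment caveat you flag on the upper-bound step is genuine and applies to the paper's own proof as well, which simply invokes ``the fact that $\rvaw$ has bounded second moment at each entry'' without this appearing among the lemma's hypotheses (and your operator-norm bound $\tfrac{\nu}{2}\log\bigl(2\pi e\,\|\mathrm{Cov}(\rvew^1_n)\|_{\mathrm{op}}\bigr)$ is actually the correct form of that step, whereas the paper's $\max_i\sigma^2_{\rvaw}(i)$ in place of the largest eigenvalue is a slight overstatement that still gives $o(n)$ after a crude trace bound).
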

The working behind this lemma can be interpreted intuitively by noting that adding to a random process another independent random process can only increase the ``spread'' of the distribution of the former, which tends to balance the entropy of the resulting process along all dimensions in Euclidean space.
In addition, it follows from Lemma~\ref{lem:sum_yields_entropy_balanced} that all i.i.d. processes having a per-sample PDF which can be constructed by convolving uniform, piece-wise constant or Gaussian PDFs as many times as required are entropy balanced.
It also implies that one can have non-stationary processes which are entropy balanced, since Lemma~\ref{lem:sum_yields_entropy_balanced} imposes no requirements for the process $\rvav_{1}^{\infty}$.

Our last lemma related to the properties of entropy-balanced processes shows that filtering by a stable and minimum phase LTI filter preserves the entropy balanced condition of its input.
\begin{lem}\label{lem:filtering_preserves_entropy_balance}
 Let $\rvau_{1}^{\infty}$ be an entropy-balanced process and $G$ an LTI stable and minimum-phase filter.
 Then the output $\rvaw_{1}^{\infty}\eq G\rvau_{1}^{\infty}$ is also an entropy-balanced process.\finenunciado
\end{lem}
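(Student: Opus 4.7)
The plan is to reduce the entropy-balanced condition for the output process $\rvaw_{1}^{\infty}=G\rvau_{1}^{\infty}$ to the one we already have for $\rvau_{1}^{\infty}$, exploiting two facts: (i) $\bG_n$ is lower-triangular Toeplitz with $g_{0}$ on the diagonal, so $|\det\bG_n|=|g_{0}|^n$ and $h(\rvew^{1}_{n})=h(\rveu^{1}_{n})+n\log|g_{0}|$; and (ii) the Gramian $\bG_n\bG_n^T$ has all its eigenvalues trapped in a compact sub-interval of $(0,\infty)$ uniformly in $n$, because $G$ being minimum-phase and stable forces both $G$ and $G^{-1}$ to be causal and bounded on the unit circle.

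Given any $\nu\in\Nl$ and any sequence $\{\bPhi_n\}$ of $(n-\nu)\times n$ orthonormal-row matrices, the matrix $\bPhi_n\bG_n$ has full row rank (since $\bG_n$ is invertible), so it admits a ``thin'' factorization $\bPhi_n\bG_n=\bA_n\bPsi_n$ with $\bPsi_n\in\Rl^{(n-\nu)\times n}$ an orthonormal basis of the row space of $\bPhi_n\bG_n$ and $\bA_n\in\Rl^{(n-\nu)\times(n-\nu)}$ invertible, satisfying $|\det\bA_n|^2=\det(\bPhi_n\bG_n\bG_n^T\bPhi_n^T)$. The change-of-variable formula for differential entropy applied to $\bPhi_n\rvew^{1}_{n}=\bA_n\bPsi_n\rveu^{1}_{n}$ and to $\rvew^{1}_{n}=\bG_n\rveu^{1}_{n}$ then yields
\begin{align*}
h(\bPhi_n\rvew^{1}_{n})-h(\rvew^{1}_{n})=\bigl[h(\bPsi_n\rveu^{1}_{n})-h(\rveu^{1}_{n})\bigr]+\bigl[\log|\det\bA_n|-n\log|g_{0}|\bigr].
\end{align*}
Since $\{\bPsi_n\}$ is itself an admissible sequence of orthonormal-row matrices, the entropy-balanced hypothesis on $\rvau_{1}^{\infty}$ makes the first bracket $o(n)$, and the bulk of the work reduces to showing that the second bracket is $O(1)$ uniformly in $\{\bPhi_n\}$.

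To handle the second bracket I would establish the uniform two-sided bound $m^{2}\leq\lambda_{i}(\bG_n\bG_n^{T})\leq M^{2}$, where $M\eq\sup_\omega|G\ejw|<\infty$ (because $\bG_n$ is a principal truncation of the bounded infinite Toeplitz operator of the stable $G$) and $1/m\eq\sup_\omega|G^{-1}\ejw|<\infty$ (since MP makes $G^{-1}$ causal-stable, and inversion commutes with principal truncation for lower-triangular Toeplitz operators, so $\bG_n^{-1}$ is the $n\times n$ Toeplitz matrix of the stable $G^{-1}$). Combined with $\prod_i\lambda_i(\bG_n\bG_n^T)=|g_0|^{2n}$, the Poincar\'e separation theorem applied to the compression of $\bG_n\bG_n^T$ by the orthonormal rows of $\bPhi_n$ yields $\lambda_i(\bG_n\bG_n^T)\leq\mu_i\leq\lambda_{i+\nu}(\bG_n\bG_n^T)$ for the eigenvalues $\mu_{i}$ of $\bPhi_n\bG_n\bG_n^T\bPhi_n^T$; summing $\log\mu_{i}$ and using $\sum_i\log\lambda_i=2n\log|g_{0}|$ gives
\begin{align*}
\bigl|\log\det(\bPhi_n\bG_n\bG_n^T\bPhi_n^T)-2n\log|g_{0}|\bigr|\leq \nu\log(M^{2}/m^{2})=O(1),
\end{align*}
so $\log|\det\bA_n|-n\log|g_{0}|=O(1)$ and therefore $\tfrac{1}{n}[h(\bPhi_n\rvew^{1}_{n})-h(\rvew^{1}_{n})]\to 0$, as required.

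I expect step (ii), the uniform-in-$n$ sandwiching of $\spec(\bG_n\bG_n^T)$ inside a compact subset of $(0,\infty)$, to be the main obstacle; it is precisely what the MP-plus-stable assumption buys us, and without either half of this assumption ($m=0$ or $M=\infty$) the interlacing bound would degrade to $O(n)$, which would be insufficient to cancel against the $n\log|g_{0}|$ term and to recover entropy balance of $\rvaw_{1}^{\infty}$.
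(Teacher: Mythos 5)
Your proof is correct. The underlying ingredients are the same as the paper's: both hinge on the uniform sandwiching of the singular values of $\bG_n$ away from $0$ and $\infty$ (which the paper isolates as its Lemma on singular-value boundedness, invoking $\bG_n^{-1}$ being the Toeplitz matrix of the stable $G^{-1}$ exactly as you do), and both use an interlacing/Courant--Fischer argument to pass from the spectrum of $\bG_n\bG_n^T$ to that of a compression. Where you differ algebraically is the choice of projection. The paper complements $\bPhi_n$ to a unitary matrix, works with the complementary $\nu\times n$ block $\bPsi_n$, takes the SVD $\bPsi_n\bG_n=\bA_n\bSigma_n\bB_n$, and then argues $\frac{1}{n}h(\bPsi_n\rvew^1_n)\to 0$ by bounding each entry of $\bSigma_n$ by the extreme singular values of $\bG_n$; this feeds the conditional-entropy reformulation of Definition~\ref{def:entropy_balanced}. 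You instead attack the definition head on, factoring $\bPhi_n\bG_n=\bA_n\bPsi_n$ into an invertible square factor times a row-orthonormal matrix and isolating the gap as $[h(\bPsi_n\rveu^1_n)-h(\rveu^1_n)]+[\log|\det\bA_n|-n\log|g_0|]$. Your route is somewhat cleaner: it does not require the equivalent conditional-entropy formulation (whose lower-bound half the paper glosses over), it keeps $g_0\neq 1$ transparent, and the Poincar\'e-separation estimate you use gives a sharp $O(1)$ deviation of $\log\det(\bPhi_n\bG_n\bG_n^T\bPhi_n^T)$ from $2n\log|g_0|$, rather than bounding each singular value individually. One small inaccuracy: the interlacing argument gives a deviation of at most $\nu\max\set{|\log M^2|,|\log m^2|}$, not exactly $\nu\log(M^2/m^2)$ (they coincide only when $m\le 1\le M$), but this is immaterial since either way the second bracket is $O(1)$ and the conclusion stands.
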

This result implies that any stable moving-average auto-regressive process constructed from entropy-balanced innovations is also entropy balanced, provided the coefficients of the averaging and regression correspond to a stable MP filter.

We finish this section by pointing out two examples of processes which are non-entropy-balanced, namely, 
the output of a NMP-filter to an entropy-balanced input and the output of an unstable filter to an entropy-balanced input. 
The first of these cases plays a central role in the next section.


\section{Entropy Gain due to External Disturbances}\label{sec:entropy_gain_output_disturb}
In this section we formalize the ideas which were qualitatively outlined in the previous section.
Specifically, 
for the system shown in Fig.~\ref{fig:general}
we will characterize the entropy gain $\Gsp(G,\rvex_{0},\rvau_{1}^{\infty},\rvaz_{1}^{\infty})$ defined in~\eqref{eq:Gsp_def} for the case in which the initial state $\rvex_{0}$ is zero (or deterministic) and there exists an  output random disturbance of (possibly infinite length) $\rvaz_{1}^{\infty}$ which satisfies Assumption~\ref{assu:z}.
The following lemmas will be instrumental for that purpose.

\begin{lem}\label{lem:singular_values_bounded}
 Let $A(z)$ be a causal, finite-order, stable and minimum-phase rational transfer function with impulse response $a_{0},a_{1},\ldots$ such that $a_{0}=1$.
 Then 
 $\lim_{n\to\infty}\lambda_{1}(\bA_{n}\bA^{T}_{n})>0$
 and 
 $\lim_{n\to\infty}\lambda_{n}(\bA_{n}\bA^{T}_{n})<\infty$.
 \finenunciado
\end{lem}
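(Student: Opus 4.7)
The plan is to recognize $\bA_{n}$ as the upper-left $n\times n$ truncation of the infinite lower-triangular Toeplitz operator acting on $\ell^{2}(\Nl)$ with symbol $A\ejw$, and to control its extreme singular values by the supremum and infimum of $\abs{A\ejw}$ over the unit circle. Because $A(z)$ is stable, $\abs{A\ejw}$ is continuous and bounded; because $A(z)$ is minimum phase, $A$ has no zeros on $\set{\abs{z}\geq 1}$, so $\inf_{\w}\abs{A\ejw}>0$. These two facts will furnish the uniform-in-$n$ bounds claimed by the lemma.

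For the upper bound on $\lambda_{n}(\bA_{n}\bA_{n}^{T})=\|\bA_{n}\|^{2}$, I would argue by Parseval. To each $x\in\Rl^{n}$ associate the polynomial $P(z)\eq \sum_{i=1}^{n}x_{i}z^{-(i-1)}$, whose $L^{2}$-norm on the unit circle equals $\|x\|$. The $k$-th entry of $\bA_{n}x$ coincides with the coefficient of $z^{-(k-1)}$ in the expansion of $A(z)P(z)$, so discarding the remaining higher-order coefficients yields
\begin{align*}
 \|\bA_{n}x\|^{2}\leq \tfrac{1}{2\pi}\intfromto{-\pi}{\pi}\abs{A\ejw}^{2}\abs{P\ejw}^{2}d\w \leq \Bigl(\sup_{\w}\abs{A\ejw}\Bigr)^{2}\|x\|^{2},
\end{align*}
so $\|\bA_{n}\|^{2}\leq \bigl(\sup_{\w}\abs{A\ejw}\bigr)^{2}<\infty$ uniformly in $n$, which is the first claim.

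For the lower bound I would exploit that the class of lower-triangular Toeplitz matrices with unit first diagonal entry is closed under inversion: since $a_{0}=1$, $\bA_{n}$ is invertible and $\bA_{n}^{-1}$ is itself lower-triangular Toeplitz, with first column given by the first $n$ samples of the impulse response of $B(z)\eq 1/A(z)$. The minimum-phase hypothesis ensures $B(z)$ is causal and stable and $\sup_{\w}\abs{B\ejw}=1/\inf_{\w}\abs{A\ejw}<\infty$, so applying the Parseval argument of the previous paragraph to $\bA_{n}^{-1}$ yields $\|\bA_{n}^{-1}\|^{2}\leq 1/\bigl(\inf_{\w}\abs{A\ejw}\bigr)^{2}$, whence
\begin{align*}
 \lambda_{1}(\bA_{n}\bA_{n}^{T})=\frac{1}{\|\bA_{n}^{-1}\|^{2}}\geq \Bigl(\inf_{\w}\abs{A\ejw}\Bigr)^{2}>0,
\end{align*}
uniformly in $n$, proving the second claim.

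The only delicate step is the Parseval bound itself: one must observe that $\bA_{n}x$ retains only the first $n$ coefficients of the expansion of $A(z)P(z)$, so truncation can only decrease the $\ell^{2}$ norm, leaving the $L^{\infty}$-norm of the symbol as a valid upper bound for the finite compression $\bA_{n}$. The remaining ingredients — stability of $1/A$ under the minimum-phase hypothesis, strict positivity of $\inf_{\w}\abs{A\ejw}$, and the Toeplitz form of $\bA_{n}^{-1}$ — are routine.
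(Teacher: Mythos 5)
Your proof is correct and follows essentially the same route as the paper's: both arguments rest on the observation that $\bA_{n}^{-1}$ is the lower-triangular Toeplitz matrix of the stable system $1/A(z)$ (stability guaranteed by the minimum-phase hypothesis), so that both $\|\bA_n\|$ and $\|\bA_n^{-1}\|$ are uniformly bounded. You supply the quantitative Parseval bounds ($\sup_\w\abs{A\ejw}$ and $\inf_\w\abs{A\ejw}$) explicitly where the paper simply cites stability and argues by contradiction, but the underlying idea is identical.
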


\begin{lem}\label{lem:gap_with_two_terms}
Consider the system in Fig.~\ref{fig:general}, and suppose $\rvaz_{1}^{\infty}$ satisfies Assumption~\ref{assu:z}, and that the input process $\rvau_{1}^{\infty}$ is entropy balanced.
 Let $\bG_{n}=\bQ_{n}^{T}\bD_{n}\bR_{n}$ be the SVD of $\bG_{n}$, where $\bD_{n}=\diag\set{d_{n,1},\ldots,d_{n,n}}$ are the singular values of $\bG_{n}$, with $d_{n,1}\leq d_{n,2}\leq\cdots \leq  d_{n,n}$, such that $|\det\bG_{n}| = 1$ $\forall n$. 
Let $m$ be the number of these singular values which tend to zero exponentially as $n\to\infty$.
%
Then
 \begin{align}\label{eq:gap_with_two_terms}
  \lim_{n\to\infty}\frac{1}{n}\left(h(\rvay_{1}^{n}) -h(\rvau_{1}^{n})\right) 
  =
  \lim_{n\to\infty}\frac{1}{n}\left( -\Sumfromto{i=1}{m}   \log d_{n,i} + h\left([\bD_{n}]^{1}_{m}\bR_{n}\rveu^{1}_{n} +[\bQ_{n}]^{1}_{m}\rvez^{1}_{n} \right) \right).
 \end{align}
 \finenunciado
\end{lem}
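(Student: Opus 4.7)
The plan is to work in the basis provided by the SVD of $\bG_{n}$. Since $\bQ_{n}$ is orthogonal, $h(\rvey^{1}_{n})=h(\bQ_{n}\rvey^{1}_{n})$, and the SVD gives $\bQ_{n}\rvey^{1}_{n}=\bD_{n}\bR_{n}\rveu^{1}_{n}+\bQ_{n}\rvez^{1}_{n}$. I would partition this rotated output into $\rvey_{s}\eq[\bD_{n}]^{1}_{m}\bR_{n}\rveu^{1}_{n}+[\bQ_{n}]^{1}_{m}\rvez^{1}_{n}$---the $m$ coordinates whose singular values vanish exponentially, i.e., precisely the random vector appearing inside $h(\cdot)$ on the right-hand side of~\eqref{eq:gap_with_two_terms}---and the complementary $(n-m)$-dimensional block $\rvey_{b}$. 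By the chain rule $h(\rvey^{1}_{n})=h(\rvey_{s})+h(\rvey_{b}\mid\rvey_{s})$, and since $|\det\bG_{n}|=1$ gives $\sum_{i=1}^{n}\log d_{n,i}=0$, the claim reduces to proving $\lim_{n\to\infty}\tfrac{1}{n}h(\rvey_{b}\mid\rvey_{s})=\lim_{n\to\infty}\tfrac{1}{n}\left(\sum_{i=m+1}^{n}\log d_{n,i}+h(\rveu^{1}_{n})\right)$.

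To evaluate $h(\rvey_{b}\mid\rvey_{s})$, I would introduce conditioning on $\rves^{1}_{\kappa}\eq\bPhi^{T}\rvez^{1}_{\infty}$ from Assumption~\ref{assu:z}, writing $h(\rvey_{b}\mid\rvey_{s})=h(\rvey_{b}\mid\rvey_{s},\rves^{1}_{\kappa})+I(\rvey_{b};\rves^{1}_{\kappa}\mid\rvey_{s})$. Since $\rves^{1}_{\kappa}$ determines the whole disturbance $\bQ_{n}\rvez^{1}_{n}$ and the diagonal block $[\bD_{n}]^{1}_{m}$ is invertible, $[\bR_{n}]^{1}_{m}\rveu^{1}_{n}$ becomes a deterministic function of $(\rvey_{s},\rves^{1}_{\kappa})$. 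Setting $\bar{\rveu}^{1}_{n}\eq\bR_{n}\rveu^{1}_{n}$ and using $\rveu^{1}_{n}\perp\rves^{1}_{\kappa}$ then gives $h(\rvey_{b}\mid\rvey_{s},\rves^{1}_{\kappa})=\sum_{i=m+1}^{n}\log d_{n,i}+h(\bar{\rveu}^{m+1}_{n}\mid\bar{\rveu}^{1}_{m})=\sum_{i=m+1}^{n}\log d_{n,i}+h(\rveu^{1}_{n})-h(\bar{\rveu}^{1}_{m})$. Applying the entropy-balanced hypothesis with $\nu=m$ to the orthonormal-row matrix $[\bR_{n}]^{m+1}_{n}$ yields $\tfrac{1}{n}\bigl|h(\bar{\rveu}^{m+1}_{n})-h(\rveu^{1}_{n})\bigr|\to 0$; combined with the uniform-in-$n$ per-sample second-moment control implicit in the entropy-balanced families identified in Proposition~\ref{prop:gaussian_is_entropy_balanced} and Lemmas~\ref{lem:piecewiseconstant}--\ref{lem:filtering_preserves_entropy_balance}, this also forces $\tfrac{1}{n}h(\bar{\rveu}^{1}_{m})\to 0$.

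The main obstacle is then to establish $\tfrac{1}{n}I(\rvey_{b};\rves^{1}_{\kappa}\mid\rvey_{s})\to 0$. The naive estimate $I(\cdot\,;\rves^{1}_{\kappa})\leq h(\rves^{1}_{\kappa})$ is \emph{false} in the continuous setting, since the posterior differential entropy $h(\rves^{1}_{\kappa}\mid\rvey^{1}_{n})$ can diverge to $-\infty$, so the mutual information can a priori grow linearly in $n$ (as indeed happens with $I(\rvey^{1}_{n};\rves^{1}_{\kappa})$, which equals $h(\rvey^{1}_{n})-h(\rveu^{1}_{n})$ and is the entropy gain itself). The intuition I would exploit is that the exponential decay of $d_{n,1},\ldots,d_{n,m}$ forces $\rvey_{s}$ to be dominated asymptotically by the projected disturbance $[\bQ_{n}]^{1}_{m}\rvez^{1}_{n}$, making $\rvey_{s}$ an almost-sufficient statistic for $\rves^{1}_{\kappa}$. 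I would quantify this by decomposing $I(\rvey_{b};\rves^{1}_{\kappa}\mid\rvey_{s})=I(\rvey^{1}_{n};\rves^{1}_{\kappa})-I(\rvey_{s};\rves^{1}_{\kappa})$ and computing both terms via the posterior covariance (in the Gaussian case) or via an entropy-power sandwich (in the general entropy-balanced case): the diverging contributions $h(\rves^{1}_{\kappa}\mid\rvey^{1}_{n})$ and $h(\rves^{1}_{\kappa}\mid\rvey_{s})$ then cancel, leaving an $O(1)$ remainder. Collecting all the estimates yields the identity claimed in the lemma.
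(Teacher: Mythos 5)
Your opening steps coincide with the paper's: rotate by $\bQ_n$, apply the chain rule to split $\bQ_n\rvey^1_n$ into the first $m$ coordinates ($\rvey_s$, exactly the vector on the right-hand side of~\eqref{eq:gap_with_two_terms}) and the rest, and use $\det\bG_n=1$ to reduce the claim to evaluating $h(\rvey_b\mid\rvey_s)$. Your further decomposition $h(\rvey_b\mid\rvey_s)=h(\rvey_b\mid\rvey_s,\rves^1_\kappa)+I(\rvey_b;\rves^1_\kappa\mid\rvey_s)$ and the computation of the first term via the invertibility of ${}^1[\bD_n]_m$ and the independence $\rveu^1_n\perp\rves^1_\kappa$ are correct, and the estimate $\tfrac{1}{n}h([\bR_n]^1_m\rveu^1_n)\to0$ (entropy-balance for the lower bound, bounded second moments for the upper) is also fine and matches an implicit step in the paper (which bounds $\lambda_{\max}(\bK_{\rvex^{m+1}_n})$ in its inequality~\eqref{eq:laultima}).

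The genuine gap is the final step, $\tfrac{1}{n}I(\rvey_b;\rves^1_\kappa\mid\rvey_s)\to 0$, which you correctly identify as the obstacle but do not actually prove. The cancellation route you sketch is circular as stated: writing $I(\rvey_b;\rves^1_\kappa\mid\rvey_s)=h(\rves^1_\kappa\mid\rvey_s)-h(\rves^1_\kappa\mid\rvey^1_n)$, the second posterior entropy equals $h(\rves^1_\kappa)-I(\rvey^1_n;\rves^1_\kappa)=h(\rves^1_\kappa)-\bigl(h(\rvey^1_n)-h(\rveu^1_n)\bigr)$, which is precisely the entropy gain the lemma is trying to pin down; claiming the divergences cancel presupposes the answer. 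Your ``almost-sufficient statistic'' intuition is also misleading in the regime where $[\bQ_n]^1_m[\bPhi]^1_n$ is poorly aligned (or zero), in which case $\rvey_s$ carries essentially no information about $\rves^1_\kappa$ yet the lemma must still hold (with both sides degenerating to zero). What is needed, and what the paper supplies, is an explicit, non-circular upper bound on $h(\rvey_b\mid\rvey_s)\leq h(\rvew^{m+1}_n)$: one factors out $\log\det(\Dmn)$, uses Assumption~\ref{assu:z} to note that the normalized disturbance $({}^{m+1}[\bD_n]_n)^{-1}\bar{\rvez}^{m+1}_n$ lives in a subspace of dimension $\kappa_n\leq\kappa$, rotates by a unitary $\bH_n=[\bA_n^T\mid\bB_n^T]^T$ that annihilates that subspace in the $\bB_n$-block, and then controls the remaining $\kappa_n$-dimensional entropy by a Gaussian maximum-entropy bound together with Fiedler's determinant inequality for the covariance of a sum, yielding a term that is $O(\kappa\log\lambda_{\min}(\Dmn)^{-1})=o(n)$. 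That concrete estimate, absent from your proposal, is what closes the argument.
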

(The proof of this Lemma can be found in the Appendix, page~\pageref{proof:lem_gap_with_two_terms}).

The previous lemma precisely formulates the geometric idea outlined in Section~\ref{sec:geometric_interpretation}.
To see this, notice that no entropy gain is obtained if the output disturbance vector $\rvez^{1}_{n}$ is orthogonal to the space spanned by the first $m$ columns of $\bQ_{n}$.
If this were the case, then the disturbance would not be able fill the subspace along which $\bG_{n}\rveu^{1}_{n}$ is shrinking exponentially.
Indeed, if $[\bQ_{n}]^{1}_{n}\rvez^{1}_{n}=0$ for all $n$, then
$
h([\bD_{n}]^{1}_{m}\bR_{n}\rveu^{1}_{n} +[\bQ_{n}]^{1}_{m}\rvez^{1}_{n} )
=
h({^{1}}\![\bD_{n}]_{m}[\bR_{n}]^{1}_{m}\rveu^{1}_{n})
=
\sum_{i=1}^{m} \log d_{n,i}+h([\bR_{n}]^{1}_{m}\rveu^{1}_{n})
$,
and the latter sum cancels out the one on the RHS of~\eqref{eq:gap_with_two_terms}, while 
$\lim_{n\to\infty}\frac{1}{n}h([\bR_{n}]^{1}_{n}\rveu^{1}_{n})=0$ since $\rvau_{1}^{\infty}$ is entropy balanced.
On the contrary (and loosely speaking), if the projection of the
support of $\rvez^{1}_{n}$ onto the subspace spanned by the first $m$ rows of $\bQ_{n}$ is of dimension $m$, then
$h([\bD_{n}]^{1}_{m}\bR_{n}\rveu^{1}_{n} +[\bQ_{n}]^{1}_{m}\rvez^{1}_{n} )$ remains bounded for all $n$, and the entropy limit of the sum 
$\lim_{n\to\infty}\frac{1}{n}( -\sumfromto{i=1}{m}   \log d_{n,i})$ on the RHS of~\eqref{eq:gap_with_two_terms} yields the largest possible entropy gain.
Notice that 
$
-\sumfromto{i=1}{m}   \log d_{n,i} 
=
\sumfromto{i=m+1}{n}   \log d_{n,i}
$ (because $\det(\bG_{n})=1$), 
and thus this entropy gain stems from the uncompensated expansion of $\bG_{n}\rveu^{1}_{n}$ along the space spanned by the rows of $[\bQ_{n}]^{m+1}_{n}$.

Lemma~\ref{lem:gap_with_two_terms} also yields the following corollary, which states that 
only a filter $G(z)$ with zeros outside the unit circle (i.e., an NMP transfer function) can introduce entropy gain.
\begin{coro}[Minimum Phase Filters do not Introduce Entropy Gain]\label{coro:MP_filters_no_EG}
Consider the system shown in Fig.~\ref{fig:general} and
let $\rvau_{1}^{\infty}$ be an entropy-balanced random process with bounded entropy rate.
Besides Assumption~\ref{assu:G_Factorized}, suppose that $G(z)$ is minimum phase.
 Then 
 \begin{align}
\lim_{n\to\infty}\frac{1}{n}\left(h(\rvay_{1}^{n}) -h(\rvau_{1}^{n})\right) =0.  
 \end{align}
 \finenunciado
\end{coro}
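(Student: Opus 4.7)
The plan is to combine Lemma~\ref{lem:singular_values_bounded} with Lemma~\ref{lem:gap_with_two_terms}: the former lets me argue that no singular value of $\bG_{n}$ tends to zero exponentially when $G(z)$ is minimum phase, and the latter then collapses the identity for the entropy gap to a trivial one.

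First I would check that Lemma~\ref{lem:gap_with_two_terms} applies. Assumption~\ref{assu:G_Factorized} together with causality makes $\bG_{n}$ lower triangular with $g_{0}=1$ on the diagonal, so $\det\bG_{n}=1$ and consequently $\prod_{i=1}^{n}d_{n,i}=1$ for every $n$; the input $\rvau_{1}^{\infty}$ is entropy balanced by hypothesis, and the disturbance $\rvaz_{1}^{\infty}$ is covered by the standing Assumption~\ref{assu:z} (the case of no disturbance is included by taking the trivial subspace). Hence the decomposition~\eqref{eq:gap_with_two_terms} is available.

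Next I would invoke Lemma~\ref{lem:singular_values_bounded} with $A(z)=G(z)$, which is legitimate because $G(z)$ is causal, finite-order, stable, minimum phase, and normalized so that $g_{0}=1$. The lemma gives $\lim_{n\to\infty}\lambda_{1}(\bG_{n}\bG_{n}^{T})>0$ and $\lim_{n\to\infty}\lambda_{n}(\bG_{n}\bG_{n}^{T})<\infty$, meaning that every singular value of $\bG_{n}$ stays bounded above and uniformly bounded away from zero as $n\to\infty$. No singular value therefore tends to zero, let alone exponentially, and so the integer $m$ appearing in Lemma~\ref{lem:gap_with_two_terms} equals $0$.

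Finally, setting $m=0$ in~\eqref{eq:gap_with_two_terms} makes the sum $-\sum_{i=1}^{m}\log d_{n,i}$ empty and leaves the differential entropy of a zero-dimensional random vector; both contributions vanish identically for every $n$, so the limit on the right-hand side is exactly $0$. I do not anticipate a real obstacle in this argument; the only point to be careful about is the interpretation of the $m=0$ case in Lemma~\ref{lem:gap_with_two_terms}, which is handled by reading empty sums and zero-dimensional differential entropies as $0$.
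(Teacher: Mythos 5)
Your argument is correct and matches the paper's own proof: both invoke Lemma~\ref{lem:singular_values_bounded} to conclude that for a stable, minimum-phase $G(z)$ no singular value of $\bG_{n}$ decays exponentially, so $m=0$, and then read off the vanishing limit from~\eqref{eq:gap_with_two_terms}. Your write-up is somewhat more explicit than the paper's (checking the applicability of Lemma~\ref{lem:gap_with_two_terms} and spelling out the interpretation of the $m=0$ case), but there is no substantive difference.
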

\begin{proof}
Since $G(z)$ is minimum phase and stable, it follows from Lemma~\ref{lem:singular_values_bounded} that the number of singular values of $\bG_{n}$ which go to zero exponentially, as $n\to\infty$, is zero.
Indeed, all the singular values vary polynomially with $n$.
Thus $m=0$ and Lemma~\ref{lem:gap_with_two_terms} yields directly that the entropy gain is zero (since the RHS of~\eqref{eq:gap_with_two_terms} is zero).
\end{proof}

\subsection{Input Disturbances Do Not Produce Entropy Gain}
In this section we show that random disturbances satisfying Assumption~\ref{assu:z},
when added to the \textit{input} $\rvau_{1}^{\infty}$ (i.e., before $G$), do not introduce entropy gain.
This result can be obtained from Lemma~\ref{lem:gap_with_two_terms}, as stated in the following theorem:
\begin{thm}[Input Disturbances do not Introduce Entropy Gain]
Let $G$ satisfy Assumption~\ref{assu:G_Factorized}. 
Suppose that $\rvau_{1}^{\infty}$ is entropy balanced and consider the output 
 \begin{align}
  \rvay_{1}^{\infty} = G\  ( \rvau_{1}^{\infty} +  \rvab_{1}^{\infty}).
 \end{align}
 where 
$
 \rveb^{1}_{\infty} = \bPsi \rvea^{1}_{\nu},
$
with $\rvea^{1}_{\nu}$ being a random vector satisfying $h(\rvea^{1}_{\nu})<\infty$, and where $\bPsi\in\Rl^{|\Nl|\times \nu}$ has orthonormal columns.
Then,
\begin{align}
 \lim_{n\to\infty}\frac{1}{n}\left(h(\rvay_{1}^{n}) -h(\rvau_{1}^{n})\right) =0
\end{align}
\end{thm}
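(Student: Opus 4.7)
The plan is to invoke Lemma~\ref{lem:gap_with_two_terms} after recasting the input disturbance as an output disturbance, and then exploit an algebraic cancellation specific to this structure. Since $G$ is causal, linear and time-invariant, one can write $\rvey^{1}_{n}=\bG_{n}\rveu^{1}_{n}+\rvez^{1}_{n}$ with $\rvez^{1}_{n}\eq\bG_{n}\rveb^{1}_{n}$, placing the problem in the framework of Fig.~\ref{fig:general} with zero initial state and output disturbance $\rvez^{1}_{\infty}=G\bPsi\rvea^{1}_{\nu}$. The first step is to verify Assumption~\ref{assu:z} for this new disturbance: Assumption~\ref{assu:G_Factorized} (in particular $g_{0}=1$) makes every $\bG_{n}$ invertible, so the $\nu$ columns of $G\bPsi$ are linearly independent, and a Gram--Schmidt factorization yields $G\bPsi=\bPhi\bR'$ with $\bPhi$ orthonormal-column and $\bR'\in\Rl^{\nu\times\nu}$ invertible. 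Then $\rvez^{1}_{\infty}=\bPhi(\bR'\rvea^{1}_{\nu})$ satisfies Assumption~\ref{assu:z}, since $|h(\bR'\rvea^{1}_{\nu})|<\infty$.

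With the hypotheses of Lemma~\ref{lem:gap_with_two_terms} in place and $\bG_{n}=\bQ_{n}^{T}\bD_{n}\bR_{n}$ denoting the SVD, the crucial algebraic identity is
\begin{equation*}
[\bQ_{n}]^{1}_{m}\rvez^{1}_{n}=[\bQ_{n}]^{1}_{m}\bG_{n}\rveb^{1}_{n}=[\bQ_{n}]^{1}_{m}\bQ_{n}^{T}\bD_{n}\bR_{n}\rveb^{1}_{n}=[\bD_{n}]^{1}_{m}\bR_{n}\rveb^{1}_{n},
\end{equation*}
so that the two arguments of $h(\cdot)$ appearing in~\eqref{eq:gap_with_two_terms} combine into ${^{1}}[\bD_{n}]_{m}[\bR_{n}]^{1}_{m}(\rveu^{1}_{n}+\rveb^{1}_{n})$. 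Factoring the invertible diagonal block ${^{1}}[\bD_{n}]_{m}$ out of the differential entropy contributes a term $+\sum_{i=1}^{m}\log d_{n,i}$ that exactly cancels the first summand in the lemma's formula, leaving
\begin{equation*}
\lim_{n\to\infty}\tfrac{1}{n}\bigl(h(\rvay_{1}^{n})-h(\rvau_{1}^{n})\bigr)=\lim_{n\to\infty}\tfrac{1}{n}h\bigl([\bR_{n}]^{1}_{m}(\rveu^{1}_{n}+\rveb^{1}_{n})\bigr).
\end{equation*}

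The right-hand side is the normalized differential entropy of an $m$-dimensional random vector, where $m$ is the fixed, finite number of non-minimum-phase zeros of $G$, and $[\bR_{n}]^{1}_{m}$ has orthonormal rows. A Gaussian upper bound $h\le\tfrac{m}{2}\log(2\pi e\,\lambda_{\max}(\mathrm{Cov}))$ then controls this quantity by $O(1)$, provided the per-sample covariances of $\rveu^{1}_{\infty}$ and of $\rvea^{1}_{\nu}$ are uniformly bounded, which is the case for every explicit entropy-balanced family constructed in Section~\ref{subsec:entropy_balanced}. Dividing by $n$ then yields the claimed zero limit. The main obstacle is precisely this last transfer: Definition~\ref{def:entropy_balanced} directly controls only $(n-\nu)$-dimensional projections in the entropy-rate sense, so bounding an $m$-dimensional orthonormal marginal requires either a supplementary variance bound or a more refined argument that exploits the structure of the entropy-balanced families under consideration.
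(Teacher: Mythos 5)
Your proposal follows essentially the same route as the paper's proof: recast the input disturbance as an output disturbance $\rvaz=G\rvab$, feed it into Lemma~\ref{lem:gap_with_two_terms}, observe the identity $[\bQ_n]^1_m\rvez^1_n=[\bD_n]^1_m\bR_n\rveb^1_n$, pull the diagonal block out to cancel $-\sum_{i=1}^m\log d_{n,i}$, and argue the residual term $\tfrac{1}{n}h([\bR_n]^1_m(\rveu^1_n+\rveb^1_n))\to 0$. Two small remarks. First, your Gram--Schmidt verification that $G\bPsi$ can be written as $\bPhi\bR'$ with $\bR'$ invertible (so Assumption~\ref{assu:z} holds for the induced output disturbance) is a step the paper simply takes for granted; it is worth spelling out. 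Second, the concern you raise at the end is legitimate in the sense that Definition~\ref{def:entropy_balanced} as stated literally bounds only the \emph{deficit} $h(\rvev^1_n)-h(\bPhi_n\rvev^1_n)$ for $(n-\nu)$-dimensional projections, and does not by itself give a uniform upper bound on $h(\bB_n\rvev^1_n)$ for an $m\times n$ orthonormal $\bB_n$. However, this is the same implicit convention the paper uses repeatedly (e.g., the Gaussian-covariance bound in the proof of Lemma~\ref{lem:sum_yields_entropy_balanced} and the appeal to ``bounded entropy rate'' in Corollary~\ref{coro:MP_filters_no_EG}): entropy-balanced processes are tacitly assumed to have uniformly bounded second moments, which is exactly what your Gaussian upper bound needs. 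So the gap you flag is real but is a gap shared with the paper, not a defect specific to your argument; the cleaner way to close it, following the paper's own style, is to invoke Lemma~\ref{lem:sum_yields_entropy_balanced} to get entropy-balancedness of $\rveu^1_n+\rveb^1_n$ and then apply exactly the $\tfrac{m}{2}\log(2\pi e\,\lambda_{\max})$ bound you propose, under the standing bounded-variance assumption.
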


\begin{proof}
 In this case, the effect of the input disturbance in the output is the forced response of $G$ to it.
 This response can be regarded as an output disturbance $\rvaz_{1}^{\infty} = G \rvab_{1}^{\infty}$. 
 Thus, the argument of the differential entropy on the RHS of~\eqref{eq:gap_with_two_terms} is 
 \begin{align}
  [\bD_{n}]^{1}_{m}\bR_{n}\rveu^{1}_{n} +[\bQ_{n}]^{1}_{m}\rvez^{1}_{n}
  &
  =
  [\bD_{n}]^{1}_{m}\bR_{n}\rveu^{1}_{n} +[\bQ_{n}]^{1}_{m} \bQ_{n}^{T}\bD_{n}\bR_{n}\rveb^{1}_{n}
\\&
=
  [\bD_{n}]^{1}_{m}\bR_{n}\rveu^{1}_{n} +[\bD_{n}]^{1}_{m}\bR_{n}\rveb^{1}_{n}
\\&
=
  \block[\bD_{n}]{1}{m} \rows[\bR_{n}]{1}{m}\left( \rveu^{1}_{n} + \rveb^{1}_{n}\right).
  \end{align}
Therefore,
\begin{align}
  h([\bD_{n}]^{1}_{m}\bR_{n}\rveu^{1}_{n} +[\bQ_{n}]^{1}_{m}\rvez^{1}_{n})
  &
  =
  h(\block[\bD_{n}]{1}{m} \rows[\bR_{n}]{1}{m}\left( \rveu^{1}_{n} + \rveb^{1}_{n}\right))
  \\&
  =
  \sumfromto{i=1}{m}\log d_{n,i} + h(\rows[\bR_{n}]{1}{m}\left( \rveu^{1}_{n} + [\bPsi]^{1}_{n}\rvea^{1}_{\nu}\right) ).
\end{align}
The proof is completed by substituting this result into the RHS of~\eqref{eq:gap_with_two_terms} and noticing that $$\lim_{n\to\infty}\frac{1}{n}h\left([\bR_{n}]^{1}_{m}(\rveu^{1}_{n} +[\bPsi]^{1}_{n}\rvea^{1}_{\nu})\right)=0.$$
\end{proof}

\begin{rem}
 An alternative proof for this result can be given based upon the properties of an entropy-balanced sequence, as follows.
 Since $\det(\bG_{n})=1,\ \forall n$, we have that 
 $
 h(\bG_{n}(\rveu^{1}_{n}+\rveb^{1}_{n}))
= h(\rveu^{1}_{n}+\rveb^{1}_{n})$.
Let $\bTheta_{n}\in\Rl^{\nu\times n}$ and  $\overline{\bTheta}_{n}\in\Rl^{(n-\nu)\times n}$ be a matrices with orthonormal rows which satisfy  
$\overline{\bTheta}_{n}[\bPsi]^{1}_{n}=\bzero$ and such that
 $[\bTheta_{n}^{T} \,|\, \overline{\bTheta}_{n}^{T}]^{T}$ is a unitary matrix.
 Then
\begin{align}
 h([\bTheta_{n}^{T} \,|\, \overline{\bTheta}_{n}^{T}]^{T}\left(\rveu^{1}_{n}+\rveb^{1}_{n}\right))
 =
 h(\bTheta_{n} \rveu^{1}_{n}+ \bTheta_{n}[\bPsi]^{1}_{n}\rvea^{1}_{\nu} \,|\, \overline{\bTheta}_{n}\rveu^{1}_{n})
 +
 h(\overline{\bTheta}_{n}\rveu^{1}_{n}),
\end{align}
where we have applied the chain rule of differential entropy.
But
\begin{align}
 h(\bTheta_{n} \rveu^{1}_{n}+ \bTheta_{n}[\bPsi]^{1}_{n}\rvea^{1}_{\nu} | \overline{\bTheta}_{n}\rveu^{1}_{n})
 \leq  
 h(\bTheta_{n} \rveu^{1}_{n}+ \bTheta_{n}[\bPsi]^{1}_{n}\rvea^{1}_{\nu} )
\end{align}
which is upper bounded for all $n$ because $h(\rvea^{1}_{n})<\infty$ and $h(\bTheta_{n} \rveu^{1}_{n})<\infty$, the latter due to $\rvau_{1}^{\infty}$ being entropy balanced.
On the other hand, since $\rveb^{1}_{n}$ is independent of $\rveu^{1}_{n}$, it follows that $h(\rveu^{1}_{n}+\rveb^{1}_{n})\geq h(\rveu^{1}_{n})$, for all $n$.
Thus 
$
\lim_{n\to\infty}\frac{1}{n}(h(\rvey^{1}_{n}) -h(\rveu^{1}_{n})) 
=
\lim_{n\to\infty}\frac{1}{n}(h(\overline{\bTheta}_{n}\rveu^{1}_{n}) -h(\rveu^{1}_{n}))
=0$,
where the last equality stems from the fact that $\rvau_{1}^{\infty}$ is entropy balanced.
\finenunciado
\end{rem}

\subsection{The Entropy Gain Introduced by Output Disturbances when $G(z)$ has NMP Zeros}
We show here that the entropy gain of a transfer function with zeros outside the unit circle is at most the sum of the logarithm of the magnitude of these zeros.
To be more precise, the following assumption is required. 

\begin{assu}\label{assu:zeros_of_G}
The filter $G$ satisfies Assumption~\ref{assu:G_Factorized} and its transfer function $G(z)$
has $p$ poles and $p$ zeros,  $m$ of which are NMP-zeros.
Let $M$ be the number of distinct NMP zeros, given by
$\{\rho_i\}_{i=1}^M$, i.e., such that $|\rho_1|>|\rho_2|>\dots>|\rho_{M}|>1$, with 
$\ell_i$ being the multiplicity of the $i$-th distinct zero.
We denote by $\iota(i)$, where $\iota:\set{1,\ldots,m}\to\set{1,\ldots,M}$, the distinct zero of $G(z)$ associated with the $i$-th non-distinct zero of $G(z)$, i.e.,
\begin{align}\label{eq:iota_def}
 \iota(k) &\eq 
 \min\set{\iota  :\sumfromto{i=1}{\iota}\ell_i\geq k}.
\end{align}
\finenunciado 
\end{assu}

%

As can be anticipated from the previous results in this section, we will need to characterize the asymptotic behaviour of the singular values of $\bG_{n}$.
This is accomplished in the following lemma, which relates these singular values to the zeros of $G(z)$.
This result is a generalization of the unnumbered lemma in the proof of~{\cite[Theorem~1]{hasari80}} (restated in the appendix as Lemma~\ref{lem:hashimoto}), which holds for FIR transfer functions, to the case of \emph{infinite-impulse response} (IIR) transfer functions (i.e., transfer functions having poles).
\begin{lem}\label{lem:hashimoto_IIR}
For a transfer function $G$ satisfying Assumption~\ref{assu:zeros_of_G}, it holds that
\begin{align}
 \lambda_{l}(\bG_n\bG_n^T) 
 = 
 \begin{cases}  
 \alpha_{n,l}^{2}(\rho_{ \iota(l)})^{-2n}	&, \text{if } l\leq m,\\
 \alpha_{n,l}^{2}				&, \text{otherwise },
 \end{cases}
\end{align}
where the elements in the sequence $\set{\alpha_{n,l} }$ are positive and increase or decrease at most polynomially with $n$. \finenunciado
\end{lem}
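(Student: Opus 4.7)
The plan is to reduce the IIR setting to the FIR case of Lemma~\ref{lem:hashimoto} via a multiplicative factorisation. Write $G(z) = B(z)/A(z)$ with coprime polynomials $A, B$ in $z^{-1}$; by Assumptions~\ref{assu:G_Factorized}--\ref{assu:zeros_of_G} we may take $a_0 = b_0 = 1$, $A$ minimum phase (from stability of $G$), and $B$ of degree $p$. Factor further $B(z) = B_\text{MP}(z)\,B_\text{NMP}(z)$, where $B_\text{NMP}(z)$ collects the $m$ NMP zeros $\set{\rho_i}$ with multiplicities $\set{\ell_i}$, and both factors are normalised to have unit leading coefficient. Using the isomorphism between lower-triangular Toeplitz matrices under multiplication and $z^{-1}$-polynomial multiplication (together with invertibility of $\bA_n$, guaranteed by $a_0=1$), one obtains the exact identity $\bG_n = \bA_n^{-1}\,\bB_{\text{MP},n}\,\bB_{\text{NMP},n}$ for every $n\in\Nl$.

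Denote $\bF_n \eq \bA_n^{-1}\,\bB_{\text{MP},n}$. Since both $A(z)$ and $B_\text{MP}(z)$ are causal, stable, minimum-phase rational transfer functions with unit leading impulse-response sample, Lemma~\ref{lem:singular_values_bounded} yields constants $0 < c_- \leq c_+ < \infty$, independent of $n$, such that the singular values of $\bA_n$ and of $\bB_{\text{MP},n}$ all lie in $[c_-,c_+]$ for $n$ large enough. Hence those of $\bF_n$ lie in $[c_-/c_+,\,c_+/c_-]$; in particular $\bF_n$ has a uniform-in-$n$ bounded condition number. On the other hand, $B_\text{NMP}(z)$ is an FIR polynomial whose zeros are exactly $\set{\rho_i}$ with the prescribed multiplicities, so Lemma~\ref{lem:hashimoto} applies directly to $\bB_{\text{NMP},n}$, providing positive sequences $\set{\beta_{n,l}}$ varying at most polynomially in $n$ such that $\lambda_l(\bB_{\text{NMP},n}\bB_{\text{NMP},n}^T) = \beta_{n,l}^{2}\,\rho_{\iota(l)}^{-2n}$ for $l \leq m$ and $\lambda_l(\bB_{\text{NMP},n}\bB_{\text{NMP},n}^T) = \beta_{n,l}^{2}$ for $l > m$.

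To conclude, I would invoke Horn's singular-value inequalities for matrix products, $\sigma_n(\bX)\,\sigma_l(\bY) \leq \sigma_l(\bX\bY) \leq \sigma_1(\bX)\,\sigma_l(\bY)$, with $\bX = \bF_n$ and $\bY = \bB_{\text{NMP},n}$. This yields $\sigma_l(\bG_n) = \gamma_{n,l}\,\sigma_l(\bB_{\text{NMP},n})$ with $\gamma_{n,l} \in [c_-/c_+,\,c_+/c_-]$, i.e.\ a $\Theta(1)$ multiplicative factor. Setting $\alpha_{n,l} \eq \gamma_{n,l}\,\beta_{n,l}$ produces the stated expressions for $\lambda_l(\bG_n\bG_n^T)=\sigma_l(\bG_n)^2$, with $\alpha_{n,l}$ varying at most polynomially in $n$. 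The main technical obstacle is the uniform-in-$n$ control of the singular values of $\bF_n$: one must verify that Lemma~\ref{lem:singular_values_bounded} genuinely provides bounds that do not degrade with $n$, so that the bounded factor $\gamma_{n,l}$ cannot blur the exponential-versus-polynomial dichotomy in $l$. The minimum-phase hypothesis on $A(z)$ is essential here, as it is precisely what precludes spurious exponentially small or exponentially large singular values of $\bA_n^{-1}$ that could otherwise contaminate the asymptotics inherited from $\bB_{\text{NMP},n}$.
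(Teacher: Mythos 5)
Your proof is correct and follows essentially the same path as the paper's: both factor $G$ into a minimum-phase biproper part times the FIR polynomial carrying the NMP zeros, bound the minimum-phase part's extreme singular values uniformly in $n$ via Lemma~\ref{lem:singular_values_bounded}, and then invoke Lemma~\ref{lem:hashimoto} on the FIR factor. The only cosmetic differences are that you cite Horn's singular-value product inequalities where the paper carries out the equivalent Courant--Fischer min-max computation by hand, and that you split the minimum-phase factor into $\bA_n^{-1}\bB_{\mathrm{MP},n}$ rather than treating it as a single transfer function $\tilde{G}$.
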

(The proof of this lemma can be found in the appendix, page~\pageref{proof:lem_hashimoto_IIR}).

We can now state the first main result of this section.
\begin{thm}\label{thm:eg_n_instate_w_disturb_ineq}
In the system of Fig.~\ref{fig:general}, suppose that $\rvau_{1}^{\infty}$ is entropy balanced and that
$G(z)$ and $\rvaz_{1}^{\infty}$ satisfy assumptions~\ref{assu:zeros_of_G}  and~\ref{assu:z}, respectively.
Then
\begin{align}~\label{eq:eg_n_instate_w_disturb_ineq}
0\leq \lim_{n\to\infty}\frac{1}{n}\left(h(\rvay_{1}^{n}) -h(\rvau_{1}^{n})\right)
 \leq 
 \Sumover{i=1}^{\bar{\kappa}}\log |\rho_{\iota(i)}|,
\end{align}
where
$\bar{\kappa}\eq \min\set{\kappa,m}$ and
$\kappa$ is as defined in Assumption~\ref{assu:z}.
Both  bounds are tight.
The upper bound is achieved if 
$\lim_{n\to\infty}\det(
\rows[\bQ_{n}]{1}{\bar{\kappa}} \rows[\bPhi]{1}{n}(\rows[\bQ_{n}]{1}{\bar{\kappa}} \rows[\bPhi]{1}{n})^{T})>0$,
where the unitary matrices $\bQ_{n}^{T}\in\Rl^{n\times n}$ hold the left singular vectors of $\bG_{n}$.
\finenunciado
\end{thm}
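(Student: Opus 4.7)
The plan is to combine Lemma~\ref{lem:gap_with_two_terms} with the singular-value asymptotics of Lemma~\ref{lem:hashimoto_IIR}, and then sandwich the residual entropy term by exploiting the rank structure imposed on the disturbance by Assumption~\ref{assu:z}. Throughout, write $\boldsymbol{\xi}_n \eq \rows[\bD_n]{1}{m}\bR_n\rveu^1_n + \rows{1}{m}\rvez^1_n$ for the argument of the differential entropy in~\eqref{eq:gap_with_two_terms}.

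For the lower bound, independence of $\rvez^1_n$ from $\rveu^1_n$ together with $|\det\bG_n|=1$ (which holds by $g_0=1$) gives $h(\rvey^1_n)\geq h(\rvey^1_n\mid\rvez^1_n)=h(\bG_n\rveu^1_n)=h(\rveu^1_n)$, so the entropy gain is nonnegative. Tightness follows by choosing $\bPhi$ so that $\rows{1}{m}\rows[\bPhi]{1}{n}\to\bzero$; then $\rows{1}{m}\rvez^1_n\to\bzero$ and $\boldsymbol{\xi}_n$ reduces to $\rows[\bD_n]{1}{m}\bR_n\rveu^1_n$, whose entropy equals $\sum_{i=1}^m\log d_{n,i}+h(\rows[\bR_n]{1}{m}\rveu^1_n)$. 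The second summand is $O(1)$ (a fixed-dimensional projection with bounded second moments), hence $o(n)$, so the two sums cancel in Lemma~\ref{lem:gap_with_two_terms} and the gap vanishes.

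For the upper bound, Lemma~\ref{lem:hashimoto_IIR} gives $d_{n,i}=\alpha_{n,i}|\rho_{\iota(i)}|^{-n}$ with $\alpha_{n,i}$ varying at most polynomially, so $-\tfrac{1}{n}\sum_{i=1}^m\log d_{n,i}\to\sum_{i=1}^m\log|\rho_{\iota(i)}|$. It therefore suffices to prove
\begin{align*}
\limsup_{n\to\infty}\tfrac{1}{n}h(\boldsymbol{\xi}_n)\leq-\Sumover{i=\bar{\kappa}+1}^{m}\log|\rho_{\iota(i)}|,
\end{align*}
the right-hand side being zero when $\kappa\geq m$. Set $\bM_n\eq\rows{1}{m}\rows[\bPhi]{1}{n}\in\Rl^{m\times\kappa}$, whose range has dimension at most $\bar\kappa$. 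Choose orthonormal bases $\bV_n\in\Rl^{m\times\bar\kappa}$ for $\mathrm{range}(\bM_n)$ and $\bW_n\in\Rl^{m\times(m-\bar\kappa)}$ for its orthogonal complement. The chain rule yields $h(\boldsymbol{\xi}_n)\leq h(\bW_n^T\boldsymbol{\xi}_n)+h(\bV_n^T\boldsymbol{\xi}_n)$, with the second term $O(1)$ (a fixed-dimensional projection with bounded second moments), hence $o(n)$ after normalisation. Since $\bW_n^T\bM_n=\bzero$, the disturbance drops out and $\bW_n^T\boldsymbol{\xi}_n=\bW_n^T\rows[\bD_n]{1}{m}\bR_n\rveu^1_n$. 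The Gaussian maximum-entropy bound combined with a Cauchy-Binet expansion yields
\begin{align*}
\det\!\bigl(\bW_n^T\diag\{d_{n,1}^2,\ldots,d_{n,m}^2\}\bW_n\bigr)=\sum_{|S|=m-\bar{\kappa}}\Bigl(\prod_{i\in S}d_{n,i}^2\Bigr)|\det[\bW_n]_S|^2\leq\prod_{i=\bar{\kappa}+1}^{m}d_{n,i}^2,
\end{align*}
where the last step uses $\sum_{|S|=m-\bar{\kappa}}|\det[\bW_n]_S|^2=\det(\bW_n^T\bW_n)=1$ together with the monotonicity $d_{n,1}\leq\cdots\leq d_{n,m}$. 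Taking $\tfrac{1}{2n}\log$ and substituting the asymptotics of Lemma~\ref{lem:hashimoto_IIR} delivers the required bound.

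Achievability of the upper bound under the determinant condition is the most delicate part, and the principal technical obstacle. When $\lim_n\det(\rows{1}{\bar{\kappa}}\rows[\bPhi]{1}{n}(\rows{1}{\bar{\kappa}}\rows[\bPhi]{1}{n})^T)>0$, the map $\bV_n^T\bM_n$ remains asymptotically full rank, so $\bV_n^T\boldsymbol{\xi}_n$ has a non-degenerate $\bar\kappa$-dimensional density (using $|h(\rves^1_\kappa)|<\infty$ from Assumption~\ref{assu:z}), producing an $O(1)$ lower bound on $h(\bV_n^T\boldsymbol{\xi}_n)$. A matching lower bound on $h(\bW_n^T\boldsymbol{\xi}_n)$ would follow from a conditional entropy-power argument combined with the entropy-balanced property of $\rvau^1_\infty$, which guarantees that $\bR_n\rveu^1_n$ spreads non-trivially along every direction of $\mathrm{range}(\bW_n)$, so that the singular-value product $\prod_{i=1}^{m-\bar\kappa}d_{n,i}$ transfers directly into $h(\bW_n^T\boldsymbol{\xi}_n)$, recovering the upper bound with equality. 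Cauchy-Binet furnishes the upper direction essentially for free; the converse requires careful control of how an entropy-balanced input distributes across the shrinking $m$-dimensional subspace, which is precisely where the entropy-balanced hypothesis becomes essential rather than merely convenient.
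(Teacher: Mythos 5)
Your overall strategy mirrors the paper's: both route through Lemma~\ref{lem:gap_with_two_terms}, decompose the $m$-dimensional residual $\boldsymbol{\xi}_n$ along the range of $\bM_n\eq\rows{1}{m}\rows[\bPhi]{1}{n}$ and its orthogonal complement, and invoke Lemma~\ref{lem:hashimoto_IIR} for the asymptotics of the singular values. Three remarks.

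First, your lower bound is cleaner than what the paper implicitly does: $h(\rvay_1^n)\geq h(\rvay_1^n\mid\rvez_1^n)=h(\bG_n\rveu_1^n)=h(\rveu_1^n)$ requires neither Lemma~\ref{lem:gap_with_two_terms} nor the entropy-balanced hypothesis. The paper only remarks that the lower bound is attained when $\rows{1}{m}\rows[\bPhi]{1}{n}\to\bzero$; your argument shows nonnegativity unconditionally.

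Second, your upper bound replaces the paper's appeal to the Cauchy interlacing theorem with an explicit Cauchy--Binet expansion of $\det(\bW_n^T\diag\{d_{n,1}^2,\ldots,d_{n,m}^2\}\bW_n)$, and that computation is correct. But be aware you are bounding $h(\bW_n^T\boldsymbol{\xi}_n)$ via the Gaussian maximum-entropy inequality, which needs $\det\bigl(\mathrm{Cov}(\bW_n^T\boldsymbol{\xi}_n)\bigr)\leq\lambda_{\max}(\bK_{\rveu_1^n})^{m-\bar\kappa}\det(\bW_n^T([\bD_n]^1_m)^2\bW_n)$, i.e.\ uniformly bounded second moments of $\rveu_1^n$. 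The paper instead takes the SVD $\bW_n^T[\bD_n]^1_m[\bR_n]^1_m=\bV^T\bSigma\bT$, writes $h(\bW_n^T\boldsymbol{\xi}_n)=\log|\det\bSigma|+h(\bT\rveu_1^n)$, and disposes of the last term directly by entropy balance; this avoids the second-moment assumption. Either works here because the second-moment hypothesis is implicit throughout (it is already used in the proof of Lemma~\ref{lem:gap_with_two_terms}), but you should at least state the step.

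Third — and this is the genuine gap — you flag the achievability of the upper bound as the delicate part but then only gesture at a ``conditional entropy-power argument.'' That claim, as written, is not a proof. In fact your Cauchy--Binet framing puts a clean proof within reach: taking $[\bV_n\,|\,\bW_n]\in\Rl^{m\times m}$ orthogonal with $\bV_n$ spanning $\mathrm{range}(\bM_n)$, the complementary-minor identity for orthogonal matrices gives $|\det(\rows[\bW_n]{\bar\kappa+1}{m})|=|\det(\rows[\bV_n]{1}{\bar\kappa})|$, and the hypothesis $\lim_n\det(\rows{1}{\bar\kappa}\rows[\bPhi]{1}{n}(\rows{1}{\bar\kappa}\rows[\bPhi]{1}{n})^T)>0$ forces $\det(\rows[\bV_n]{1}{\bar\kappa})$ (hence $\det(\rows[\bW_n]{\bar\kappa+1}{m})$) to stay bounded away from zero. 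Keeping only the $S=\{\bar\kappa+1,\ldots,m\}$ term in your Cauchy--Binet sum then yields $\det(\bW_n^T\bD_n^2\bW_n)\geq c\prod_{i=\bar\kappa+1}^{m}d_{n,i}^2$ for some $c>0$, so $\tfrac1n h(\bW_n^T\boldsymbol{\xi}_n)$ has the claimed limit, and the remaining conditional term $h(\bV_n^T\boldsymbol{\xi}_n\mid\bW_n^T\boldsymbol{\xi}_n)\geq h(\bV_n^T\bM_n\rves^1_\kappa)$ is bounded below because $\bV_n^T\bM_n$ is full row rank. That is exactly the piece your sketch is missing. (For what it is worth, the paper's own choice $\overline{\bA}_n=[\bzero\,|\,\bI_{m-\bar\kappa}]$ does not in general satisfy the required orthogonality $\overline{\bA}_n\bM_n=\bzero$, so the paper is itself terse here; your framing, once completed as above, is actually the tighter argument.)

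Minor typo: near the end you write $\prod_{i=1}^{m-\bar\kappa}d_{n,i}$ where you mean $\prod_{i=\bar\kappa+1}^{m}d_{n,i}$.
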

\begin{proof}
See Appendix, page~\pageref{proof:thm_eg_n_instate_w_disturb_ineq}. 
\end{proof}

The second main theorem of this section is the following:
\begin{thm}\label{thm:eg_n_instate_w_disturb}
In the system of Fig.~\ref{fig:general}, suppose that $\rvau_{1}^{\infty}$ is entropy balanced and that
$G(z)$ satisfies Assumption~\ref{assu:zeros_of_G}.
Let $\rvaz_{1}^{\infty}$ be a random output disturbance, such that $\rvaz(i)=0,\, \forall i > m$, and that $|h(\rvaz_{1}^{m})|<\infty$.
Then
\begin{align}
 \lim_{n\to\infty}
 \frac{1}{n}
 \left(h(\rvay_{1}^{n}) - h(\rvau_{1}^{n}) \right) = \Sumfromto{i=1}{m}\log |\rho_{\iota(i)}|.
\end{align}
\finenunciado
\end{thm}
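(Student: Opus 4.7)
The plan is to derive the result as a consequence of the already-established Lemma~\ref{lem:gap_with_two_terms}, making the two terms on its right-hand side concrete under the specific disturbance structure in the hypothesis.  Indeed, since $\rvaz(i)=0$ for $i>m$ and $|h(\rvaz_{1}^{m})|<\infty$, Assumption~\ref{assu:z} is met with $\kappa=m$ and $\bPhi=[\bI_{m}\,;\,\bzero]^{T}$, so that $[\bPhi]^{1}_{n}=\left[\begin{smallmatrix}\bI_{m}\\ \bzero_{(n-m)\times m}\end{smallmatrix}\right]$ and $[\bQ_{n}]^{1}_{m}\rvez^{1}_{n}=\bM_{n}\rvaz_{1}^{m}$, where $\bM_{n}$ denotes the upper-left $m\times m$ block of $\bQ_{n}$.

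First, I would evaluate the deterministic part of~\eqref{eq:gap_with_two_terms}.  By Lemma~\ref{lem:hashimoto_IIR}, $d_{n,i}=\alpha_{n,i}\,|\rho_{\iota(i)}|^{-n}$ for $i\le m$ with $\alpha_{n,i}$ varying at most polynomially in $n$, so
\begin{align*}
 \lim_{n\to\infty}\frac{1}{n}\left(-\sum_{i=1}^{m}\log d_{n,i}\right)=\sum_{i=1}^{m}\log|\rho_{\iota(i)}|,
\end{align*}
which matches the claimed value.  It then remains to show that the second term in~\eqref{eq:gap_with_two_terms} contributes nothing in the limit, i.e.
\begin{align*}
 \lim_{n\to\infty}\frac{1}{n}\,h\!\left([\bD_{n}]^{1}_{m}\bR_{n}\rveu^{1}_{n}+\bM_{n}\rvaz_{1}^{m}\right)=0.
\end{align*}
Since the argument is a random vector of fixed dimension $m$, the upper bound is easy: bound $h$ by the maximum-entropy Gaussian with the same covariance; because $[\bD_{n}]^{1}_{m}$ has singular values going to zero exponentially, $\bM_{n}$ is a submatrix of a unitary matrix (hence bounded), and the $m$-dimensional covariance of the sum grows at most polynomially in $n$, division by $n$ sends the upper bound to zero.

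For the matching lower bound I would use independence of $\rveu^{1}_{n}$ and $\rvaz_{1}^{m}$ to write
\begin{align*}
 h\!\left([\bD_{n}]^{1}_{m}\bR_{n}\rveu^{1}_{n}+\bM_{n}\rvaz_{1}^{m}\right)\ \ge\ h\!\left(\bM_{n}\rvaz_{1}^{m}\right)\ =\ \log|\det\bM_{n}|+h(\rvaz_{1}^{m}),
\end{align*}
and since $h(\rvaz_{1}^{m})$ is finite by hypothesis, everything reduces to showing that $\log|\det\bM_{n}|=o(n)$.  This is precisely the equality condition $\lim_{n\to\infty}\det\!\bigl([\bQ_{n}]^{1}_{m}[\bPhi]^{1}_{n}([\bQ_{n}]^{1}_{m}[\bPhi]^{1}_{n})^{T}\bigr)>0$ in Theorem~\ref{thm:eg_n_instate_w_disturb_ineq}, and it is the main obstacle of the proof.

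The hard part is therefore the structural statement that the first $m$ left singular vectors of $\bG_{n}$ are asymptotically concentrated on the first $m$ coordinates, equivalently that $|\det\bM_{n}|$ stays bounded away from zero.  The intuition is transparent: for $G$ minimum phase apart from $m$ NMP zeros, $\bG_{n}^{-1}$ is lower triangular with the first $m$ columns growing exponentially in $n$ (roughly as $|\rho_{\iota(i)}|^{n}$), so the vectors most amplified by $\bG_{n}^{-1}$ must have most of their energy on the first $m$ entries; these are exactly the left singular vectors of $\bG_{n}$ associated with $d_{n,1},\ldots,d_{n,m}$.  Making this rigorous would require an argument parallel to (and a refinement of) the one used for Lemma~\ref{lem:hashimoto_IIR}: factor $G(z)=B(z)M(z)$ with $B$ carrying the NMP zeros and $M$ minimum phase, use $\bG_{n}=\bB_{n}\bM_{n}^{G}+\bE_{n}$ with a negligible boundary term $\bE_{n}$, exploit $\bM_{n}^{G}$'s bounded condition number from Lemma~\ref{lem:singular_values_bounded}, and then explicitly describe the small left singular vectors of the FIR Toeplitz matrix $\bB_{n}$ through a Vandermonde/Jordan block structure built from $\{\rho_{i}^{-k}\}$.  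Combining this with the earlier bounds closes the argument and yields equality in~\eqref{eq:eg_n_instate_w_disturb_ineq}.
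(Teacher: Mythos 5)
Your reduction is correct and tracks the paper's strategy: invoke Lemma~\ref{lem:gap_with_two_terms}, evaluate $-\frac{1}{n}\sum_{i=1}^m\log d_{n,i}$ via Lemma~\ref{lem:hashimoto_IIR}, bound the residual differential entropy $h([\bD_n]^1_m\bR_n\rveu^1_n + [\bQ_n]^1_m\rvez^1_n)$ above by a Gaussian with at-most-polynomial covariance growth, and below via $h(\cdot)\geq \log|\det\bM_n| + h(\rvaz_1^m)$ using independence. The problem therefore reduces, exactly as you say, to showing $\log|\det\bM_n| = o(n)$ — and you correctly identify that proving the stronger condition $\liminf_n|\det\bM_n|>0$ suffices and is what Theorem~\ref{thm:eg_n_instate_w_disturb_ineq} requires.

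But that identification is where your argument stops being a proof. You write that closing this step ``would require an argument parallel to (and a refinement of) the one used for Lemma~\ref{lem:hashimoto_IIR}'' built on an explicit Vandermonde/Jordan description of the small left singular vectors of $\bB_n$ — this is a sketch, not an argument, and it is genuinely the hard part of the theorem. The paper closes the gap quite differently: it first factors $G=\tilde G F$ with $F$ the biproper FIR part holding all NMP zeros (so the relevant $\bQ_n$ is the left unitary factor of $\bF_n$, not of $\bG_n$), and then proves $\lim_n|\det({}^1[\bQ_n]_m)|>0$ \emph{by contradiction}: assuming it vanishes yields unit vectors $\bt_n$ in the row-span of $[\bQ_n]^1_m$ with energy asymptotically concentrated on coordinates $m+1$ through $n$, so on one hand $\|\bF_n^T\bt_n\|$ decays exponentially (Lemma~\ref{lem:hashimoto}), while on the other hand $\|\bF_n^T\bt_n\| \geq \sigma_{\min}\!\bigl(([\bF_n]^{m+1}_n)^T\bigr)\|\bbeta_n\| - O(\|\balpha_n\|)$, and the smallest singular value of the Toeplitz block $[\bF_n]^{m+1}_n$ stays bounded away from zero by~\cite[Lemma~4.1]{gray--06} because $\min_\omega|F(e^{j\omega})|>0$. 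No explicit description of the singular vectors is needed. Your proposed constructive route is not obviously salvageable: $\bF_n$ is not symmetric, so its left singular vectors are not eigenvectors and do not have a simple $\{\rho_i^{-k}\}$ form, and the "small" singular vectors are determined by a coupled two-sided condition that resists closed-form description.

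One minor technical error in your sketch: for lower-triangular Toeplitz truncations of causal cascades, the factorization is exact, $\bG_n = \bB_n\bM_n^G$, with no boundary term $\bE_n$ at all. This actually simplifies what you were proposing, but it does not repair the missing proof.

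In short, your proposal correctly and cleanly reduces the theorem to the determinant non-degeneracy of ${}^1[\bQ_n]_m$, but leaves precisely that — the crux of the result — unproven, with only a heuristic plan whose feasibility is doubtful. You should either carry out the paper's contradiction argument (reduce to the FIR factor and invoke Gray's Toeplitz eigenvalue theorem), or substantiate the Vandermonde claim, which would require considerably more work than stated.
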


\begin{proof} 
See Appendix, page~\pageref{proof:thm:eg_n_instate_w_disturb}.
\end{proof}

\section{Entropy Gain due to a Random Initial Sate}\label{sec:entropy_gain_initial_stat}

Here we analyze the case in which there exists a random initial state $\rvex_{0}$ independent of the input $\rvau_{1}^{\infty}$, 
and zero (or deterministic) output disturbance.

The effect of a random initial state appears in the output as the natural response of $G$ to it, namely the sequence 
$\bar{\rvay}_{1}^{n}$.
%
Thus, $\rvay_{1}^{n}$ can be written in vector form as
\begin{align}
 \rvey^{1}_{n} = \bG_{n}\rveu^{1}_{n} + \bar{\rvey}^{1}_{n}. 
\end{align}
This reveals that the effect of a random initial state can be treated as a random output disturbance, which allows us to apply the results from the previous sections.

Recall from Assumption~\ref{assu:zeros_of_G} that $G(z)$ is a stable and biproper rational transfer function with $m$ NMP zeros.
As such, it can be factored as 
 \begin{align}\label{eq:G_Factorized_as_tilde_G_F}
  G(z) = P(z)N(z),
  \end{align}
where $P(z)$ is a biproper filter containing only all the poles of $G(z)$, and $N(z)$ is a FIR biproper filter, containing all the zeros of $G(z)$.

We have already established (recall Theorem~\ref{coro:MP_filters_no_EG}) that the entropy gain introduced by the minimum phase system $P(z)$ is zero.
It then follows that the entropy gain can be introduced only by the NMP-zeros of $N(z)$ and an appropriate output disturbance $\bar{\rvay}_1^\infty$.
Notice that, in this case, the input process $\rvaw_1^\infty$ to $N$ (i.e., the output sequence of $P$ due to a random input $\rvau_1^\infty$) is independent of $\bar{\rvay}_1^\infty$ (since we have placed the natural response $\bar{\rvay}_{1}^{\infty}$ after the filters $P$ and $N$, hose initial state is now zero).
This condition allows us to directly use Lemma~\ref{lem:gap_with_two_terms} in order to analyze the entropy gain that $\rvau_1^\infty$ experiences after being filtered by $G$, which coincides with $\bar{h}(\rvay_1^\infty)-\bar{h}(\rvaw_1^\infty)$.
This is achieved by the next theorem.

\begin{thm}\label{thm:eg-due-to-random-xo}
 Consider a stable $p$-th order biproper filter $G(z)$ having $m$ NMP-zeros, and with a random initial state $\rvex_0$, such that $|h(\rvex_0)|<\infty$. 
 Then, the entropy gain due to the existence of a random initial state is 
 \begin{align}
  \lim_{n\to\infty} \frac{1}{n}( h(\rvay_1^n) - h(\rvau_1^n) ) = \sumfromto{i=1}{m} \log\abs{\rho_{\iota(i)}}.
 \end{align}
\end{thm}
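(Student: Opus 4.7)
The plan is to reduce the problem to the setting of Theorem~\ref{thm:eg_n_instate_w_disturb_ineq} via the factorization $G(z)=P(z)N(z)$ discussed immediately before the theorem statement, where $P(z)$ contains all $p$ poles of $G$ (hence is stable, biproper and minimum phase) and $N(z)$ is the FIR biproper factor carrying all $p$ zeros of $G$, of which $m$ are NMP. Introducing the intermediate signal $\rvaw_1^\infty \eq P\,\rvau_1^\infty$, the output of $G$ decomposes as
\[
\rvey^1_n \;=\; \mathbf{N}_n\,\mathbf{P}_n\,\rveu^1_n + \bar{\rvey}^1_n,
\]
where $\mathbf{N}_n$ and $\mathbf{P}_n$ are the Toeplitz matrices associated with $N(z)$ and $P(z)$, and $\bar{\rvay}_1^\infty$ is the zero-input response of $G$ to $\rvex_0$, by construction independent of $\rvau_1^\infty$ (and hence of $\rvaw_1^\infty$). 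Under the entropy-balance hypothesis on $\rvau_1^\infty$ implicit throughout Section~\ref{sec:entropy_gain_output_disturb}, Lemma~\ref{lem:filtering_preserves_entropy_balance} implies that $\rvaw_1^\infty$ is itself entropy balanced, while Corollary~\ref{coro:MP_filters_no_EG} yields $\lim_n \tfrac{1}{n}\bigl(h(\rvaw_1^n)-h(\rvau_1^n)\bigr)=0$. It therefore suffices to compute the entropy gain of the FIR filter $N$ acting on the entropy-balanced input $\rvaw_1^\infty$ in the presence of the additive output disturbance $\bar{\rvay}_1^\infty$.

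Second, I would verify the hypotheses of Theorem~\ref{thm:eg_n_instate_w_disturb_ineq} in this reduced setting. Passing to a minimal realization of $G$ (which alters neither the impulse response nor the entropy gain), the observability Gramian is non-singular, so the map from $\rvex_0\in\Rl^p$ to the natural response is linear with a left inverse of bounded log-determinant; consequently, one may write $\bar{\rvey}^1_\infty = \bPhi\,\rves^1_p$ with $\bPhi\in\Rl^{|\Nl|\times p}$ having orthonormal columns and $|h(\rves^1_p)|<\infty$. Thus $\bar{\rvay}_1^\infty$ meets Assumption~\ref{assu:z} with $\kappa=p$. Since $N$ inherits the $m$ NMP zeros of $G$ with their multiplicities (and satisfies Assumption~\ref{assu:G_Factorized} after the obvious normalization), Theorem~\ref{thm:eg_n_instate_w_disturb_ineq} applied to $N$ directly yields the upper bound
\[
\lim_{n\to\infty}\tfrac{1}{n}\bigl(h(\rvay_1^n)-h(\rvaw_1^n)\bigr) \;\leq\; \sum_{i=1}^{\min(p,m)}\log|\rho_{\iota(i)}| \;=\; \sum_{i=1}^{m}\log|\rho_{\iota(i)}|,
\]
where the last equality uses the trivial fact that $m\leq p$.

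The main obstacle is to establish the matching lower bound. By Theorem~\ref{thm:eg_n_instate_w_disturb_ineq}, this reduces to verifying the determinant condition
\[
\liminf_{n\to\infty}\;\det\!\Bigl(\rows[\bQ_{n}]{1}{m}\rows[\bPhi]{1}{n}\bigl(\rows[\bQ_{n}]{1}{m}\rows[\bPhi]{1}{n}\bigr)^{T}\Bigr) \;>\; 0,
\]
where now the rows of $\rows[\bQ_{n}]{1}{m}$ are the $m$ left singular vectors of $\mathbf{N}_n$ associated with its exponentially small singular values. Geometrically, one must show that the $p$-dimensional subspace spanned by the columns of $\bPhi$ (the support of the natural response) projects uniformly non-trivially onto the $m$-dimensional contractive subspace of $\mathbf{N}_n$. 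Lemma~\ref{lem:hashimoto_IIR} supplies the key structural information: up to polynomial corrections, the $k$-th entry of the $i$-th ``bad'' left singular vector behaves like $\rho_{\iota(i)}^{-k}$, so that this contractive subspace is spanned by the geometric modes associated with the NMP zeros of $N$. On the other hand, a minimal state-space realization of $G$ expresses the columns of $\bPhi$ as the orthonormalization of a finite sum of exponentials in the poles of $G$, and observability guarantees that every such mode is effectively represented at the output. Completing the proof requires an asymptotic calculation of the resulting inner products and showing that the associated $m\times m$ Gram matrix has a determinant bounded away from zero as $n\to\infty$. This non-degeneracy is the technically most delicate step, and I expect it to be the main obstacle of the argument.
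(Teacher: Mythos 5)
Your reduction is structurally identical to the paper's: factor $G(z)=P(z)N(z)$, pass to $\rvaw_1^\infty = P\,\rvau_1^\infty$, observe $h(\rvaw_1^n)=h(\rvau_1^n)$, and treat $\bar{\rvay}_1^\infty$ as an output disturbance to the FIR filter $N$, so the problem becomes one about Lemma~\ref{lem:gap_with_two_terms} applied to $\bN_n$. Your observations that $\rvaw_1^\infty$ is entropy balanced (Lemma~\ref{lem:filtering_preserves_entropy_balance}) and that $\bar{\rvay}_1^\infty\Perp\rvaw_1^\infty$ are also both correct and used in the paper.

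The gap is exactly where you flag it: you try to feed the \emph{entire} natural response $\bar{\rvey}^1_\infty = \bPhi\,\rves^1_p$ into Theorem~\ref{thm:eg_n_instate_w_disturb_ineq} and are then stuck having to show $\lim_n\det\bigl(\rows[\bQ_{n}]{1}{m}\rows[\bPhi]{1}{n}(\rows[\bQ_{n}]{1}{m}\rows[\bPhi]{1}{n})^{T}\bigr)>0$, where the columns of $\bPhi$ are the orthonormalized modes of a sum of exponentials spread over all of $\Rl^{|\Nl|}$. Proving this would require asymptotic control of the left singular \emph{vectors} of $\bN_n$, and Lemma~\ref{lem:hashimoto_IIR} gives you information about the singular \emph{values} only, so your ``behaves like $\rho_{\iota(i)}^{-k}$'' heuristic is not something you can cite. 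The paper sidesteps this entirely by a different algebraic move: since $N$ is FIR of order $p$, its own natural response $\bC_n\rvex_0$ has support only in the first $p$ samples, while the contribution $\bN_n\tilde{\bC}_n\rvex_0$ from $P$'s natural response enters in the form $\bN_n\cdot(\cdot)$ and can be \emph{absorbed into the input}, yielding
\begin{align}
[\bD_{n}]^{1}_{m}\bR_{n}\rveu^{1}_{n} + [\bQ_{n}]^{1}_{m}\bar{\rvey}^{1}_{n}
=[\bD_{n}]^{1}_{m}\bR_{n}\,\rvev^{1}_{n}+[\bQ_{n}]^{1}_{m}\bC_{n}\rvex_{0},
\end{align}
with $\rvev^{1}_{n}\eq\rveu^{1}_{n}+\tilde{\bC}_{n}\rvex_{0}$ still entropy balanced (Lemma~\ref{lem:sum_yields_entropy_balanced}). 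The remaining effective disturbance sits on $\bPhi=[\bI_p^T\,|\,\bzero^T]^T$, and the determinant condition collapses to $\lim_n\det\bigl([\bQ_n^{(p)}]^1_m([\bQ_n^{(p)}]^1_m)^T\bigr)>0$, which is then bounded below by $\lambda_{\min}({}^1[\bQ_n]_m({}^1[\bQ_n]_m)^T)$ --- a quantity already shown to stay bounded away from zero in the proof of Theorem~\ref{thm:eg_n_instate_w_disturb}. The price of this move is that $\rvev^{1}_{n}$ is \emph{not} independent of $\rvex_0$, so Theorem~\ref{thm:eg_n_instate_w_disturb_ineq} cannot be applied verbatim either; the paper must argue directly that $h([\bD_{n}]^{1}_{m}\bR_{n}\rvev^{1}_{n}+[\bQ_{n}]^{1}_{m}\bC_{n}\rvex_{0})$ is bounded. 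In short: your upper-bound argument is fine, but the lower bound (the heart of the theorem) is missing, and the route you propose for it is genuinely harder than the decomposition the paper uses.
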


\begin{proof}\label{proof:thm_eg-due-to-random-xo}
Being a biproper and stable rational transfer function, $G(z)$ can be factorized as
\begin{align}
 G(z) = P(z)N(z),
\end{align}
where $P(z)$ is a stable biproper transfer function containing only all the poles of $G(z)$ and with all its zeros at the origin,
while $N(z)$ is stable and biproper FIR filter, having all the zeros of $G(z)$.
Let $\tilde{\bC}_n\rvex_0$ and $\bC_{n}\rvex_0$ be the natural responses of the systems $P$ and $N$ to their common random initial state $\rvex_{0}$, respectively, where $\tilde{\bC}_n, \bC_n \in \Rl^{n\times p}$.
Then we can write 
\begin{align}
 \rvey^{1}_{n} 
 = 
 \bG_{n}\rveu^{1}_{n} + \bar{\rvey}^{1}_{n}
 =
  \bN_{n}\underbrace{\bP_{n}\rveu^{1}_{n}}_{\eq \rvew^{1}_{n}} + \bar{\rvey}^{1}_{n}
  =
  \bN_{n}\rvew^{1}_{n} + \bar{\rvey}^{1}_{n}.
\end{align}
Since $P(z)$ is stable and MP, it follows from Corollary~\ref{coro:MP_filters_no_EG} that 
$h(\rvew^{1}_{n})=h(\rveu^{1}_{n})$ for all $n$, and therefore 
\begin{align}
 h(\rvey^{1}_{n}) - h(\rveu^{1}_{n}) 
 = 
 h(\rvey^{1}_{n}) - h(\rvew^{1}_{n}).
\end{align}
Therefore, we only need to consider the entropy gain introduced by the (possibly) non-minimum filter $N$ due to a random output disturbance 
$\rvez^{1}_{n} 
=
\bar{\rvey}^{1}_{n}
= 
\bN_{n}\tilde{\bC}_{n}\rvex_{0}
+
\bC_{n}\rvex_{0}
$, 
which is independent of the input $\rvew^{1}_{n}$.
Thus, the conditions of Lemma~\ref{lem:gap_with_two_terms} are met considering $\bG_{n}=\bN_{n}$,
where now
$\bN_{n} = \bQ_{n}^{T}\bD_{n}\bR_{n}$ is the SVD for $\bN_{n}$, and $d_{n,1}\leq d_{n,2}\leq \cdots \leq d_{n,n}$.
Consequently, it suffices to consider the differential entropy on the RHS of~\eqref{eq:gap_with_two_terms}, whose argument is
\begin{align}
 [\bD_{n}]^{1}_{m}\bR_{n}\rveu^{1}_{n} +[\bQ_{n}]^{1}_{m}\bar{\rvey}^{1}_{n}  
 &
 =
 [\bD_{n}]^{1}_{m}\bR_{n}\rveu^{1}_{n} +[\bQ_{n}]^{1}_{m} 
 \left( \bN_{n}\tilde{\bC}_{n}\rvex_{0} + \bC_{n}\rvex_{0}\right)
 \\&
  =
[\bD_{n}]^{1}_{m}\bR_{n} \left(\rveu^{1}_{n} + \tilde{\bC}_{n}\rvex_{0}\right)
+[\bQ_{n}]^{1}_{m} \bC_{n}\rvex_{0}
 \\&
  =
[\bD_{n}]^{1}_{m}\bR_{n} \rvev^{1}_{n} 
+[\bQ_{n}]^{1}_{m} \bC_{n}\rvex_{0},\label{eq:the_term_of_v_and_x0}
\end{align}
where $\rvev^{1}_{n}\eq \rveu^{1}_{n} + \tilde{\bC}_{n}\rvex_{0}$ has bounded entropy rate and is entropy balanced (since $\tilde{\bC}_{n}\rvex_{0}$ is the natural response of a stable LTI system and because of Lemma~\ref{lem:sum_yields_entropy_balanced}).
We remark that, in~\eqref{eq:the_term_of_v_and_x0}, $\rvev^{1}_{n}$ is not independent of $\rvex_{0}$, which precludes one from using the proof of Theorem~\ref{thm:eg_n_instate_w_disturb_ineq} directly.

On the other hand, since $N(z)$ is FIR of order (at most) $p$, we have that
$\bC_{n}= [ \bE_{p}^T \,|\, \bzero^T\,]^T$, 
where $\bE_p\in\Rl^{p\times p}$ is a non-singular upper-triangular matrix independent of $n$.
Hence, $\bC_{n}\rvex_{0}$ can be written as $[\bPhi]^{1}_{n}\rves^{1}_{p}$, where 
$[\Phi]^{1}_{n} = [\bI_p^T \,|\,\bzero^T]^T$ 
and $\rves^1_p \eq \bE_p\rvex_0$.
According to~\eqref{eq:the_term_of_v_and_x0}, the entropy gain in~\eqref{eq:eg_n_instate_w_disturb_ineq} arises as long as $h([\bQ_n]^1_m\bC_n\rvex_0)$ is lower bounded by a finite constant (or if it decreases sub-linearly as $n$ grows).
Then, we need $[\bQ_n]^1_m[\bPhi]^1_n$ to be a full row-ranked matrix in the limit as $n\to\infty$.
However,
\begin{align}
 \det \left([\bQ_{n}]^{1}_{m}[\bPhi_{n}]^{1}_{n} ([\bQ_{n}]^{1}_{m}[\bPhi_{n}]^{1}_{n})^{T}\right)
 &=
 \det \left([\bQ_{n}^{(p)}]^{1}_{m}([\bQ_{n}^{(p)}]^{1}_{m})^{T}\right),
\end{align}
where $[\bQ_n^{(p)}]^1_m$ denotes the first $p$ columns of the first $m$ rows in $\bQ_n$.
We will now show that these determinants do not go to zero as $n\to\infty$.
Define the matrix $\overline{\bQ}_n\in\Rl^{m\times (p-m)}$ such that  $[\bQ_n^{(p)}]^1_m = [{}^1[\bQ_n]_m \,|\,\, \overline{\bQ}_n]$.
Then, it holds that $\forall\bx\in\Rl^n$,
\begin{align}\label{eq:lowerbound_Qp}
 \norm{([\bQ_n^{(p)}]^1_m)^T \bx}^2 &= \norm{({}^1[\bQ_n]_m)^T\bx}^2 + \norm{(\overline{\bQ}_n)^T \bx}^2 \\
 &\geq \norm{({}^1[\bQ_n]_m)^T\bx}^2 \\
 &\geq \left(\lambda_{\text{min}}({}^1[\bQ_n]_m({}^1[\bQ_n]_m)^T)\right)^{2}.
\end{align}
Hence, the minimum singular value of $[\bQ_n^{(p)}]^1_m$ is lower bounded by the smallest singular value of ${}^1[\bQ_n]_m$, for all $n\geq m$.
But it was shown in the proof of Theorem~\ref{thm:eg_n_instate_w_disturb} (see page~\pageref{proof:thm:eg_n_instate_w_disturb}) that
$\lim_{n\to\infty}\lambda_{\text{min}}({}^1[\bQ_n]_m({}^1[\bQ_n]_m)^T)>0$.
Using this result in~\eqref{eq:lowerbound_Qp} and taking the limit, we arrive to 
\begin{align}
\lim_{n\to\infty}\det \left([\bQ_{n}^{(p)}]^{1}_{m}([\bQ_{n}^{(p)}]^{1}_{m})^{T}\right)
 >
 0.
\end{align}
Thus
\begin{align}
 h\left( [\bD_{n}]^{1}_{m}\bR_{n}\rveu^{1}_{n} +[\bQ_{n}]^{1}_{m}\bar{\rvey}^{1}_{n} \right) 
 &
 =
 h\left([\bD_{n}]^{1}_{m}\bR_{n}\rvev^{1}_{n} + [\bQ_{n}]^{1}_{m} [\bPhi]^{1}_{n}\rves^{1}_{p}\right)
 \end{align}
is upper and lower bounded by a constant independent of $n$  because $\rvav_{1}^{\infty}$ is entropy balanced, $[\bD_{n}]^{1}_{m}$ has decaying entries, and $h(\rvas_1^{p})<\infty$, which means that the entropy rate in the RHS of~\eqref{eq:gap_with_two_terms} decays to zero.
The proof is finished by invoking Lemma~\ref{lem:hashimoto_IIR}.
\end{proof}

Theorem~\ref{thm:eg-due-to-random-xo} allows us to formalize the effect that the presence or absence of a random initial state has on the entropy gain using arguments similar to those utilized in Section~\ref{sec:entropy_gain_output_disturb}.
Indeed, if the random initial state $\rvex_0\in\Rl^p$ has finite differential entropy, then the entropy gain achieves~\eqref{eq:Jensen}, since the alignment between $\rvex_0$ and the first $m$ rows of $\bQ_n$ is guaranteed.
This motivates us to characterize the behavior of the entropy gain (due only to a random initial state), when the initial state $\rvex_0$ can be written as $[\bPhi]^1_p\rves^1_\tau$, with $\tau\leq p$, which means that $\rvex_0$ has an undefined (or $-\infty$) differential entropy.

\begin{coro}\label{coro:eg_due_to_xo_ineq}
 Consider an FIR, $p$-order filter $F(z)$ having $m$ NMP-zeros, such that its random initial state can be written as $\rvex_0 = \bPhi \rves^1_\tau$, 
 where $|h(\rvas_1^\tau)|<\infty$ and $\bPhi\in\Rl^{p\times \tau}$ contains orthonormal rows .
 Then,
  \begin{align}
  \lim_{n\to\infty} ( h(\rvay_1^n) - h(\rvau_1^n) ) \leq \Sumfromto{i=1}{\bar{\tau}} \log\abs{\rho_{\iota(i)}}, \label{eq:eg_due_to_xo_ineq}
 \end{align}
 where $\set{\bar{\tau}} \eq \min\set{m,\tau}$.
 The upper bound in~\eqref{eq:eg_due_to_xo_ineq} is achieved when $[\bQ_n]^1_m\bC_n\bPhi([\bQ_n]^1_m\bC_n\bPhi)^T$ is a non-singular matrix, with $\bC_n$ defined by $\bar{\rvey}^1_n = \bC_n\rvex_0$ (as in Theorem~\ref{thm:eg-due-to-random-xo}).
\end{coro}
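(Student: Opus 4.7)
My plan is to reduce the statement to Lemma~\ref{lem:gap_with_two_terms} in exactly the same way as the proof of Theorem~\ref{thm:eg-due-to-random-xo}, but with the lower-dimensional initial state playing the role that a lower-dimensional output disturbance plays in the proof of Theorem~\ref{thm:eg_n_instate_w_disturb_ineq}. Since $F(z)$ is FIR of order $p$, its natural response admits the representation $\bar{\rvey}^{1}_{n}=\bC_{n}\rvex_{0}$ with $\bC_{n}=[\bE_{p}^{T}\,|\,\bzero^{T}]^{T}$ and $\bE_{p}\in\Rl^{p\times p}$ upper-triangular and non-singular (as derived in the proof of Theorem~\ref{thm:eg-due-to-random-xo}). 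Substituting $\rvex_{0}=\bPhi\rves^{1}_{\tau}$ gives $\bar{\rvey}^{1}_{n}=\bC_{n}\bPhi\rves^{1}_{\tau}$, so that the natural response behaves as an output disturbance supported on the $\tau$-dimensional range of $\bC_{n}\bPhi$ and independent of $\rveu^{1}_{n}$. Thus the hypotheses of Lemma~\ref{lem:gap_with_two_terms} are met, with $\bF_{n}=\bQ_{n}^{T}\bD_{n}\bR_{n}$ being the SVD of $\bF_{n}$ and with $m$ singular values decaying as $|\rho_{\iota(i)}|^{-n}$ (up to polynomial factors) by Lemma~\ref{lem:hashimoto_IIR}.

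The core of the argument is then to bound the entropy term
\begin{align}
h\bigl([\bD_{n}]^{1}_{m}\bR_{n}\rveu^{1}_{n}+[\bQ_{n}]^{1}_{m}\bC_{n}\bPhi\rves^{1}_{\tau}\bigr)
\end{align}
appearing in the RHS of \eqref{eq:gap_with_two_terms}. The crucial observation is that the matrix $[\bQ_{n}]^{1}_{m}\bC_{n}\bPhi\in\Rl^{m\times\tau}$ has rank at most $\bar\tau=\min\set{m,\tau}$, so the disturbance can ``fill'' only $\bar\tau$ of the $m$ directions along which $\bG_{n}\rveu^{1}_{n}$ collapses exponentially. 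Following the rotation/permutation device used in the proof of Theorem~\ref{thm:eg_n_instate_w_disturb_ineq}, I would factor $[\bQ_{n}]^{1}_{m}\bC_{n}\bPhi$ via a thin SVD and then partition the $m$ coordinates into the $\bar\tau$ coordinates where the disturbance has full effective rank and the $m-\bar\tau$ remaining coordinates where it vanishes. On the first $\bar\tau$ coordinates, the bounded differential entropy of $\rves^{1}_{\tau}$ together with the entropy-balanced property of $\rveu_{1}^{\infty}$ (and Lemmas~\ref{lem:sum_yields_entropy_balanced}--\ref{lem:filtering_preserves_entropy_balance} if needed) keeps the corresponding differential entropy bounded, so after combining with $-\sumfromto{i=1}{\bar\tau}\log d_{n,i}\to \sumfromto{i=1}{\bar\tau}\log|\rho_{\iota(i)}|$ one recovers the claimed bound. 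On the remaining $m-\bar\tau$ coordinates the disturbance contributes nothing, so these directions are effectively absent and yield no positive contribution to $\limsup_{n\to\infty}\frac{1}{n}(h(\rvay^{n}_{1})-h(\rveu^{n}_{1}))$ once divided by $n$.

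For the equality condition, when $[\bQ_{n}]^{1}_{m}\bC_{n}\bPhi([\bQ_{n}]^{1}_{m}\bC_{n}\bPhi)^{T}$ stays bounded away from singularity as $n\to\infty$ (with rank $\bar\tau$), the projection of the disturbance onto the first $m$ singular directions of $\bF_{n}$ remains non-degenerate at rank $\bar\tau$, which is exactly what is needed for the lower bound inside $h(\cdot)$ to match the upper bound---an argument strictly analogous to the non-singularity condition invoked in Theorem~\ref{thm:eg_n_instate_w_disturb_ineq} and reused for $[\bQ_{n}^{(p)}]^{1}_{m}$ in the proof of Theorem~\ref{thm:eg-due-to-random-xo}. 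The hard step, as in Theorem~\ref{thm:eg-due-to-random-xo}, will be the rank-deficient bookkeeping: because $\rvev^{1}_{n}\eq\rveu^{1}_{n}+\tilde{\bC}_{n}\rvex_{0}$-type couplings are not needed here (the filter is FIR, so the ``$P$-part'' is trivial), the main technical delicacy is handling the case $\tau<m$, where one must rule out a compensation between $[\bD_{n}]^{1}_{m}\bR_{n}\rveu^{1}_{n}$ and $[\bQ_{n}]^{1}_{m}\bC_{n}\bPhi\rves^{1}_{\tau}$ in the collapsing directions that are \emph{not} spanned by the disturbance; the entropy-balanced hypothesis on $\rveu^{1}_{\infty}$ is precisely what ensures that no such compensation allows the entropy to exceed $\sumfromto{i=1}{\bar\tau}\log|\rho_{\iota(i)}|$.
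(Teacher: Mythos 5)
Your proposal is correct and takes essentially the same route as the paper's own proof: reduce to Lemma~\ref{lem:gap_with_two_terms} by treating the natural response $\bar{\rvey}^1_n = \bC_n\bPhi\rves^1_\tau$ as an independent output disturbance (the FIR hypothesis removing the $P$-part coupling of Theorem~\ref{thm:eg-due-to-random-xo}), observe that $[\bQ_n]^1_m\bC_n\bPhi$ has rank at most $\bar\tau$, and reuse the unitary-rotation / block-decomposition device from the proof of Theorem~\ref{thm:eg_n_instate_w_disturb_ineq} to bound $h([\bD_n]^1_m\bR_n\rveu^1_n + [\bQ_n]^1_m\bC_n\bPhi\rves^1_\tau)$ by $\sum_{i=\tau_n+1}^m\log d_{n,i}$ up to terms whose $\frac{1}{n}$-normalized limit vanishes by entropy balance, then invoke Lemma~\ref{lem:hashimoto_IIR} for the limit $\sum_{i=1}^{\bar\tau}\log|\rho_{\iota(i)}|$. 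The only slight imprecision is the phrasing that the first $\bar\tau$ coordinates keep the differential entropy "bounded"—strictly, the entropy term still diverges like $\sum_{i=\bar\tau+1}^m\log d_{n,i}$, and it is the partial cancellation with $-\sum_{i=1}^m\log d_{n,i}$ that yields the finite bound—but this matches the mechanism the paper actually uses, and the equality condition you state mirrors the paper's.
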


\begin{proof}
The effect of the random initial state to the output sequence $\rvay_1^\infty$ can be written as $\rvey^1_n = \bC_n\rvex_0$, where $\bC_n = [\bE_p^T \,|\, \bzero^T]^T \in \Rl^{n\times p}$.
Therefore, if $\bQ^T_n\bD_n\bR_n$ is an SVD for $\bF_n$, it holds that 
\begin{align}
h([\bD_n]^1_n\bR_n\rveu^1_n + [\bQ_n]^1_m \bC_n\bPhi\rves^1_\tau) \label{eq:car}
\end{align}
remains bounded, for $n\to\infty$, if and only if $\lim_{n\to\infty}\det([\bQ_n]^1_m\bC_n\bPhi([\bQ_n]^1_m\bC_n\bPhi)^T)>0$.

Define the rank of $[\bQ_n]^1_m\bC_n\bPhi$ as $\tau_n\in\set{1,\ldots,\bar{\tau}}$.
If $\det([\bQ_n]^1_m\bC_n\bPhi([\bQ_n]^1_m\bC_n\bPhi)^T)=0$, then the lower bound is reached by inserting~\eqref{eq:car} in~\eqref{eq:gap_with_two_terms}.
Otherwise, there exists $L$ large enough such that $\tau_n \geq 1$, $\forall n\geq L$.

We then proceed as the proof of Theorem~\ref{thm:eg_n_instate_w_disturb_ineq}, by considering a unitary $(m\times m)$-matrix $\bH_n$, and a $(\tau_n\times m)$-matrix $\bA_n$ such that
\begin{align}
 \bH_{n}[\bQ_{n}]^{1}_{m} \bC_n\bPhi
 =
 \begin{pmatrix}
  \bA_{n}[\bQ_{n}]^{1}_{m} \bC_n\bPhi
  \\
  \bzero
 \end{pmatrix}, \fspace n\geq L.
\end{align}

This procedure allows us to conclude that $h([\bD_n]^1_n\bR_n\rveu^1_n + [\bQ_n]^1_m \bC_n\bPhi\rves^1_\tau) \leq \sumfromto{i=\tau_n + 1}{m}\log d_{n,i}$, and that the lower limit in the latter sum equals $\bar{\tau}+1$ when $[\bQ_n]^1_m \bC_n\bPhi\rves^1_\tau$ is a full row-rank matrix. 
Replacing the latter into~\eqref{eq:gap_with_two_terms} finishes the proof.
%
\end{proof}

\begin{rem}
 If the random initial state $\rvex_0 = \bPhi\rves^1_\tau$ is generated with $\tau\geq p-m$, then the entropy gain introduced by an FIR minimum phase filter $F$ is at least $\log \rho_1$.
 Otherwise, the entropy gain could be identically zero, as long as the columns of $\bE_n\bPhi(\bE_n\bPhi)^T$ fill only the orthogonal space to the span of the row vectors in $[\bQ_n^{(p)}]^1_m$, where $\bE_n$, $\bPhi$ and $[\bQ_n^{(p)}]^1_m$ are defined as in the proof of Theorem~\ref{thm:eg-due-to-random-xo}.
\end{rem}

Both results, Theorem~\ref{thm:eg-due-to-random-xo} and Corollary~\ref{coro:eg_due_to_xo_ineq}, reveal that the entropy gain arises as long as the effect of the random initial state aligns with the first rows of $\bQ_n$, just as in the results of the previous section.

\section{Effective Entropy Gain due to the Intrinsic Properties of the Filter}\label{sec:effective_entropy}
If there are no disturbances and the initial state is zero, then the first $n$ output samples to an input $\rvau_{1}^{n}$ is given by~\eqref{eq:y_of_u_matrix}.
Therefore, the entropy gain in this case, as defined in~\eqref{eq:Gsp_def}, is zero, regardless of whether or not $G$ is NMP.

Despite the above, there is an interesting question which, to the best of the authors' knowledge, has not been addressed before:
Since in any LTI filter the entire output is longer than the input, what would happen if one compared the differential entropies of the complete output sequence to that of the (shorter) input sequence?
As we show next, a proper definition of this question requires recasting the problem in terms of a new definition of differential entropy.
After providing a geometrical interpretation of this problem, we prove that the (new) entropy gain in this case is exactly~\eqref{eq:Jensen}.

\subsection{Geometrical Interpretation} 
Consider the random vectors 
$\rveu\eq [\rvau_1\ \rvau_2]^{T}$ 
and 
$\rvey\eq [\rvay_1\ \rvay_2 \ \rvay_3]^{T}$  
related via
\begin{align}\label{eq:little_example}
 \begin{bmatrix}
  \rvay_1\\
  \rvay_2\\
  \rvay_3
 \end{bmatrix}
=
\underbrace{\begin{pmatrix}
 1 & 0\\
 2 & 1\\
 0 & 2
\end{pmatrix}}
_{\eq\breve{\bG}_{2}}
 \begin{bmatrix}
  \rvau_1\\
  \rvau_2
 \end{bmatrix}.
\end{align}
Suppose $\rveu$ is uniformly distributed over $[0 ,1]\times[0,1]$.
Applying the conventional definition of differential entropy of a random sequence, we would have that
\begin{align}
 h(\rvay_1,\rvay_2,\rvay_3)
 &
 =
 h(\rvay_1,\rvay_2) + h(\rvay_3|\rvay_1,\rvay_2)
 =
 -\infty,
\end{align}
because $\rvay_3$ is a deterministic function of $\rvay_1$ and $\rvay_2$:
$$
\rvay_3 = [0\;\; 2][\rvau_1 \; \rvau_2]^{T}=
[0\;\; 2]
\begin{pmatrix}
 1 & 0\\
 2 & 1\\
\end{pmatrix}^{-1} 
 \begin{bmatrix}
  \rvay_1\\
  \rvay_2
 \end{bmatrix}.
$$
In other words, the problem lies in that although the output is a  three dimensional vector, it only has two degrees of freedom, i.e., it is restricted to a 2-dimensional subspace of $\Rl^{3}$.
This is illustrated in Fig.~\ref{fig:square_shear}, where the set $[0,1]\times[0,1]$ is shown (coinciding with the \texttt{u}-\texttt{v} plane), together with its image through $\breve{\bG}_{2}$ (as defined in~\eqref{eq:little_example}).
\begin{figure}[htbp]
 \centering
 \includegraphics[width = 8 cm]{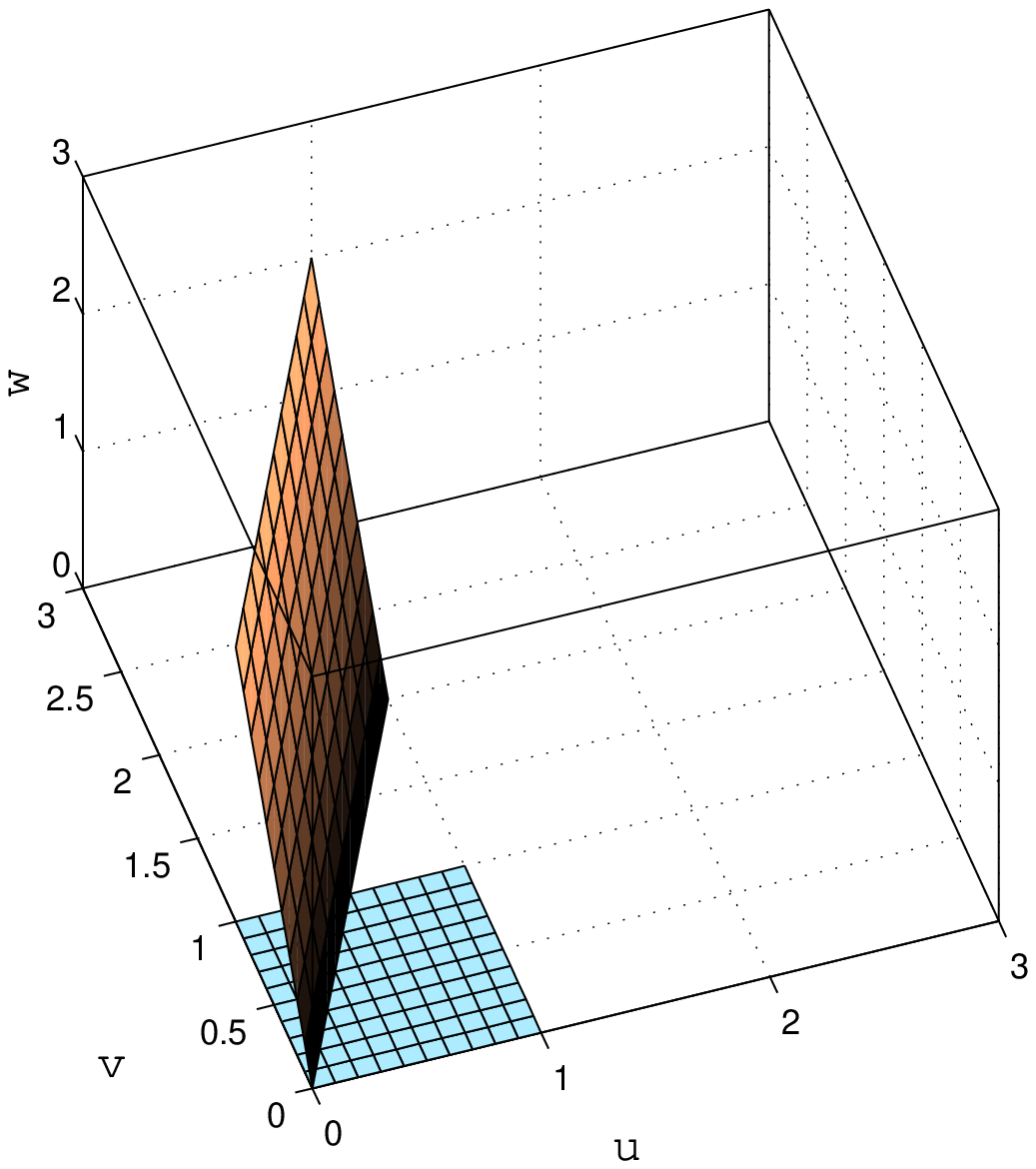}
 \caption{Support of $\rveu$ (laying in the \texttt{u}-\texttt{v} plane) compared to that of $\rvey=\breve{\bG}\rveu$ (the rhombus in $\Rl^3$).}\label{fig:square_shear}
\end{figure}

As can be seen in this figure, the image of the square $[0,1]^{2}$ through $\breve{\bG}_{2}$ is a $2$-dimensional rhombus over which 
$\set{\rvay_1,\rvay_2,\rvay_3}$ distributes uniformly.
Since the intuitive notion of differential entropy of an ensemble of random variables (such as how difficult it is to compress it in a lossy fashion) relates to the size of the region spanned by the associated random vector, one could argue that the differential entropy of $\set{\rvay_1,\rvay_2,\rvay_3}$, far from being $-\infty$, should be somewhat larger than that of $\set{\rvau_1,\rvau_2}$ (since the rhombus $\breve{\bG}_{2}[0,1]^{2}$ has a larger area than $[0,1]^{2}$). 
So, what does it mean that (and why should) $h(\rvay_1,\rvay_2,\rvay_3)=-\infty$?
Simply put, the differential entropy relates to the volume spanned by the support of the probability density function.
For $\rvey$ in our example, the latter (three-dimensional) volume is clearly zero.

From the above discussion, the comparison between the differential entropies of $\rvey\in\Rl^{3}$ and $\rveu\in\Rl^{2}$ of our previous example should take into account that $\rvey$ actually lives in a two-dimensional subspace of $\Rl^{3}$.
Indeed, since the multiplication by a unitary matrix does not alter differential entropies, we could consider the differential entropy of 
\begin{align}
\begin{bmatrix}
\tilde{\rvey}
\\
0
\end{bmatrix}
 \eq 
 \begin{pmatrix}
 \breve{\bQ} 
 \\
 \bar{\bq}^{T}
 \end{pmatrix}
  \rvey,  
\end{align}
where $\breve{\bQ}^T$ is the $3\times2$ matrix with orthonormal rows in the singular-value decomposition of $\breve{\bG}_{2}$
\begin{align}
 \breve{\bG}_{2} = 
  \breve{\bQ}^T \breve{\bD}\, \breve{\bR}.
\end{align}
and $\bar{\bq}$ is a unit-norm vector orthogonal to the rows of $\breve{\bQ}$ (and thus orthogonal to $\rvey$ as well).
We are now able to compute the differential entropy in $\Rl^2$ for $\tilde{\rvey}$, corresponding to the rotated version of $\rvey$ such that its support is now aligned with $\Rl^2$.

The preceding discussion motivates the use of a modified version of the notion of differential entropy for a random vector $\rvey\in\Rl^{n}$ which considers the number of dimensions actually spanned by $\rvey$ instead of its length.

\begin{defn}[The Effective Differential Entropy]\label{def:BMD_entropy}
Let $\rvey\in\Rl^\ell$ be a random vector. 
If $\rvey$ can be written as a linear transformation $\rvey = \bS \rveu$, for some $\rveu\in\Rl^n$ ($ n\leq \ell$), $\bS \in \Rl^{\ell\times n}$, then the effective differential entropy of $\rvey$ is defined as
\begin{align}
 \breve{h}(\rvey) \eq h(\bA\rvey),
\end{align}
where $\bS = \bA^{T}\bT\bC$ is an SVD for $\bS$, with $\bT\in\Rl^{n\times n}$.
\finenunciado
\end{defn}
It is worth mentioning that Shannon's differential entropy
of a vector $\rvey\in\Rl^{\ell}$, whose support's $\ell$-volume is greater than zero, arises from considering it as the difference between its (absolute) entropy and that of a random variable uniformly distributed over an $\ell$-dimensional, unit-volume region of $\Rl^{\ell}$.
More precisely, if in this case the \textit{probability density function} (PDF) of $\rvey = [\rvay_{1}\;\rvay_{2}\; \cdots \; \rvay_{\ell}]^{T}$ is Riemann integrable, then~\cite[Thm.~9.3.1]{covtho91},
\begin{align}\label{eq:h_as_limit_covtho06}
 h(\rvey) = \lim_{\Delta\to 0} \left[H(\rvey^{\Delta}) + \ell\log\Delta\right], 
\end{align}
where $\rvey^{\Delta}$ is the discrete-valued random vector resulting when $\rvey$ is quantized using an $\ell$-dimensional uniform quantizer with $\ell$-cubic quantization cells with volume $\Delta^{\ell}$.
However, if we consider a variable $\rvey$ whose support belongs to an $n$-dimensional subspace of 
$\Rl^\ell$, $n < \ell$ (i.e., $\rvey=\bS\rveu=\bA^{T}\bT\bC\rveu$, as in Definition~\ref{def:BMD_entropy}), 
then the entropy of its quantized version in $\Rl^\ell$, say 
$H_\ell(\rvey^{\Delta})$, is distinct from 
$H_n((\bA\rvey)^{\Delta})$,
the entropy of 
$\bA\rvey$ in 
$\Rl^n$.
Moreover, it turns out that, in general, 
\begin{align}\label{eq:unequal}
 \lim_{\Delta\to 0}\left(H_\ell(\rvey^{\Delta}) -H_n( (\bA\rvey)^{\Delta}) \right) \neq 0,
\end{align}
despite the fact that $\bA$ has orthonormal rows.
Thus, the definition given by~\eqref{eq:h_as_limit_covtho06} does not yield consistent results for the case wherein a random vector has a support's dimension (i.e., its number of degrees of freedom) smaller that its length\footnote{The mentioned inconsistency refers to~\eqref{eq:unequal}, which reveals that the asymptotic behavior $H_\ell(\rvey^{\Delta})$ changes if $\rvey$ is rotated.} 
(If this were not the case, then we could redefine~\eqref{eq:h_as_limit_covtho06} replacing $\ell$ by $n$, in a spirit similar to the one behind Renyi's $d$-dimensional entropy~\cite{renyi-59}.)
To see this, consider the case in which $\rveu\in\Rl$ distributes uniformly over $[0,1]$ and 
$\rvey = [1 \quad 1]^{T}\rveu/\hsqrt{2}$.
Clearly, $\rvey$ distributes uniformly over the unit-length segment connecting the origin with the point $(1,1)/\hsqrt{2}$.
Then 
\begin{align}
 H_{2}(\rvey^{\Delta}) 
 = 
 -
 \left\lfloor \tfrac{1}{\Delta \hsqrt{2}}  \right\rfloor 
  \Delta\hsqrt{2}
  \log \left( \Delta\hsqrt{2}   \right)
  - 
  \left(1- \left\lfloor \tfrac{1}{\Delta \hsqrt{2}}  \right \rfloor \hsqrt{2}\Delta \right)
  \log   \left(1- \left\lfloor \tfrac{1}{\Delta \hsqrt{2}}  \right \rfloor \hsqrt{2}\Delta \right).
\end{align}
On the other hand, since in this case $\bA\rvey=\rveu$, we have that
\begin{align}
 H_{1}((\bA\rvey)^{\Delta})
 =
 H_{1}(\rveu^{\Delta})
 =
 -
 \left\lfloor \tfrac{1}{\Delta }  \right\rfloor 
 \Delta \log\Delta 
 -
 (1-\left\lfloor \tfrac{1}{\Delta }  \right\rfloor \Delta)
 \log (1-\left\lfloor \tfrac{1}{\Delta }  \right\rfloor \Delta).
\end{align}
Thus 
\begin{align}
 \lim_{\Delta\to 0}
 &
 \left(
 H_{1}((\bA\rvey)^{\Delta})
 -
 H_{2}(\rvey^{\Delta}) 
  \right)
  =
  \lim_{\Delta \to 0}
  \left( 
  \left\lfloor \tfrac{1}{\Delta \hsqrt{2}}  \right\rfloor 
  \Delta\hsqrt{2}
  \log \left( \Delta\hsqrt{2}   \right)
  -
 \left\lfloor \tfrac{1}{\Delta }  \right\rfloor 
 \Delta \log\Delta 
  \right)
  =\log\hsqrt{2}.
\end{align}

The latter example further illustrates why the notion of effective entropy is appropriate in the setup considered in this section, where
the effective dimension of the random sequences does not coincide with their length 
(it is easy to verify that the effective entropy of $\rvey$ does not change if one rotates $\rvey$ in $\Rl^{\ell}$).
Indeed, we will need to consider only sequences which can be constructed by multiplying some random vector $\rveu\in\Rl^{n}$, with bounded differential entropy, by a tall matrix $\breve{\bG}_n\in\Rl^{n\times (n+\eta)}$, with $\eta>0$ (as in~\eqref{eq:little_example}), which are precisely the conditions required by Definition~\ref{def:BMD_entropy}. 

\subsection{Effective Entropy Gain}
We can now state the main result of this section:
\begin{thm}\label{thm:BMD_entropy_gain}
 Let the entropy-balanced random sequence $\rvau_{1}^{\infty}$ be the input of an LTI filter $G$, and let $\rvay_{1}^{\infty}$ be its output.
 Assume that $G(z)$ is the $z$-transform of the $(\eta+1)$-length sequence $\set{g_k}_{k=0}^{\eta}$.
 Then 
 \begin{align}
  \lim_{n\to \infty} \frac{1}{n}\left(\breve{h}(\rvay_{1}^{n+\eta}) -\breve{h}(\rvau_{1}^{n}) \right)
  =\intpipi{\log\abs{G\ejw} }.
 \end{align}
 \finenunciado
\end{thm}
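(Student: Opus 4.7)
The plan is to reduce the effective-entropy gain to a log-determinant of the Gram matrix $\breve{\bG}_n^{T}\breve{\bG}_n$, identify that Gram matrix as a Hermitian Toeplitz matrix whose symbol is $|G\ejw|^{2}$, and invoke Szeg\H{o}'s first limit theorem.

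First I would invoke Definition~\ref{def:BMD_entropy} with $\bS=\breve{\bG}_n\in\Rl^{(n+\eta)\times n}$ and take its compact SVD $\breve{\bG}_n=\bA_n^{T}\bT_n\bC_n$, where $\bA_n^{T}\in\Rl^{(n+\eta)\times n}$ has orthonormal columns, $\bT_n=\diag\set{t_{n,1},\dots,t_{n,n}}$ collects the $n$ singular values, and $\bC_n\in\Rl^{n\times n}$ is unitary. Using $\bA_n\bA_n^{T}=\bI_n$, the invariance of differential entropy under the unitary $\bC_n$, and the Jacobian rule for the diagonal map $\bT_n$,
\begin{align*}
\breve{h}(\rvay_1^{n+\eta})=h(\bA_n\rvey^{1}_{n+\eta})=h(\bT_n\bC_n\rveu^{1}_{n})=h(\rveu^{1}_{n})+\log|\det\bT_n|.
\end{align*}
Since $\rvau_1^{n}\in\Rl^{n}$ trivially, $\breve{h}(\rvau_1^n)=h(\rvau_1^n)$, so
\begin{align*}
\breve{h}(\rvay_1^{n+\eta})-\breve{h}(\rvau_1^n)=\sum_{i=1}^{n}\log t_{n,i}=\tfrac{1}{2}\log\det(\breve{\bG}_n^{T}\breve{\bG}_n).
\end{align*}

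Next I would show that $\breve{\bG}_n^{T}\breve{\bG}_n$ is \emph{exactly} Hermitian Toeplitz (not merely asymptotically so). A direct computation of its $(i,j)$-entry gives, for $|i-j|\leq\eta$, the autocorrelation $r_g(|i-j|)=\sum_{l=0}^{\eta-|i-j|}g_{l}g_{l+|i-j|}$, and zero otherwise; there are no edge effects because the tall convolution matrix $\breve{\bG}_n$ has room for complete copies of the impulse response in every column. Hence $\breve{\bG}_n^{T}\breve{\bG}_n$ is a banded Hermitian Toeplitz matrix generated by the symbol $f(\omega)=|G\ejw|^{2}$. Since $G$ is an FIR polynomial of finite length, $\log f\in L^{1}[-\pi,\pi]$, so Szeg\H{o}'s first limit theorem yields
\begin{align*}
\lim_{n\to\infty}\tfrac{1}{n}\log\det(\breve{\bG}_n^{T}\breve{\bG}_n)=\tfrac{1}{2\pi}\int_{-\pi}^{\pi}\log|G\ejw|^{2}d\omega.
\end{align*}
Dividing by $2$ and combining with the identity above produces the claimed $\intpipi{\log|G\ejw|}$.

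The entropy-balanced hypothesis does not appear to be consumed by the linear-algebraic identity itself; I expect it to enter only implicitly, through guaranteeing that $h(\rvau_1^n)$ is well-defined and of finite per-sample rate so that the effective entropies are meaningful quantities to compare. The principal obstacle I foresee is therefore not conceptual but organizational: carefully verifying the \emph{exact} (rather than merely asymptotic) Toeplitz structure of $\breve{\bG}_n^{T}\breve{\bG}_n$ so that one can avoid the usual asymptotic-Toeplitz detour, and confirming that Szeg\H{o}'s integrability hypothesis holds under the paper's standing assumptions on $G$ — both of which, however, follow straightforwardly from the FIR structure postulated in the theorem statement.
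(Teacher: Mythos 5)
Your proof is correct and follows essentially the same route as the paper's: reduce the effective-entropy gap via the compact SVD to $\tfrac{1}{2}\log\det(\breve{\bG}_n^{T}\breve{\bG}_n)$, recognize the Gram matrix as symmetric Toeplitz with symbol $|G(e^{j\omega})|^{2}$, and apply the Grenander--Szeg\H{o} limit theorem. Your side observation that the entropy-balanced hypothesis is not actually consumed by the argument is also accurate, as the paper's proof does not invoke it either beyond having $h(\rvau_1^n)$ well defined.
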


Theorem~\ref{thm:BMD_entropy_gain} states that, when considering the full-length output of a filter, the effective entropy gain is introduced by the filter itself, without  requiring the presence of external random disturbances or initial states.
This may seem a surprising result, in view of the findings made in the previous sections, where the entropy gain appeared only when such random exogenous signals were present. 
In other words, when observing the full-length output and the input, the (maximum) entropy gain of a filter can be recasted in terms of the ``volume'' expansion yielded by the filter as a linear operator, provided we measure effective differential entropies instead of Shannon's differential entropy.

\begin{proof}[Proof of Theorem~\ref{thm:BMD_entropy_gain}]
The total length of the output $\ell$, will grow with the length $n$ of the input, if $G$ is FIR, and will be infinite, if $G$ is IIR.
Thus, we define the \textit{output-length function}
\begin{align}
 \ell(n) \eq \text{length of $\rvey$ when input is $\rveu^{1}_{n}$}
 =
 \begin{cases}
  n+\eta & , \text{ if $G$ is FIR with i.r. length $\eta+1$,}\\
  \infty& , \text{ if $G$ is IIR.}
 \end{cases}
\end{align}
It is also convenient to define the sequence of matrices $\set{\breve{\bG}_{n}}_{n=1}^{\infty}$, where
$\breve{\bG}_{n}\in\Rl^{\ell(n)\times n}$ is Toeplitz with 
$\left[\breve{\bG}_{n}\right]_{i,j}=0,\forall i<j$,
$\left[\breve{\bG}_{n}\right]_{i,j}=g_{i-j},\forall i\geq j$.
This allows one to write
the \emph{entire} output $\rvay_1^{\ell}$ of a causal LTI filter $G$ with impulse response $\set{g_{k}}_{k=0}^{\eta}$ to an input $\rvau_1^\infty$ as 
\begin{align}\label{eq:yFu_tall}
\rvey^{1}_{\ell(n)} 
= 
\breve{\bG}_{n}
\rveu^{1}_{n}.
\end{align}
%
Let the SVD of $\breve{\bG}_{n}$ be
$\breve{\bG}_{n} = \breve{\bQ}^T_n\breve{\bD}_n\breve{\bR}_n$, 
where $\breve{\bQ}_n\in\Rl^{n\times \ell(n)}$ has orthonormal rows,
$\breve{\bD}_n\in\Rl^{n\times n}$ is diagonal with positive elements, 
and 
$\breve{\bR}_n\in\Rl^{n\times n}$ is unitary.

The effective differential entropy of $\rvay_1^{n(\ell)}$ exceeds the one of $\rvau_1^{n}$ by 
\begin{align}\label{eq:hybreve_of_logdetD}
\breve{h}(\rvey^1_{n(\ell)}) - \breve{h}(\rveu^1_{n}) &= h(\breve{\bQ}_n \breve{\bG}_n\rveu^1_{n}) - h(\rveu^1_{n}) \\
&= h(\breve{\bD}_n\breve{\bR}_n\rveu^1_n) - h(\rveu^1_n) \\
&= \log \det \breve{\bD}_n,
\end{align}
where the first equality follows from the fact that $\rveu^1_n$ can be written as $\bI_n\rveu^1_n$, which means that 
$\breve{h}(\rveu^1_n) = h(\rveu^1_n)$.
But
\begin{align}
 \breve{\bG}_{n}^{T}\breve{\bG}_{n} 
 = (\breve{\bQ}^T_n\breve{\bD}_n\breve{\bR}_n)^{T}(\breve{\bQ}^T_n\breve{\bD}_n\breve{\bR}_n)
 = \breve{\bR}_n^T\breve{\bD}_n\breve{\bQ}_n\breve{\bQ}_n^T\breve{\bD}_n\breve{\bR}_n
 = \breve{\bR}_n^T\breve{\bD}_n^2\breve{\bR}_n. \label{eq:casicasi}
\end{align}
Since $\breve{\bR}_n$ is unitary, it follows that 
$\det\breve{\bD}_n^{2}= \det\breve{\bG}_{n}^{T}\breve{\bG}_{n}$
, which means that  
$\det\breve{\bD}_n = \frac{1}{2}\det\breve{\bG}_{n}^{T}\breve{\bG}_{n}$.
The product 
$\bH_{n} \eq  \breve{\bG}_{n}^{T} \breve{\bG}_{n}$
%
%
is a symmetric Toeplitz matrix, with its first column,  $[h_{0}\, h_{1}\, \cdots \,h_{n-1}]^{T}$,  given by
\begin{align}
 h_{i} & =  \Sumfromto{k=0}{n}g_{k}g_{k-i}.
\end{align}
Thus, the sequence $\set{h_i}_{i=0}^{n-1}$ corresponds to the samples $0$ to $n-1$ of those resulting from the complete convolution $g*g^{-}$, even when the filter $G$ is IIR, where $g^{-}$ denotes the time-reversed (perhaps infinitely large) response $g$.
Consequently, using the Grenander and Szeg{\"o}'s theorem~\cite{gresze58}, it holds that
\begin{align}\label{eq:GSz}
\lim_{n\to\infty} \log\left( \det(\breve{\bG}^T_{n}\breve{\bG}_n)^{1/n}\right) = \Intpipi{\log \abs{G\ejw} },
\end{align}
where $G\ejw$ is the discrete-time Fourier transform of $\set{g_{k}}_{k=0}^{\ell}$.

In order to finish the proof, we divide~\eqref{eq:hybreve_of_logdetD} by $n$, take the limit as $n\to\infty$, and replace~\eqref{eq:GSz} in the latter.
\end{proof}

\section{Some Implications}\label{sec:implications}
\subsection{Rate Distortion Function for Non-Stationary Processes}
In this section we obtain a simpler proof of a result by 
Gray, 
Hashimoto and Arimoto~\cite{gray--70,hasari80,grahas08},
which compares the rate distortion function (RDF) of 
a non-stationary auto-regressive Gaussian process $\rvax_{1}^{\infty}$ (of a certain class) to that of a corresponding stationary version, under MSE distortion.
Our proof is based upon the ideas developed in the previous sections, and extends the class of non-stationary sources for which the results in~\cite{gray--70,hasari80,grahas08} are valid.

To be more precise,
let
$\set{a_i}_{i=1}^{\infty}$ 
and 
$\set{\tilde{a}_i}_{i=1}^{\infty}$ 
be the impulse responses of two linear time-invariant filters $A$ and $\tilde{A}$ with rational transfer functions
\begin{align}
 A(z) &= \frac{z^{M}}{\prod_{i=1}^{M}(z-p_{i})}\\
 \tilde{A}(z) &= \frac{z^{M}}{\prod_{i=1}^{M}|p_{i}^{*}|(z-1/p_{i}^{*})},
\end{align}
where $\abs{p_{i}}>1$, $\forall i=1,\ldots, M$.
From these definitions it is clear that $A(z)$ is unstable, $\tilde{A}(z)$ is stable, and
\begin{align}
 |A\ejw| = |\tilde{A}\ejw|, \fspace \forallwinpipi.
\end{align}
Notice also that 
$\lim_{\abs{z}\to\infty}A(z)=1$
and
$\lim_{\abs{z}\to\infty}\tilde{A}(z)=1/\prod_{i=1}^{M}|p_{i}|$, and thus 
\begin{align}
 a_{0}=1, && \tilde{a}_{0}= \prod_{i=1}^{M}|p_{i}|^{-1}.
\end{align}

Consider the non-stationary random sequences (source) 
$\rvax_{1}^{\infty}$ and
the asymptotically stationary source
$\tilde{\rvax}_{1}^{\infty}$
generated by passing  a stationary Gaussian process $\rvaw_{1}^{\infty}$ through $A(z)$ and $\tilde{A}(z)$, respectively, which can be written as
\begin{align}
 \rvex^{1}_{n} &= \bA_{n}\rvew^{n}_{1}, \fspace n=1,\ldots,\label{eq:x_def}\\
 \tilde{\rvex}^{1}_{n} &= \tilde{\bA}_{n}\rvew^{n}_{1}, \fspace n=1,\ldots.\label{eq:tildex_def}
\end{align}
(A block-diagram associated with the construction of $\rvax$ is presented in Fig.~\ref{fig:rdfns}.)
\begin{figure}[t]
\centering
\input{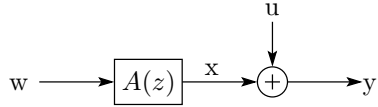}
\caption{Block diagram representation of how the non-stationary source $\rvax_{1}^{\infty}$ is built and then reconstructed as $\rvay=\rvax+\rvau$.}
\label{fig:rdfns}
\end{figure}
Define the rate-distortion functions for these two sources  as 
\begin{align}
R_{\rvax}(D) &\eq  \lim_{n\to\infty}  R_{\rvax,n}(D),
&
R_{\rvax,n}(D)&\eq \min \frac{1}{n}I(\rvax_{1}^{n};\rvax_{1}^{n}+\rvau_{1}^{n}),
\\
R_{\tilde{\rvax}}(D) &\eq  \lim_{n\to\infty}  R_{\tilde{\rvax},n}(D),
&
R_{\tilde{\rvax},n}(D)&\eq \min \frac{1}{n}I(\tilde{\rvax}_{1}^{n};\tilde{\rvax}_{1}^{n}+\tilde{\rvau}_{1}^{n}),
\end{align}
where, for each $n$, the minimums are taken over all the conditional probability density functions 
$f_{\rvau_{1}^{n}|\rvax_{1}^{n}}$ 
and
$f_{\tilde{\rvau}_{1}^{n}|\tilde{\rvax}_{1}^{n}}$ 
yielding 
 $\Expe{\norm{\rveu^{1}_{n}}^{2}}/n\leq D$ and
 $\Expe{\norm{\tilde{\rveu}^{1}_{n}}^{2}}/n\leq D$,
respectively.

The above rate-distortion functions have been characterized in~\cite{gray--70,hasari80,grahas08} for the case in which $\rvaw_{1}^{\infty}$ is an i.i.d. Gaussian process.
In particular, it is explicitly stated in~\cite{hasari80,grahas08} that, for that case, 
\begin{align}
 R_{\rvax}(D)- 
 R_{\tilde{\rvax}}(D)
 =
 \intpipi{\log|A^{-1}\ejw|}
 =
 \sumfromto{i=1}{M}\log|p_{i}|.\label{eq:gap}
\end{align}
We will next provide an alternative and simpler proof of this result, and extend its validity for general (not-necessarily stationary) Gaussian $\rvaw_{1}^{\infty}$, using the entropy gain properties of non-minimum phase filters established in Section~\ref{sec:entropy_gain_output_disturb}.
Indeed, the approach in~\cite{gray--70,hasari80,grahas08} is based upon asymptotically-equivalent Toeplitz matrices in terms of the signals' covariance matrices.
This restricts $\rvaw_{1}^{\infty}$ to be Gaussian and i.i.d. and $A(z)$ to be an all-pole unstable transfer function, and then, the only non-stationary allowed is that arising from unstable poles.
For instance, a cyclo-stationarity innovation followed by an unstable filter $A(z)$ would yield a source which cannot be treated using Gray and Hashimoto's approach.
By contrast, the reasoning behind our proof lets $\rvaw_1^\infty$ be any Gaussian process, and then let the source be $A\rvaw$, with $A(z)$ having unstable poles (and possibly zeros and stable poles as well).

The statement is as follows:
\begin{thm}\label{thm:RDF_non_stat}
 Let $\rvaw_{1}^{\infty}$ be any Gaussian stationary process with bounded differential entropy rate, and let 
 $\rvax_{1}^{\infty}$ and $\tilde{\rvax}_{1}^{\infty}$ be as defined in~\eqref{eq:x_def} and~\eqref{eq:tildex_def}, respectively.
 Then~\eqref{eq:gap} holds.
 \finenunciado
\end{thm}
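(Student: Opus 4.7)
The plan is to reduce the identity~\eqref{eq:gap} to a comparison of per-sample differential entropy rates. Both sources being Gaussian, in the small-distortion (water-filling) regime the rate--distortion function under MSE simplifies to the per-sample differential entropy rate minus $\tfrac{1}{2}\log(2\pi e D)$. I would therefore first argue that, for $D$ small enough, $R_{\rvax,n}(D)-R_{\tilde\rvax,n}(D) = \tfrac{1}{n}[h(\rvex_1^n)-h(\tilde\rvex_1^n)]$, so the theorem reduces to establishing $\bar{h}(\rvax)-\bar{h}(\tilde\rvax) = \sum_{i=1}^M\log|p_i|$.

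Next I would evaluate each entropy rate via the elementary identity $h(\bA_n\rvew_1^n)= h(\rvew_1^n) + \log|\det\bA_n|$. Because $A(z)$ has $a_0 = 1$, the lower-triangular Toeplitz matrix $\bA_n$ has unit diagonal, so $|\det\bA_n|=1$ and $\bar{h}(\rvax) = \bar{h}(\rvaw)$. In contrast, $\tilde\bA_n$ is lower triangular with constant diagonal $\tilde a_0 = \prod_i|p_i|^{-1}$, hence $|\det\tilde\bA_n|=|\tilde a_0|^n$ and $\bar{h}(\tilde\rvax) = \bar{h}(\rvaw)+\log|\tilde a_0| = \bar{h}(\rvaw)-\sum_{i=1}^M\log|p_i|$. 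Subtracting yields the claimed gap. To match the frequency-domain form in~\eqref{eq:gap}, I would apply Jensen's formula to the stable, minimum-phase filter $\tilde A$: since $|A\ejw|=|\tilde A\ejw|$ on the unit circle, one has $\intpipi{\log|A^{-1}\ejw|} = -\log|\tilde a_0| = \sum_i\log|p_i|$.

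The main obstacle is rigorously justifying the water-filling reduction in the first step, namely that for some $D>0$ the water level lies below \emph{all} eigenvalues of the covariance matrices of $\rvex_1^n$ and $\tilde\rvex_1^n$, uniformly in $n$. For $\tilde\rvax$, Szegő's theorem together with the hypothesis that $\rvaw$ is stationary with bounded differential entropy rate (so that its PSD $S_w$ is essentially bounded below) gives a positive asymptotic lower bound on the minimum eigenvalue of the covariance of $\tilde\rvex_1^n$. For the non-stationary $\rvax$, the key observation is that $A^{-1}(z)=\prod_i(1-p_i z^{-1})$ is an FIR polynomial of degree $M$ with bounded coefficients, so $\bA_n^{-1}$ is a banded Toeplitz matrix whose operator norm is uniformly bounded in $n$; this yields a uniform lower bound on the smallest singular value of $\bA_n$, and therefore on the smallest eigenvalue of the covariance of $\rvex_1^n$. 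Once these spectral bounds are in place, the determinant computation above extends the result of~\cite{hasari80,grahas08} to arbitrary Gaussian stationary $\rvaw_1^\infty$, and not only to the i.i.d. innovations treated in the classical proofs.
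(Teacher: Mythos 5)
Your approach is genuinely different from the paper's and, if sound, would be attractive for its elementarity, but there is a real gap that your proposal flags and then tries to close incorrectly. You write that the hypothesis of bounded differential entropy rate for the stationary Gaussian process $\rvaw$ implies its PSD $S_w$ is essentially bounded below; this is false. Bounded differential entropy rate is equivalent to $\intpipi \log S_w(\w)\,d\w$ being finite, which permits $S_w$ to vanish at frequencies (e.g.\ $S_w(\w)\sim|\w|^{\alpha}$ near $\w=0$), so the smallest eigenvalue of $\bK_{\rvew^1_n}$ can tend to $0$ as $n\to\infty$. Consequently there need not exist a single $D>0$ for which the reverse water-filling level lies below every eigenvalue of $\bK_{\rvex^1_n}$ and $\bK_{\tilde\rvex^1_n}$ uniformly in $n$, so the identity $R_{\cdot,n}(D)=\tfrac1n h(\cdot)-\tfrac12\log(2\pi e D)$ cannot be assumed, and the reduction to comparing entropy rates breaks down. (Your determinant computations for $\bar h(\rvax)$ and $\bar h(\tilde\rvax)$ are themselves correct, and the observation that $\bA_n^{-1}$ is banded with bounded operator norm is fine; it is only the lower bound on $\lambda_{\min}(\bK_{\rvew^1_n})$ that fails.)

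Even setting that aside, your argument would establish~\eqref{eq:gap} only in the Shannon-lower-bound regime of sufficiently small $D$, whereas the theorem as stated (and the paper's proof) impose no such restriction. The paper instead proceeds by taking a realization $\rvey=C\rvax+\rvau$ of $R_{\rvax,n}(D)$, passing it through the all-pass Blaschke product $B(z)$ of $A(z)$ (which preserves mutual information and asymptotic distortion, and maps $\rvax$ to $\tilde\rvax$), and adding a short Gaussian perturbation $\rvad_1^M$ to the error; Theorem~\ref{thm:eg_n_instate_w_disturb} then shows that the entropy of $\tilde\rvau+\rvad$ recovers the $n\Gsp$ lost by $B$, yielding $R_{\rvax}(D)-R_{\tilde\rvax}(D)\ge \sum_i\log|p_i|$ for all $D$, with a symmetric argument for the reverse inequality. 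If you want to salvage your route, you would either have to strengthen the hypothesis on $\rvaw$ to $\operatorname*{ess\,inf}_\w S_w(\w)>0$ (which is strictly stronger than what the theorem assumes), or find a different way to control the small eigenvalues; and you would still need a separate argument outside the SLB regime.
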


Thanks to the ideas developed in the previous sections, it is possible to give an intuitive outline of the proof of this theorem (given in the appendix, page~\pageref{proof:RDF_non_stat}) by using a sequence of block diagrams.
More precisely, consider the diagrams shown in Fig.~\ref{fig:rdfnsproof}.
\begin{figure}[t]
\centering
\input{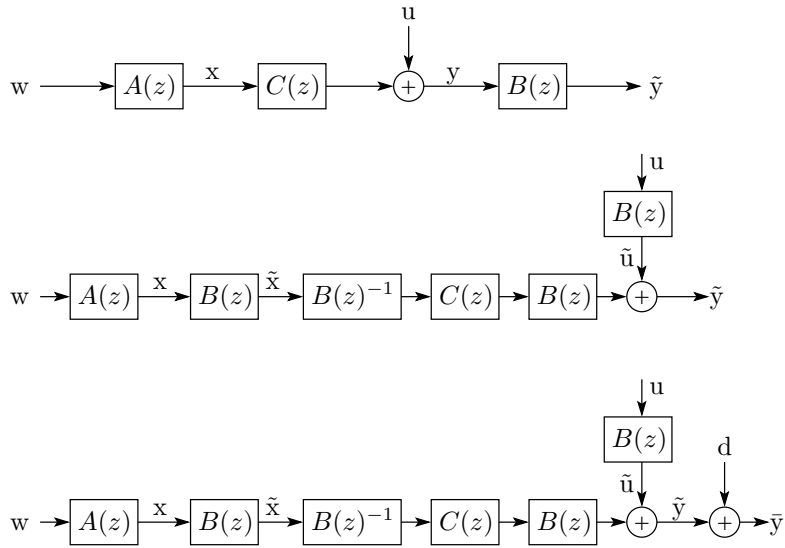}
\caption{Block-diagram representation of the changes of variables in the proof of Theorem~\ref{thm:RDF_non_stat}.}
\label{fig:rdfnsproof}
\end{figure}
In the top diagram in this figure, suppose that $\rvay=C\rvax+\rvau$ realizes the RDF for the non-stationary 
source $\rvax$.
The sequence $\rvau$ is independent of $\rvex$, and the linear filter $C(z)$ is such that the error $(\rvay-\rvax)\Perp \rvay$ 
(a necessary condition for minimum MSE optimality).
The filter $B(z)$ is the Blaschke product of $A(z)$ (see~\eqref{eq:B_def} in the appendix) (a stable, NMP filter with unit frequency response magnitude such that 
$\tilde{\rvax} = B \rvax$).

If one now moves the filter $B(z)$ towards the source, then the middle diagram in Fig.~\ref{fig:rdfnsproof} is obtained.
By doing this, the stationary source $\tilde{\rvax}$ appears with an additive error signal 
$\tilde{\rvau}$ that has the same asymptotic variance as $\rvau$, reconstructed as 
$\tilde{\rvay}=C\tilde{\rvax} + \tilde{\rvau}$.
From the invertibility of $B(z)$, it also follows that the mutual information rate between 
$\tilde{\rvax}$ and $\tilde{\rvay}$
equals that between 
$\rvax$ and $\rvay$.
Thus, the channel 
$\tilde{\rvay}=C\tilde{\rvax}+\tilde{\rvau}$ 
has the same rate and distortion as the channel
$\rvay =C\rvax+\rvau$.

However, if one now adds a short disturbance $\rvad$ to the error signal $\tilde{\rvau}$ 
(as depicted in the bottom diagram of Fig.~\ref{fig:rdfnsproof}),
then the resulting additive error term $\bar{\rvau}=\tilde{\rvau}+\rvad$ will be independent of $\tilde{\rvax}$ and
will have the same asymptotic variance as $\tilde{\rvau}$.
However, the differential entropy rate of $\bar{\rvau}$ will exceed that of $\tilde{\rvau}$ by the RHS of~\eqref{eq:gap}.
This will make the mutual information rate between 
$\tilde{\rvax}$ and $\bar{\rvay}$ to be less than that between 
$\tilde{\rvax}$ and $\tilde{\rvay}$ by the same amount.
Hence, 
$R_{\tilde{\rvex}}(D)$ be at most 
$R_{\rvex}(D) - \sumfromto{i=1}{M}\log\abs{p_{i}}$.
A similar reasoning can be followed to prove that 
$R_{\rvex}(D)-R_{\tilde{\rvex}}(D)\leq  \sumfromto{i=1}{M}\log\abs{p_{i}}$.

%

%

\subsection{Networked Control}
Here we revisit the setup shown in Fig.~\ref{fig:fbksystem} and discussed in Section~\ref{sec:Intro}.
Recall  from~\eqref{eq:martins_I_bound} that, for this general class of networked control systems, it was shown in~\cite[Lemma 3.2]{mardah05} that
\begin{align}
 \lim_{n\to\infty}\frac{1}{n}I(\rvex_{0}; \rvay_{1}^{n})
 \geq \Sumover{\abs{p_{i}}>1}\log\abs{p_{i}},
\end{align}
where $\set{p_{i}}_{i=1}^{M}$ are the poles of $P(z)$ (the plant in Fig.~\ref{fig:fbksystem}).

By using the results obtained in Section~\ref{sec:entropy_gain_initial_stat} we show next that equality holds in~\eqref{eq:martins_I_bound} provided the feedback channel 
satisfies the following assumption:

\begin{assu}\label{assu:fbck_channel}
The feedback channel in Fig.~\ref{fig:fbksystem} can be written as
\begin{align}
 \rvaw= AB \rvav + BF(\rvac), \label{eq:channel}
\end{align}
where
\begin{enumerate}
 \item $A$ and $B$ are stable rational transfer functions such that $AB$ is biproper,  $ABP$  has the same unstable poles as $P$, and the feedback $AB$ stabilizes the plant $P$.
 \item $F$ is any (possibly non-linear) operator such that $\tilde{\rvac}\eq F(\rvac)$ satisfies 
 $\frac{1}{n}h(\tilde{\rvac}_{1}^{n})<K$, for all $n\in\Nl$, and
 \item $\rvac_{1}^{\infty}\Perp (\rvex_{0},\rvau_{1}^{\infty})$.
 \finenunciado
 \end{enumerate}
 \end{assu}

 An illustration of the class of feedback channels satisfying this assumption is depicted on top of Fig.~\ref{fig:fbkplant}.
 Trivial examples of channels satisfying Assumption 5 are a Gaussian additive channel preceded and followed by linear operators~\cite{elia-04}.
 Indeed, when $F$ is an LTI system with a strictly causal transfer function, the feedback channel that satisfies Assumption~\ref{assu:fbck_channel} is widely known as a \emph{noise shaper with input pre and post filter}, used in, e.g.~\cite{silder11,silder11b,yuksel12,fremid11}.
 \begin{figure}[t]
  \centering
  \input{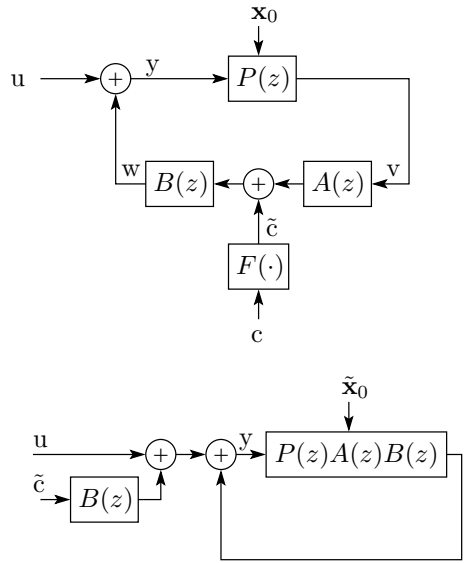}
  \caption{Top: The class of feedback channels described by Assumption~\ref{assu:fbck_channel}. Bottom: an equivalent form.}
  \label{fig:fbkplant}
 \end{figure}

 \begin{thm}\label{thm:equality_in_martins}
  In the networked control system of Fig.~\ref{fig:fbksystem}, suppose that the feedback channel satisfies Assumption~\ref{assu:fbck_channel} and that the input $\rvau_{1}^{\infty}$ is entropy balanced.
  If the random initial state of the plant $P(z)$, with poles $\set{p_{i}}_{i}^{M}$, satisfies $|h(\rvex_{0})|<\infty$, then
  \begin{align}
   \lim_{n\to\infty}\frac{1}{n}I(\rvex_{0}; \rvay_{1}^{n})
   = \sumover{\abs{p_{i}}>1}\log\abs{p_{i}}.
  \end{align}
\finenunciado
 \end{thm}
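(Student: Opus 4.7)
The strategy is to collapse the closed loop into a single LTI filter driven by an entropy-balanced input and then invoke Theorem~\ref{thm:eg-due-to-random-xo}. Since the lower bound in~\eqref{eq:martins_I_bound} is already established in~\cite{mardah05}, the task is to prove the matching upper bound.

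Under Assumption~\ref{assu:fbck_channel}, solving the loop equations $\rvaw=AB\,\rvax+B\tilde{\rvac}$ with $\rvax = P\rvay+\bar{\rvay}$ (and the summer convention underlying $G=1/(1+P)$ in Fig.~\ref{fig:fbksystem}-right) for $\rvay$ in terms of $\rvau$, $\tilde{\rvac}$ and $\bar{\rvay}$ gives, after a direct computation,
\begin{align}
 \rvay \;=\; T\,(\rvau - B\tilde{\rvac}) + \bar{\rvay}', \quad T\eq \tfrac{1}{1+ABP}, \quad \bar{\rvay}' \eq -TAB\,\bar{\rvay},
\end{align}
where $\bar{\rvay}'$ is a deterministic linear function of $\rvex_0$. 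Closed-loop stability renders $T$ stable; moreover, since $A$, $B$ are stable and ``$ABP$ has the same unstable poles as $P$'' (Assumption~\ref{assu:fbck_channel}), the NMP zeros of $T$ are precisely the unstable poles $\set{p_i:|p_i|>1}$ of $P$, counted with multiplicity. After the harmless rescaling noted under Assumption~\ref{assu:G_Factorized}, $T$ therefore meets Assumption~\ref{assu:zeros_of_G} with $m=\#\set{i:|p_i|>1}$ NMP zeros.

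By Assumption~\ref{assu:fbck_channel}(3), $\tilde{\rvac}=F(\rvac)\Perp \rvau$; since $B$ is stable LTI and $\rvau$ is entropy-balanced, Lemma~\ref{lem:sum_yields_entropy_balanced} implies that $\rvau - B\tilde{\rvac}$ is entropy-balanced too. Because $(\rvau,\tilde{\rvac})\Perp \rvex_0$ and $\bar{\rvay}'$ is deterministic given $\rvex_0$,
\begin{align}
 \tfrac{1}{n} I(\rvex_0;\rvey^1_n) \;=\; \tfrac{1}{n}\Bigl(h\bigl(\bT_n(\rveu^1_n-\bB_n\tilde{\rvec}^1_n)+\bar{\rvey}'^{1}_n\bigr) - h\bigl(\bT_n(\rveu^1_n-\bB_n\tilde{\rvec}^1_n)\bigr)\Bigr),
\end{align}
which is exactly the per-sample entropy gain of the filter $T$ with entropy-balanced input $\rvau-B\tilde{\rvac}$ and natural response $\bar{\rvay}'$ induced by $\rvex_0$ (with $|h(\rvex_0)|<\infty$). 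Applying Theorem~\ref{thm:eg-due-to-random-xo} then yields the limit $\sum_{i=1}^m \log|\rho_{\iota(i)}| = \sum_{|p_i|>1}\log|p_i|$, matching~\eqref{eq:martins_I_bound} and establishing equality.

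\textbf{Main obstacle.} The delicate point is that Theorem~\ref{thm:eg-due-to-random-xo} treats a random initial state of dimension equal to the order of the filter, whereas here $\rvex_0\in\Rl^p$ is the plant's initial state only, while $T$ has order $p+\deg A+\deg B$. The initial state of $T$ induced by $\rvex_0$ therefore lies in a strict $p$-dimensional subspace of the closed-loop state space. Handling this requires Corollary~\ref{coro:eg_due_to_xo_ineq}, and the technical heart is verifying its non-degeneracy hypothesis: that the embedding $\bPhi$ of $\rvex_0$ into the closed-loop state makes $\rows[\bQ_n]{1}{m}\bC_n\bPhi$ full row-rank with its smallest singular value bounded away from zero uniformly in $n$ (where $\bQ_n^T,\bC_n$ are as in the proof of Theorem~\ref{thm:eg-due-to-random-xo} applied to $T$). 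The clause ``$ABP$ has the same unstable poles as $P$'' in Assumption~\ref{assu:fbck_channel} is decisive: it ensures each of the $m$ NMP zeros of $T$ corresponds to a distinct unstable mode of $P$ that is excited by the full-entropy $\rvex_0$, so the lower bound on $\lambda_{\min}(\block[\bQ_n]{1}{m}(\block[\bQ_n]{1}{m})^T)$ established in the proof of Theorem~\ref{thm:eg_n_instate_w_disturb} carries over to $\rows[\bQ_n]{1}{m}\bC_n\bPhi$.
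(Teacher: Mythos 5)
Your overall strategy is the right one, and the first step of collapsing the loop to $\rvay = T(\rvau - B\tilde{\rvac}) + \bar{\rvay}'$ with $T=1/(1+ABP)$ and $\bar{\rvay}'=-TAB\bar{\rvay}$ is a correct transfer-function computation. However, there is a genuine gap exactly where you flagged it, and the paper's proof avoids it by choosing a different decomposition.

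The difficulty is this: you treat the entire closed loop as a single filter $T$ of order $q=p+\deg A+\deg B$, with a random initial state that lives only in a $p$-dimensional subspace. To close the argument via Corollary~\ref{coro:eg_due_to_xo_ineq} you would need to prove that $\rows[\bQ_n]{1}{m}\bC_n\bPhi$ is full row-rank with singular values bounded away from zero uniformly in $n$, where now $\bC_n$ is the natural-response map of a filter built from $-TAB$, and $\bQ_n$ comes from the SVD of $\bT_n$. Your final sentence merely asserts that the no-unstable-pole-zero-cancellation clause makes this ``carry over'' from the bound $\lim_n\lambda_{\min}({}^1[\bQ_n]_m({}^1[\bQ_n]_m)^T)>0$ established for the FIR case in Theorem~\ref{thm:eg_n_instate_w_disturb}. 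This is precisely the non-trivial content and there is no obvious argument for it: the no-cancellation condition fixes the \emph{location} of the NMP zeros of $T$ but says nothing about how the $p$-dimensional image of $\rvex_0$ under the closed-loop natural-response map sits relative to the shrinking left singular directions of $\bT_n$. As written, the step is a claim, not a proof.

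The paper sidesteps this entirely by using Lemma~\ref{lem:initial_states} to write $\rvay = \Lambda\cdot\tilde{G}\cdot\tilde{\rvau}$, with $\Lambda(z)$ the denominator polynomial of $P$ (an FIR biproper filter of order $p$ whose zeros are exactly the poles of $P$) and $\tilde{G}=\Theta/(\Theta\Lambda+\Gamma N)$ stable and minimum phase. The crucial payoff is that the random initial state attached to the NMP factor $\Lambda$ is \emph{exactly} $\rvex_0\in\Rl^p$, i.e., of dimension equal to the filter order. So the situation is literally that of Theorem~\ref{thm:eg-due-to-random-xo}, whose proof already establishes the required non-degeneracy via the non-singular upper-triangular matrix $\bE_p$ and the bound on $\lambda_{\min}({}^1[\bQ_n]_m({}^1[\bQ_n]_m)^T)$ — no new rank verification is needed. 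After factoring out the stable MP part $\tilde{G}$ (which preserves entropy balance and contributes no gain by Corollary~\ref{coro:MP_filters_no_EG}) and handling the $\rves_0$ and $\bar{\bC}_n\rvex_0$ cross-terms as in Theorem~\ref{thm:eg-due-to-random-xo}, the result follows. If you want to rescue your approach, you essentially need to reprove something equivalent to Lemma~\ref{lem:initial_states}; the cleaner route is to adopt the paper's $\Lambda\cdot\tilde{G}$ factorization from the start.
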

\begin{proof}
 Let 
 $P(z)=N(z)/\Lambda (z)$ and 
 $T(z)\eq A(z)B(z)=\Gamma(z)/\Theta(z)$.
 Then, from Lemma~\ref{lem:initial_states} (in the appendix), 
the output $\rvey^1_n$ can be written as
 \begin{align}
  \rvay = \underbrace{\Lambda}_{\text{init. state $\rvex_{0}$}} \cdot 
	  \underbrace{\frac{\Theta }{\Theta \Lambda + \Gamma N}}_{\eq \tilde{G}\text{, init. state $\set{\rvex_{0},\rves_{0}}$}} \tilde{\rvau},
	  \label{eq:y_as_Gu}
 \end{align}
where $\rves_{0}$ is the initial state of $T(z)$ and 
\begin{align}
 \tilde{\rvau} \eq u + B \tilde{\rvac}.
\end{align}
(see Fig.~\ref{fig:fbkplant} Bottom).
Then 
\begin{align}
 I(\rvex_{0};\rvey^{1}_{n}) 
 &
 =
 h(\rvey^{1}_{n}) -   h(\rvey^{1}_{n}|\rvex_{0})
 \\&
 =
 h(\rvey^{1}_{n}) -   h (\boldsymbol{\Lambda}_{n} [\tilde{\bG}_{n}\tilde{\rveu}^{1}_{n} + \tilde{\bC}_{n}\rves_{0} ]  )
 \\&
 =
 h(\boldsymbol{\Lambda}_{n} [\tilde{\bG}_{n}\tilde{\rveu}^{1}_{n} + \tilde{\bC}_{n}\rves_{0} +\bar{\bC}_{n}\rvex_{0}] + \bC_{n}\rvex_{0}) -   h(\boldsymbol{\Lambda}_{n} [\tilde{\bG}_{n}\tilde{\rveu}^{1}_{n} + \tilde{\bC}_{n}\rves_{0} ] )
 \\&
 =
 h(\boldsymbol{\Lambda}_{n} [\tilde{\bG}_{n}\tilde{\rveu}^{1}_{n} + \tilde{\bC}_{n}\rves_{0} +\bar{\bC}_{n}\rvex_{0}] + \bC_{n}\rvex_{0}) -   
 h(\tilde{\bG}_{n}\tilde{\rveu}^{1}_{n} + \tilde{\bC}_{n}\rves_{0}  ),
 \label{eq:I_as_h}
\end{align}
 where $\tilde{\bC}_{0}$ maps the initial state $\rves_{0}$ to $\rvey^{1}_{n}$,
 $\bar{\bC}_{n}$ maps the initial state $\rvex_{0}$ to the output of $\tilde{G}(z)$,
 and $\bC_{n}$ maps the initial state $\rvex_{0}$ (of $\Lambda(z)$) to $\rvey^{1}_{n}$.
Since $\rvau_{1}^{\infty}$ is entropy balanced and $\tilde{\rvac}_{1}^{\infty}$ has finite entropy rate, it follows from Lemma~\ref{lem:sum_yields_entropy_balanced} that $\tilde{\rvau}_{1}^{\infty}$ is entropy balanced as well.
Thus, we can proceed as in the proof of Theorem~\ref{thm:eg-due-to-random-xo} to conclude that
 \begin{align}
\lim_{n\to\infty}\frac{1}{n}I(\rvex_{0}; \rvay_{1}^{n})
=
\sumover{\abs{p_{i}}>1}\log\abs{p_{i}}.
\label{eq:I_as_sum_log}
 \end{align}
 This completes the proof.
\end{proof}

%

%

\subsection{The Feedback Channel Capacity of (non-white) Gaussian Channels}
Consider a non-white additive Gaussian channel of the form
\begin{align}
 \rvay_{k}= \rvax_{k} + \rvaz_{k},
\end{align}
where the input $\rvax$ is subject to the power constraint
\begin{align}
 \lim_{n\to\infty}\frac{1}{n}\expe{\norm{\rvex^{1}_{n}}^{2}}\leq P,
\end{align}
and
$\rvaz_{1}^{\infty}$ is a stationary Gaussian process. 

The feedback information capacity of this channel is realized by a Gaussian input $\rvax$, and is given by 
\begin{align}
 C_{\text{FB}} = \lim_{n\to\infty} 
 \max_{\bK_{\rvex^{1}_{n}}: \frac{1}{n}\tr{\bK_{\rvex^{1}_{n}}}\leq P}
 I(\rvex^{1}_{n};\rvey^{1}_{n}),
\end{align}
where $\bK_{\rvex^{1}_{n}}$ is the covariance matrix of $\rvex^{1}_{n}$ and, for every $k\in\Nl$, the input $\rvex_{k}$ is allowed to depend upon the channel outputs $\rvay_{1}^{k-1}$ (since there exists a causal, noise-less feedback channel with one-step delay).

In~\cite{kim-yh10}, it was shown that 
if $\rvaz$ is an auto-regressive moving-average process of $M$-th order, then
$C_{\text{FB}}$ can be achieved by the scheme shown in Fig.~\ref{fig:Kim_FBK_Cap_system}.
In this system, $B$ is a strictly causal and stable finite-order filter and $\rvav_{1}^{\infty}$ is Gaussian with $\rvav_{k}=0$ for all $k>M$ and such that
$\rvev^{1}_{n}$ is Gaussian with a positive-definite covariance matrix $\bK_{\rvev^{1}_{M}}$.
\begin{figure}[t]
\centering
 \input{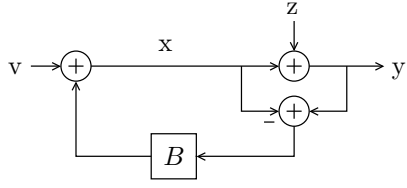}
 \caption{Block diagram representation a non-white Gaussian channel $\rvay=\rvax+\rvaz$ and the coding scheme considered in~\cite{kim-yh10}.}
 \label{fig:Kim_FBK_Cap_system}
\end{figure}

Here we use the ideas developed in Section~\ref{sec:entropy_gain_output_disturb} to show that \textbf{the information rate achieved by the capacity-achieving scheme proposed in~\cite{kim-yh10} drops to zero if there exists any additive disturbance of length at least $M$ and finite differential entropy affecting the output, no matter how small}.

To see this, notice that, in this case, and for all $n>M$, 
\begin{align}
 I(\rvax_{1}^{n};\rvay_{1}^{n}) 
 &= I(\rvav_{1}^{M};\rvay_{1}^{n})
 =
 h(\rvey^{1}_{n}) 
 - 
 h(\rvey^{1}_{n}|\rvev^{1}_{n}) 
 \\&
 =
 h(\rvey^{1}_{n}) 
 - 
 h(( \bI_{n}+\bB_{n})\rvez^{1}_{n} + \rvev^{1}_{n}|\rvev^{1}_{M}) 
 \\&
 =
 h(\rvey^{1}_{n}) 
 - 
 h(( \bI_{n}+\bB_{n})\rvez^{1}_{n} |\rvev^{1}_{M})
 \\&
 =
 h(\rvey^{1}_{n}) 
 - 
 h(( \bI_{n}+\bB_{n})\rvez^{1}_{n}) 
 =
 h(\rvey^{1}_{n}) 
 - 
 h(\rvez^{1}_{n})
 \\&
 =
 h(( \bI_{n}+\bB_{n})\rvez^{1}_{n} + \rvev^{1}_{n})
 -
 h(\rvez^{1}_{n}),
\end{align}
since $\det(\bI_{n}+\bB_{n})=1$.
From Theorem~\ref{thm:eg_n_instate_w_disturb}, this gap between differential entropies is precisely the entropy gain introduced by $\bI_{n}+\bB_{n}$ to an input $\rvez^{1}_{n}$ when the output is affected by the disturbance $\rvev^{1}_{M}$.
Thus, from Theorem~\ref{thm:eg_n_instate_w_disturb}, the capacity of this scheme will correspond to
$\intpipi{\log\abs{1+B\ejw}}=\sumover{\abs{\rho_{i}}>1}\log\abs{\rho_{i}}$, where $\set{\rho_{i}}_{i=1}^{M}$ are the zeros of  $1+B(z)$, which is precisely the result stated in~\cite[Theorem 4.1]{kim-yh10}.

However, if the output is now affected by an additive disturbance $\rvad_{1}^{\infty}$ not passing through $B(z)$ such that $\rvad_{k}=0$, $\forall k>M$ and $|h(\rved^{1}_{M})|<\infty$, with $\rvad_{1}^{\infty}\Perp (\rvav_{1}^{M},\rvaz_{1}^{\infty})$, then we will have 
\begin{align}
 \rvey^{1}_{n} = \rvev^{1}_{n} + (\bI_{n} +\bB_{n})\rvez^{1}_{n} + \rved^{1}_{n}.
\end{align}
In this case, 
\begin{align}
 I(\rvax_{1}^{n};\rvay_{1}^{n}) 
 &= I(\rvav_{1}^{M};\rvay_{1}^{n})
 =
 h(\rvey^{1}_{n}) 
 - 
 h(\rvey^{1}_{n}|\rvev^{1}_{n}) 
 \\&
 =
 h(\rvey^{1}_{n}) 
 - 
 h(( \bI_{n}+\bB_{n})\rvez^{1}_{n} + \rvev^{1}_{n} + \rved^{1}_{n}|\rvev^{1}_{M}) 
 \\&
 =
 h(\rvey^{1}_{n}) 
 - 
 h(( \bI_{n}+\bB_{n})\rvez^{1}_{n} + \rved^{1}_{n}|\rvev^{1}_{M})
 \\&
 =
 h(\rvey^{1}_{n}) 
 - 
 h(( \bI_{n}+\bB_{n})\rvez^{1}_{n}+ \rved^{1}_{n}) .
\end{align}
But $\lim_{n\to\infty}\frac{1}{n}( h(( \bI_{n}+\bB_{n})\rvez^{1}_{n} + \rvev^{1}_{n}+ \rved^{1}_{n})
 -
 h(( \bI_{n}+\bB_{n})\rvez^{1}_{n}+ \rved^{1}_{n}))=0,$
 which follows directly from applying Theorem~\ref{thm:eg_n_instate_w_disturb} to each of the differential entropies.
Notice that this result holds irrespective of how small the power of the disturbance may be.

Thus, the capacity-achieving scheme proposed in~\cite{kim-yh10} (and further studied in~\cite{ardfra12}), although of groundbreaking theoretical importance, would yield zero rate in any practical situation, since every real signal is unavoidably affected by some amount of noise.

\section{Conclusions}\label{sec:conclusions}
This paper has provided a geometrical insight and rigorous results for characterizing the increase in differential entropy rate (referred to as entropy gain) introduced by passing an input random sequence through a discrete-time linear time-invariant (LTI) filter $G(z)$ such that the first sample of its impulse response has unit magnitude.
Our time-domain analysis allowed us to explain and establish under what conditions the entropy gain coincides with what was predicted by Shannon, who followed a frequency-domain approach to a related problem in his seminal 1948 paper.
In particular, we demonstrated that the entropy gain arises only if $G(z)$ has zeros outside the unit circle (i.e., it is non-minimum phase, (NMP)).
This is not sufficient, nonetheless, since letting the input and output be $\rvau$ and $\rvay=G\rvau$, the difference $h(\rvay_{1}^{n})-h(\rvau_{1}^{n})$ is zero for all $n$, yielding no entropy gain.
However, if 
the distribution of the input process $\rvau$ satisfies a certain regularity condition (defined as being ``entropy balanced'') and the output 
has the form $\rvay=G\rvau + \rvaz$, with $\rvaz$ being an output disturbance with bounded differential entropy, we have shown that the entropy gain can range from zero to the sum of the logarithm of the magnitudes of the NMP zeros of $G(z)$, depending on how $\rvaz$ is distributed.
A similar result is obtained if, instead of an output disturbance, we let $G(z)$ have a random initial state.
We also considered the difference between the differential entropy rate of the \textit{entire} (and longer) output of $G(z)$ and that of its input, i.e., $h(\rvay_{1}^{n+\eta}) -h(\rvau_{1}^{n})$, where $\eta+1$ is the length of the impulse response of $G(z)$.
For this purpose, we introduced the notion of ``effective differential entropy'', which can be applied to a random sequence whose support has dimensionality smaller than its dimension.
Interestingly, the effective differential entropy gain in this case, which is intrinsic to $G(z)$, is also the sum of the logarithm of the magnitudes of the NMP zeros of $G(z)$, without the need to add disturbances or a random initial state.
We have illustrated some of the implications of these ideas in three problems.
Specifically, we used the fundamental results here obtained to provide a simpler and more general proof to characterize the rate-distortion function for Gaussian non-stationary sources and MSE distortion.
Then, we applied our results to provide sufficient conditions for equality in an information inequality of significant importance in networked control problems.
Finally, 
we showed that the information rate of the capacity-achieving scheme proposed in~\cite{kim-yh10} for the autoregressive Gaussian channel with feedback drops to zero in the presence of any additive disturbance in the channel input or output of sufficient (finite) length, no matter how small it may be.

\appendix
\subsection{Proofs of Results Stated in the Previous Sections}
\begin{proof}[Proof of Proposition~\ref{prop:gaussian_is_entropy_balanced}]
Let $\sigsq_{\rvau}$ be the per-sample variance of $\rvau_{1}^{\infty}$, thus 
$h(\rveu^{1}_{n})=\frac{n}{2}\log(2\pi\expo{}\sigsq_{\rvau})$.
Let
 $\rvey^{\nu+1}_{n} \eq \bPhi_{n}\rveu^{1}_{n}$.
 Then 
 $\bK_{\rvey^{\nu+1}_{n}}
 =
 \sigsq_{\rvau}\bPhi_{n}\bPhi_{n}^{T}
 =
 \sigsq_{\rvau}
 \bI_{n-\nu}$, where $\bI_{n-\nu}$ is the $(n-\nu)\times (n-\nu)$ identity matrix.
 As a consequence, $h(\rvey^{\nu+1}_{n})=([n-\nu]/2)\log(2\pi\expo{}\sigsq_{\rvau})$, and thus
 $\lim_{n\to\infty}\frac{1}{n}(h(\rvey^{\nu+1}_{n})-h(\rveu^{1}_{n}))=0$.
\end{proof}

\begin{proof}[Proof of Lemma~\ref{lem:piecewiseconstant}]
 Let $\set{b_{\ell}}_{\ell=1}^{\infty}$ be the intervals (bins) in $\Rl$ where the sample PDF is constant.
 Let $\set{p_{\ell}}_{\ell=1}^{\infty}$ be the probabilities of these bins.
 Define the discrete random process $\rvac_{1}^{\infty}$, where $\rvac(i)=\ell$ if and only if $\rvau_i\in b_{\ell}$.
Let $\rvey^{\nu+1}_{n}\eq \bPhi_{n}\rveu^{1}_{n}$ where $\bPhi_{n}\in\Rl^{(n-\nu)\times n}$ has orthonormal rows.
Then 
 \begin{align}\label{eq:volvera}
  h(\rvey^{\nu+1}_{n}) 
 & = 
  h(\rvey^{\nu+1}_{n}|\rvec^{1}_{n}) + I(\rvec^{1}_{n};\rvey^{\nu+1}_{n})
  \\&
   \leq
    h({\rvey}^{\nu+1}_{n}|\rvec^{1}_{n}) + I(\rvec^{1}_{n};\rveu^{1}_{n}) ,
 \end{align}
where the inequality is due to the fact that $\rveu^{1}_{n}$ and $\rvey^{\nu+1}_{n}$ are deterministic functions of $\rveu^{1}_{n}$, and hence
$
\rvec^{1}_{n}
\longleftrightarrow
\rveu^{1}_{n}
\longleftrightarrow
\rvey^{\nu+1}_{n}
$.
 Subtracting $h(\rveu^{1}_{n})$ from~\eqref{eq:volvera} we obtain
 \begin{align}
   h(\rvey^{\nu+1}_{n})
   -
    h(\rveu^{1}_{n})
    &\leq 
    h({\rvey}^{\nu+1}_{n}|\rvec^{1}_{n}) + I(\rvec^{1}_{n};\rveu^{1}_{n}) -    h(\rveu^{1}_{n})
    \\&
    =
    h({\rvey}^{\nu+1}_{n}|\rvec^{1}_{n}) - h(\rveu^{1}_{n}|\rvec^{1}_{n}).
 \end{align}
 Hence,
 \begin{align}
  \lim_{n\to\infty}\frac{1}{n}\left(h({\rvey}^{\nu+1}_{n}) -h(\rveu^{1}_{n})  \right)
  \leq
 \lim_{n\to\infty}\frac{1}{n}\left(h({\rvey}^{\nu+1}_{n}|\rvec^{1}_{n}) -h(\rveu^{1}_{n}|\rvec^{1}_{n})  \right)=0
 \end{align}
where the last equality follows from Lemma~\ref{lem:entropy_bounded_different_supports} (see Appendix~\ref{subsec:technical_lemmas})\, whose conditions are met because, given $\bc^{1}_{n}$, the sequence $\rveu^{1}_{n}$ has independent entries each of them distributed uniformly over a possibly different interval with bounded and positive measure.
The opposite inequality is obtained by following the same steps as in the proof of Lemma~\ref{lem:entropy_bounded_different_supports},
from~\eqref{eq:prior_to} onwards, which completes the proof.
\end{proof}

\begin{proof}[Proof of Lemma~\ref{lem:sum_yields_entropy_balanced}]
 Let $\rvey^{1}_{n}\eq [\bPsi_{n}^{T}|\bPhi_{n}^{T}]^{T}\rvew^{1}_{n}$, where $[\bPsi_{n}^{T}|\bPhi_{n}^{T}]^{T}\in\Rl^{n\times n}$ is a unitary matrix and where 
 $\bPsi_{n}\in\Rl^{\nu\times n}$ and $\bPhi_{n}\in\Rl^{(n-\nu)\times n}$ have orthonormal rows.
 Then
 \begin{align}
  h(\rvey^{\nu+1}_{n})
  & 
  =
  h(\rvey^{1}_{n})
  -
  h(\rvey^{1}_{\nu}|\rvey^{\nu+1}_{n})
\\& 
  =
  h(\rvew^{1}_{n})
  -
  h(\rvey^{1}_{\nu}|\rvey^{\nu+1}_{n})\label{eq:lalala}
  \end{align}
We can lower bound $h(\rvey^{1}_{\nu}|\rvey^{\nu+1}_{n})$ as follows:
  \begin{align}
   h(\rvey^{1}_{\nu}|\rvey^{\nu+1}_{n})
   &=
   h(\bPsi_{n}\rveu^{1}_{n} + \bPsi_{n}\rvev^{1}_{n}
   \,|\,
   \bPhi_{n}\rveu^{1}_{n} + \bPhi_{n}\rvev^{1}_{n})
   \\&
   \geq
   h(\bPsi_{n}\rveu^{1}_{n} + \bPsi_{n}\rvev^{1}_{n}
   \,|\,
   \bPhi_{n}\rveu^{1}_{n} + \bPhi_{n}\rvev^{1}_{n}\;,\; \rvev^1_{n})
\\&
   =
   h(\bPsi_{n}\rveu^{1}_{n} 
   \,|\,
   \bPhi_{n}\rveu^{1}_{n} + \bPhi_{n}\rvev^{1}_{n}\;,\; \rvev^1_{n})
\\&
   =
   h(\bPsi_{n}\rveu^{1}_{n} 
   \,|\,
   \bPhi_{n}\rveu^{1}_{n} ,\;\rvev^{1}_{n})
\\&
   =
   h(\bPsi_{n}\rveu^{1}_{n} 
   \,|\,
   \bPhi_{n}\rveu^{1}_{n}).
   \end{align}
Substituting this result into~\eqref{eq:lalala}, dividing by $n$ and taking the limit as $n\to\infty$, and recalling that, 
since $\rvau_{1}^{\infty}$ is entropy balanced, then $\lim_{n\to\infty}\frac{1}{n}h(\bPsi_{n}\rveu^{1}_{n} |   \bPhi_{n}\rveu^{1}_{n})=0$, lead us to $\lim_{n\to\infty}\frac{1}{n}(h(\bPhi_n\rvew^1_n)-h(\rvew^1_n)) \leq 0$. 

The opposite bound over $h(\rvey^{1}_{\nu}|\rvey^{\nu+1}_{n})$ can be obtained from
\begin{align}
h(\rvey^{1}_{\nu}|\rvey^{\nu+1}_{n})
&=
   h(\bPsi_{n}\rveu^{1}_{n} + \bPsi_{n}\rvev^{1}_{n}
   \,|\,
   \bPhi_{n}\rveu^{1}_{n} + \bPhi_{n}\rvev^{1}_{n})
\leq
   h(\bPsi_{n}\rveu^{1}_{n} + \bPsi_{n}\rvev^{1}_{n})
\leq h(\bPsi_{n}(\rvew_G)^1_n),
\end{align}
where $(\rvew_G)^1_n$ is a jointly Gaussian sequence with the same second-order moment as $\rvew^1_n$.
Therefore, $h(\bPsi_{n}(\rvew_G)^1_n) \leq \frac{\nu}{2} \log (2\pi\expo{} \max\set{\sigma^2_{\rvaw}(i)})$, with $\sigma^2_{\rvaw}(i)$ being the variance of the sample $\rvaw(i)$. 
The fact that $\rvew^1_n$ has a bounded second moment at each entry $\rvaw(i)$, and replacing the latter inequality in~\eqref{eq:lalala}, satisfy
$\lim_{n\to\infty}-\frac{1}{n} h(\rvey^1_\nu|\rvey^{\nu+1}_n) = \lim_{n\to\infty}\frac{1}{n}(h(\bPhi_n\rvew^1_n)-h(\rvew^1_n)) \geq 0$, which finishes the proof.

\end{proof}

\begin{proof}[Proof of Lemma~\ref{lem:filtering_preserves_entropy_balance}]
Let $\rvey^{1}_{n}\eq [\bPsi_{n}^{T} | \bPhi_{n}^{T}]^{T}\rvew^{1}_{n}$ where $[\bPsi_{n}^{T} | \bPhi_{n}^{T}]^{T}\in\Rl^{n\times n}$ is a unitary matrix and where 
 $\bPsi_{n}\in\Rl^{\nu\times n}$ and $\bPhi_{n}\in\Rl^{(n-\nu)\times n}$ have orthonormal rows.
Since $\rvew^{1}_{n} = \bG_{n}\rveu^{1}_{n}$, we have that
\begin{align}
 \bPsi_{n}\rvew^{1}_{n} = \bPsi_{n}\bG_{n}\rveu^{1}_{n}.
\end{align}
 Let $\bPsi_{n}\bG_{n}=\bA_{n}\Sigma_{n}\bB_{n}$ be the SVD of $\bPsi_n\bG_{n}$, where $\bA_{n}\in\Rl^{\nu\times\nu}$ is an orthogonal matrix, $\bB_{n}\in\Rl^{\nu\times n}$ has orthonormal rows and 
 $\Sigma_{n}\in\Rl^{\nu\times \nu}$ is a diagonal matrix with the singular values of $\bPsi_{n}\bG_{n}$.
 Hence
 \begin{align}
  h(\bPsi_{n}\rvew^{1}_{n}) 
  =
  h(\bPsi_{n}\bG_{n}\rveu^{1}_{n})
  =
  h(\bA_{n}\Sigma_{n}\bB_{n}\rveu^{1}_{n})
  =
  \log\det(\Sigma_{n}) +  h(\bB_{n}\rveu^{1}_{n}).
 \end{align}
 It is straightforward to show that the diagonal entries in $\Sigma_{n}$ are 
 lower and upper bounded by the smallest and largest singular values of $\bG_{n}$, say $\sigma_{\min}(n)$ and $\sigma_{\max}(n)$, respectively, which yields
 \begin{align}
  \nu\log\sigma_{\text{min}}(n) +  h(\bB_{n}\rveu^{1}_{n}) 
  \leq 
  h(\bPsi_{n}\rvew^{1}_{n})   
  \leq 
  \nu\log\sigma_{\text{max}}(n) +  h(\bB_{n}\rveu^{1}_{n}) .
 \end{align}
 But from Lemma~\ref{lem:singular_values_bounded}, 
 $\lim_{n\to\infty}(1/n)\sigma_{\text{min}}(n) =\lim_{n\to\infty}(1/n)\sigma_{\text{max}}(n)=0$, and thus
 \begin{align}
  \lim_{n\to\infty} \frac{1}{n}h(\bPsi_{n}\rvew^{1}_{n}) =     
  \lim_{n\to\infty} \frac{1}{n}h(\bB_{n}\rveu^{1}_{n}) =0,
 \end{align}
where the last equality is due to the fact that $\rvau_{1}^{\infty}$ is entropy balanced.
This completes the proof.
\end{proof}

\begin{proof}[Proof of Lemma~\ref{lem:singular_values_bounded}]
The fact that $\lim_{n\to\infty}\lambda_{n}(\bA_{n}\bA_{n}^{T})$ is upper bounded follows directly from the fact that $A(z)$ is a stable transfer function.
On the other hand,  $\bA_{n}$ is positive definite (with all its eigenvalues equal to $1$), and so $\bA_{n}\bA_{n}^{T}$ is positive definite as well, with $\lim_{n\to\infty}\lambda_{1}(\bA_{n}\bA_{n}^{T})\geq 0$.
Suppose that 
$\lim_{n\to\infty}\lambda_{1}(\bA_{n}\bA_{n}^{T})=0$.
If this were true, then it would hold that
$\lim_{n\to\infty}\lambda_{n}(\bA_{n}^{-1}\bA_{n}^{-T})=\infty$.
But $\bA_{n}^{-1}$ is the lower triangular Toeplitz matrix  associated with $A^{-1}(z)$, which is stable (since $A(z)$ is minimum phase), implying that 
$\lim_{n\to\infty}\lambda_{n}(\bA_{n}^{-1}\bA_{1}^{-T})<\infty$, thus leading to a contradiction.
This completes the proof.
\end{proof}

\begin{proof}[Proof of Lemma~\ref{lem:gap_with_two_terms}]\label{proof:lem_gap_with_two_terms}
Since $\bQ_{n}$ is unitary, we have that
\begin{align}\label{eq:hy_hw}
 h(\rvey^{1}_{n}) 
 = 
 h(\bQ_{n}\rvey^{1}_{n}) 
 = 
 h(
 \overbrace{
 \underbrace{\bD_{n}\bR_{n}\rveu^{1}_{n}}_{\rvev^{1}_{n}} 
 + 
 \underbrace{\bQ_{n}\rvez^{1}_{n}}_{\bar{\rvez}^{1}_{n}}
 }^{\rvew^{1}_{n}}) 
 = 
 h( \rvew^{1}_{n}),
\end{align}
where 
\begin{align}
\rvew^{1}_{n}&\eq \rvev^{1}_{n} + \bar{\rvez}^{1}_{n},\\
\rvev^{1}_{n}&\eq \bD_{n}\bR_{n}\rveu^{1}_{n}, \\
\bar{\rvez}^{1}_{n}&\eq \bQ_{n}\rvez^{1}_{n}.
\end{align}
Applying the chain rule of differential entropy, we get
\begin{align}\label{eq:hw_split}
 h(\rvaw_{1}^{n}) & = h(\rvaw_{1}^{m}) + h(\rvaw_{m+1}^{n}|\rvaw_{1}^{m}). 
\end{align}
Notice that $\rvew^{1}_{m} = [\bD_{n}]^{1}_{m}\bR_{n}\rveu^{1}_{n} + [\bQ_{n}]^{1}_{m}\rvez^{1}_{n}$.
Thus, it only remains to determine the limit of $h(\rvaw_{m+1}^{n}|\rvaw_{1}^{m})$ as $n\to\infty$.
We will do this by deriving a lower and an upper bound for this differential entropy and show that these bounds converge to the same expression as $n\to\infty$.

To lower bound $h(\rvaw_{m+1}^{n}|\rvaw_{1}^{m})$ we proceed as follows
\begin{align}
 h(\rvaw_{m+1}^{n}|\rvaw_{1}^{m})
 &
 =
 h(\rvav_{m+1}^{n}+\,\bar{\rvaz}_{m+1}^{n}|\rvav_{1}^{m}+\,\bar{\rvaz}_{1}^{m}) \label{eq:hw|w}
\\&
\overset{(a)}{\geq} 
h(\rvav_{m+1}^{n}+\,\bar{\rvaz}_{m+1}^{n}|\rvav_{1}^{m},\,\bar{\rvaz}_{1}^{m})
\\&
=
h(\rvav_{m+1}^{n}+\,\bar{\rvaz}_{m+1}^{n}  , \rvav_{1}^{m}|\,\bar{\rvaz}_{1}^{m})
-
h(\rvav_{1}^{m}|\,\bar{\rvaz}_{1}^{m})
\\&
\overset{(b)}{=}
h(\rvav_{m+1}^{n}+\,\bar{\rvaz}_{m+1}^{n} , \rvav_{1}^{m}|\,\bar{\rvaz}_{1}^{m})
-
h(\rvav_{1}^{m})
\\&
=
h(\rvav_{m+1}^{n}+\,\bar{\rvaz}_{m+1}^{n} |\,\bar{\rvaz}_{1}^{m})
+
h(\rvav_{1}^{m}|\,\bar{\rvaz}_{1}^{m}, \rvav_{m+1}^{n}+\,\bar{\rvaz}_{m+1}^{n})
-
h(\rvav_{1}^{m})
\\&
\overset{(c)}{\geq}  
h(\rvav_{m+1}^{n}  |\,\bar{\rvaz}_{1}^{m})
+
h(\rvav_{1}^{m}|\,\bar{\rvaz}_{1}^{m} , \rvav_{m+1}^{n}+\,\bar{\rvaz}_{m+1}^{n})
-
h(\rvav_{1}^{m})
\\&
\overset{(d)}{=}
h(\rvav_{m+1}^{n})
+
h(\rvav_{1}^{m}|\,\bar{\rvaz}_{1}^{m} , \rvav_{m+1}^{n}+\,\bar{\rvaz}_{m+1}^{n})
-
h(\rvav_{1}^{m})
\\&
\overset{(e)}{\geq}
h(\rvav_{m+1}^{n})
+
h(\rvav_{1}^{m}|\rvav_{m+1}^{n})
-
h(\rvav_{1}^{m})
\\&
= 
h(\rvav_{1}^{n})
-
h(\rvav_{1}^{m})
= 
h(\rvau_{1}^{n})
-
h(\rvav_{1}^{m}), \label{eq:h(w|w)=h(u)+h(v)}
\end{align}
where $(a)$ follows from including $\bar{\rvaz}_1^{m}$ (or $\rvav_1^m$ as well) to the conditioning set, while $(b)$ and $(d)$ stem from the independence between $\rvau_1^\infty$ and $\bar{\rvaz}_1^\infty$. Inequality $(c)$ is a consequence of $h(X+Y)\geq h(X)$, and $(e)$ follows from including $\bar{\rvaz}_{m+1}^n$ to the conditioning set in the second term, and noting that $h(\rvav_1^m)$ is not reduced upon the knowledge of $\rvaz_1^n$. 

On the other hand,
\begin{align}
 h(\rvav_{1}^{m}) 
 = 
 h([\bD_{n}]^{1}_{m}\bR_{n}\rveu^{1}_{n}) 
 = 
 \Sumfromto{i=1}{m}\log d_{n,i}
 +
  h([\bR_{n}]^{1}_{m}\rveu^{1}_{n}), \label{eq:arowana}
\end{align}
then, by inserting~\eqref{eq:arowana} and~\eqref{eq:h(w|w)=h(u)+h(v)} in~\eqref{eq:hw|w}, dividing by $n$, and taking the limit $n\to\infty$, we obtain
%
\begin{align}
 \lim_{n\to\infty}\frac{1}{n} 
 h(\rvaw_{m+1}^{n}|\rvaw_{1}^{m})
 &
 \geq
 \lim_{n\to\infty}\frac{1}{n}
 \left( h(\rvau_{1}^{n})- \Sumfromto{i=1}{m}\log d_{n,i}
 -
 h([\bR_{n}]^{1}_{m}\rveu^{1}_{n}) 
 \right)
 \\&
 =
 \bar{h}(\rvau_{1}^{\infty})
 -
 \lim_{n\to\infty}\frac{1}{n}\Sumfromto{i=1}{m}\log d_{n,i}, \label{eq:the_lower_bound}
\end{align}
where the last equality is a consequence of the fact that $\rvau_{1}^{\infty}$ is entropy balanced.

We now derive an upper bound for $h(\rvaw_{m+1}^{n}|\rvaw_{1}^{m})$.
Defining the random vector 
$$
\rvex^{m+1}_{n}\eq [\bR_{n}]^{m+1}_{n}\rveu^{1}_{n},
$$ 
we can write
\begin{align}
 [\bD_{n}]^{m+1}_{n} \bR_{n}\rveu^{1}_{n} =  \Dmn\rvex^{m+1}_{n}
\end{align}
where
\begin{align}
 \Dmn\eq \diag\set{d_{n,m+1},d_{n,m+2},\ldots,d_{n,n} }.
\end{align}
Therefore,
\begin{align}
h(\rvaw_{m+1}^{n}|\rvaw_{1}^{m})
 &
\leq
h(\rvew^{m+1}_{n})
=
h(\Dmn\rvex^{m+1}_{n} + \bar{\rvez}^{m+1}_{n})
\\&
=
\log\det(\Dmn)
+
h(\rvex^{m+1}_{n} + (\Dmn)^{-1}\bar{\rvez}^{m+1}_{n})
.\label{eq:agnus_dei}
\end{align}
Notice that by Assumption~\ref{assu:z}, $\bar{\rvez}^{m+1}_{n}=[\bQ_{n}]^{m+1}_{n}\rvez^{1}_{n} = [\bQ_{n}]^{m+1}_{n}[\bPhi]^{1}_{n}\rves^{1}_{\kappa}$, and thus is restricted to the span of $[\bQ_{n}]^{m+1}_{n}[\bPhi]^{1}_{n}$
of dimension $\kappa_{n}\leq \kappa$, for all $n\geq m+\kappa$.
Then, for $n>m+\kappa_{n}$, one can construct a unitary matrix $\bH_n\eq (\bA_{n}^{T} | \bB_{n}^{T})^{T}\in\Rl^{(n-m)\times(n-m)}$, 
such that the rows of $\bA_{n}\in\Rl^{\kappa \times (n-m)}$ span the space spanned by the columns of $(\Dmn)^{-1}[\bQ_{n}]^{m+1}_{n}[\bPhi]^{1}_{n}$ and such that  $\bB_{n}(\Dmn)^{-1}[\bQ_{n}]^{m+1}_{n}[\bPhi]^{1}_{n}=0$.
Therefore, from~\eqref{eq:agnus_dei},
\begin{align}
h(\rvaw_{m+1}^{n}|\rvaw_{1}^{m})
 &
\leq 
\log\det(\Dmn)
+
h(\bH_n\rvex^{m+1}_{n} + \bH_n(\Dmn)^{-1}\bar{\rvez}^{m+1}_{n})
\nonumber
\\&
= 
\log\det(\Dmn)
+
h(\bB_{n}\rvex^{m+1}_{n})
+
h(\bA_{n}\rvex^{m+1}_{n} + \bA_{n}(\Dmn)^{-1}\bar{\rvez}^{m+1}_{n}|\bB_{n}\rvex^{m+1}_{n})
\nonumber
\\&
\leq
\log\det(\Dmn)
+
h(\bB_{n}\rvex^{m+1}_{n})
+
h(\bA_{n}\rvex^{m+1}_{n} + \bA_{n}(\Dmn)^{-1}\bar{\rvez}^{m+1}_{n})
\nonumber
\\&
\leq 
\log\det(\Dmn)
+
h(\bB_{n}\rvex^{m+1}_{n})
+ \frac{1}{2}\log\left(2\pi\expo{} \det\left(\bK_{\bA_n\rvex^{m+1}_n} + \bK_{\bA_n({}^{m+1}							[\bD_n]_n)^{-1}\bar{\rvez}^{m+1}_n}\right)\right)
\nonumber
\\&
\leq 
\log\det(\Dmn)
+
h(\bB_{n}\rvex^{m+1}_{n})
\nonumber
+
\frac{1}{2}\log\left(2\pi\expo{} 
\left[
\lambda_{\max}(\bK_{\rvex^{m+1}_{n}})
+  
\frac{\lambda_{\max}(\bK_{\bar{\rvez}^{m+1}_{n}})}{\lambda_{\min}(\Dmn)^{2}}
\right]^{\kappa_{n}} \right)\label{eq:laultima}
\end{align}
where $\bK_{\bA_n\rvex^{m+1}_n}$ and $\bK_{\bA_n({}^{m+1}[\bD_n]_n)^{-1}\bar{\rvez}^{m+1}_n}$ are the covariance matrices of $\bA_n\rvex^{m+1}_n$ and $\bA_n({}^{m+1}[\bD_n]_n)^{-1}\bar{\rvez}^{m+1}_n$, respectively, and where the last inequality follows from~\cite{fiedle71}.
The fact that
$\lambda_{\max}(\bK_{\rvex^{m+1}_{n}})$ and 
$\lambda_{\max}(\bK_{\bar{\rvez}^{m+1}_{n}})$
are bounded and remain bounded away from zero for all $n$, 
and the fact that $\lambda_{\text{min}}(\Dmn)$ either grows with $n$ or decreases sub-exponentially (since the $m$ first singular values decay exponentially to zero, with $|\det\bD_n|=1$),
imply in~\eqref{eq:laultima} that 
\begin{align}
 \lim_{n\to\infty}\frac{1}{n}   
    h(\rvaw_{m+1}^{n}|\rvaw_{1}^{m})
\leq 
 \lim_{n\to\infty}\frac{1}{n}   
\log\det(\Dmn)
+
 \lim_{n\to\infty}\frac{1}{n}   
h(\bB_{n}\rvex^{m+1}_{n}).
\end{align}
But the fact that $\det\bD_{n}=1$ implies that $\log\det(\Dmn)=-\sumfromto{i=1}{m}\log d_{n,i}$.
This, together with the assumption that $\rvau_{1}^{\infty}$ is entropy balanced yields
\begin{align}
 \lim_{n\to\infty}\frac{1}{n}   
 h(\rvaw_{m+1}^{n}|\rvaw_{1}^{m})
\leq 
\bar{h}(\rvau_{1}^{\infty})
-
\lim_{n\to\infty}\frac{1}{n}   
\Sumfromto{i=1}{m}\log d_{n,i},
\end{align}
which coincides with the lower bound found in~\eqref{eq:the_lower_bound}, completing the proof.
\end{proof}

\begin{proof}[Proof of Lemma~\ref{lem:hashimoto_IIR}]\label{proof:lem_hashimoto_IIR}
 The transfer function $G(z)$ can be factored as
 $G(z) = \tilde{G}(z)F(z)$,
where $\tilde{G}(z)$ is stable and minimum phase and $F(z)$ is stable with all the non-minimum phase zeros of $G(z)$, both being biproper rational functions.
From Lemma~\ref{lem:singular_values_bounded}, in the limit as $n\to\infty$, the eigenvalues of $\tilde{\bG}_{n}^{T}\tilde{\bG}_{n}$ are lower and upper bounded by 
$\lambda_{\text{min}}(\tilde{\bG}^T\tilde{\bG})$
and 
$\lambda_{\text{max}}(\tilde{\bG}^T\tilde{\bG})$, 
respectively, where 
$0<\lambda_{\text{min}}(\tilde{\bG}^T\tilde{\bG})\leq \lambda_{\text{max}}(\tilde{\bG}^T\tilde{\bG})<\infty$.
Let 
$\tilde{\bG}_{n}=
\tilde{\bQ}_n^{T}\tilde{\bD}_{n}\tilde{\bR}_{n}$ 
and 
$\bF_{n}=
\bQ_n^{T}\bD_{n}\bR_{n}
$ 
be the SVDs of $\tilde{\bG}_{n}$ and $\bF_{n}$, respectively, with 
$\tilde{d}_{n,1}\leq \tilde{d}_{n,2}\leq \cdots \leq \tilde{d}_{n,n}$
and 
$d_{n,1}\leq d_{n,2}\leq \cdots \leq d_{n,n}$
being the diagonal entries of the diagonal matrices 
$\tilde{\bD}_{n}$,
$\bD_{n}$,
respectively.
Then 
\begin{align}
 \bG_{n}^{T}\bG_{n} 
 &
 =  \bF_{n}^{T}\tilde{\bG}_{n}^{T}\tilde{\bG}_{n}\bF_{n}
 =
(\tilde{\bD}_{n}\tilde{\bR}_{n} \bQ_n^{T}\bD_{n}\bR_{n})^{T}
 \tilde{\bD}_{n}\tilde{\bR}_{n} \bQ_n^{T}\bD_{n}\bR_{n}
\end{align}
Denoting the $i$-th row of $\bR_{n}$ by $\br_{n,i}^{T}$ be, we have that, from the Courant-Fischer theorem~\cite{horjoh85} that
\begin{align}
\lambda_{i}(\bG_{n}^{T}\bG_{n})
&\leq 
\max_{\bv \in \ospn\set{\br_{n,k}}_{k=1}^{i}\,:\, \norm{\bv}=1}
\norm{\bG\bv}^{2}
\\&
= 
\max_{\bv \in \ospn\set{\br_{n,k}}_{k=1}^{i}\,:\, \norm{\bv}=1}
\norm{
 \tilde{\bD}_{n}
  \tilde{\bR}_{n}^{T}\bQ_{n}^{T}\bD_{n} \bR_{n}
 \bv
}^{2}
\\&
\leq
d_{n,i}^{2}\tilde{d}_{n,n}^{2}
 \end{align}
Likewise,
\begin{align}
\lambda_{i}(\bG_{n}^{T}\bG_{n})
&\geq 
\min_{\bv \in \ospn\set{\br_{n,k}}_{k=i}^{n}\,:\, \norm{\bv}=1}
\norm{\bG\bv}
\\&
= 
\min_{\bv \in \ospn\set{\br_{n,k}}_{k=i}^{n}\,:\, \norm{\bv}=1}
\norm{
 \tilde{\bD}_{n}
  \tilde{\bR}_{n}^{T}\bQ_{n}^{T}\bD_{n} \bR_{n}
 \bv
}^{2}
\\&
\geq
d_{n,i}^{2}\tilde{d}_{n,1}^{2}
 \end{align}
Thus
\begin{align}
\lim_{n\to\infty}\frac{\lambda_{i}(\bG^{T}\bG)}{d_{n,i}^{2}}  
\in
 \left(\lambda_{\text{min}}(\tilde{\bG}^T\tilde{\bG})\ ,\ \lambda_{\text{max}}(\tilde{\bG}^T\tilde{\bG}) \right).
\end{align}
The result now follows directly from Lemma~\ref{lem:hashimoto} (in the appendix).
 \end{proof}

\begin{proof}[Proof of Theorem~\ref{thm:eg_n_instate_w_disturb_ineq} ]\label{proof:thm_eg_n_instate_w_disturb_ineq}
In this case 
\begin{align}
 [\bD_{n}]^{1}_{m}\bR_{n}\rveu^{1}_{n} +[\bQ_{n}]^{1}_{m}\rvez^{1}_{n}  
 &
 =
 [\bD_{n}]^{1}_{m}\bR_{n}\rveu^{1}_{n} +[\bQ_{n}]^{1}_{m} [\bPhi]^{1}_{n}\rves^{1}_{\kappa}.
\end{align}
Notice that the columns of the matrix $[\bQ_{n}]^{1}_{m} [\bPhi]^{1}_{n}\in\Rl^{m\times \kappa}$ span a space of dimension $\kappa_{n}\in\set{0,1,\ldots,\bar{\kappa}}$, which means that one can have 
$[\bQ_{n}]^{1}_{m} [\bPhi]^{1}_{n}=\bzero$ (if $\kappa_{n}=0$). 
In this case (i.e., if $\lim_{n\to\infty}[\bQ_{n}]^{1}_{m} [\bPhi]^{1}_{n}=\bzero$) then the lower bound is reached by inserting the latter expression into~\eqref{eq:gap_with_two_terms} and invoking Lemma~\ref{lem:hashimoto_IIR}.

We now consider the case in which $\lim_{n\to\infty}[\bQ_{n}]^{1}_{m} [\bPhi]^{1}_{n}\neq \bzero$.
This condition implies that there exists an $N$ sufficiently large such that 
$\kappa_{n}\geq 1$ for all $n\geq N$.
Then, for all $n\geq N$ there exist unitary matrices
\begin{align}
 \bH_{n}\eq 
 \begin{pmatrix}
  \bA_{n}
  \\
  \overline{\bA}_{n}
 \end{pmatrix}
 \in\Rl^{m\times m},\fspace n\geq N,
\end{align}
where 
$\bA_{n}\in\Rl^{\kappa_{n}\times m}$ 
and 
$\overline{\bA}_{n}\in\Rl^{(m-\kappa_{n})\times m}$ 
have orthonormal rows, such that
\begin{align}
 \bH_{n}[\bQ_{n}]^{1}_{m} [\bPhi]^{1}_{n}
 =
 \begin{pmatrix}
  \bA_{n}[\bQ_{n}]^{1}_{m} [\bPhi]^{1}_{n}
  \\
  \bzero
 \end{pmatrix}, \fspace n\geq N.
\end{align}
Thus
\begin{align}
 h\left( [\bD_{n}]^{1}_{m}\bR_{n}\rveu^{1}_{n} +[\bQ_{n}]^{1}_{m}\rvez^{1}_{n} \right) 
 &
 =
 h\left([\bD_{n}]^{1}_{m}\bR_{n}\rveu^{1}_{n} + [\bQ_{n}]^{1}_{m} [\bPhi]^{1}_{n}\rves^{1}_{\kappa}\right)
 \\&
 =h\left(\bH_{n}([\bD_{n}]^{1}_{m}\bR_{n}\rveu^{1}_{n} +[\bQ_{n}]^{1}_{m} [\bPhi]^{1}_{n}\rves^{1}_{\kappa})\right)
 \\&
 =
 h\left(
 \bA_{n}[\bD_{n}]^{1}_{m}\bR_{n}\rveu^{1}_{n} 
 +\bA_{n}[\bQ_{n}]^{1}_{m} [\bPhi]^{1}_{n}\rves^{1}_{\kappa} \, |\,  \overline{\bA}_{n}[\bD_{n}]^{1}_{m}\bR_{n}\rveu^{1}_{n}\right)\nonumber
 \\&
 \fspace\fspace\fspace\fspace\fspace\fspace\fspace\fspace\fspace\fspace\fspace\fspace
 \fspace\fspace\fspace\fspace
 +
 h(\overline{\bA}_{n}[\bD_{n}]^{1}_{m}\bR_{n}\rveu^{1}_{n})\label{eq:la100}.
 \end{align}
The first differential entropy on the RHS of the latter expression is uniformly upper-bounded because 
$\rvau_{1}^{\infty}$ is entropy balanced, $[\bD_{n}]^{1}_{m}$ has decaying entries, and $h(\rvas_1^{\kappa})<\infty$. 
%
%
%
For the last differential entropy, notice that
$
[\bD_{n}]^{1}_{m}\bR_{n}
=
\block[\bD_{n}]{1}{m} \rows[\bR_{n}]{1}{m}
$.
Consider the SVD 
$\overline{\bA}_{n}\block[\bD_{n}]{1}{m} \rows[\bR_{n}]{1}{m}
=\bV_{n}^{T}\bSigma_{n}\bW_{n}$, 
with
$\bV_{n}\in\Rl^{(m-\kappa_{n})\times (m-\kappa_{n})}$ being unitary,
$\bSigma_{n}\in\Rl^{(m-\kappa_{n})\times (m-\kappa_{n})}$ being diagonal, and
$\bW_{n}\in\Rl^{(m-\kappa_{n})\times n}$ having orthonormal rows.
We can then conclude that 
\begin{align}
 h(\overline{\bA}_{n}[\bD_{n}]^{1}_{m}\bR_{n}\rveu^{1}_{n}) =
 h(\bSigma_{n}\bW_{n}\rveu^{1}_{n})
 =
 \log\abs{\det(\bSigma_{n})} + h(\bW_{n}\rveu^{1}_{n}) .\label{eq:maldealturas}
\end{align}
Now, the fact that
\begin{align*}
 \overline{\bA}_{n}\block[\bD_{n}]{1}{m} \rows[\bR_{n}]{1}{m}(\overline{\bA}_{n}\block[\bD_{n}]{1}{m} \rows[\bR_{n}]{1}{m})^{T}
 =
  \overline{\bA}_{n}\block[\bD_{n}]{1}{m}\block[\bD_{n}]{1}{m} \overline{\bA}_{n}^{T}
  =
  \bV^{T}\bSigma\bW\bW^{T}\bSigma^{T}\bV
  =
  \bV^{T}\bSigma\bSigma^{T}\bV
\end{align*}
allows one to conclude that 
\begin{align}\label{eq:pasos}
\log\abs{\det{\bSigma\,}}= \frac{1}{2}\log|\det(\overline{\bA}_{n}(\block[\bD_{n}]{1}{m})^{2}\overline{\bA}_{n}^{T})|. 
\end{align}
Recalling that $\overline{\bA}_{n}=\rows[\bH_{n}]{\kappa_{n}+1}{m}$ and that $\bH_{n}\in\Rl^{m\times m}$ is unitary, 
it is easy to show (by using the Cauchy interlacing theorem~\cite{horjoh85}) 
that
\begin{align}
\frac{1}{2}\log\abs{\det(\overline{\bA}_{n}(\block[\bD_{n}]{1}{m})^{2}\overline{\bA}_{n}^{T})} 
\leq 
\Sumfromto{i=\kappa_{n}+1}{m}\log d_{n,i},
\end{align}
with equality achieved if and only if $\overline{\bA}_{n}=[\bzero\,|\, \bI_{m-\kappa_{n}}]$. 
Substituting this into~\eqref{eq:pasos} and then the latter into~\eqref{eq:maldealturas} we arrive to
\begin{align}
  h(\overline{\bA}_{n}[\bD_{n}]^{1}_{m}\bR_{n}\rveu^{1}_{n})
  \leq 
  h( [\bW_{n}]^{1}_{m}\rveu^{1}_{n})
  +
  \sum_{i=\kappa_{n}+1}^{m}\log d_{n,i}. 
\end{align}
Substituting this into~\eqref{eq:gap_with_two_terms}, exploiting the fact that $\rveu_{1}^{\infty}$ is entropy balanced and 
invoking Lemma~\ref{lem:hashimoto_IIR}
yields the upper bound in~\eqref{eq:eg_n_instate_w_disturb_ineq}.
Clearly, this upper bound is achieved if, for example,
$\rows[\bQ_{n}]{1}{\bar{\kappa}} \rows[\bPhi]{1}{n}(\rows[\bQ_{n}]{1}{\bar{\kappa}} \rows[\bPhi]{1}{n})^{T}$ is non-singular for all $n$ sufficiently large, since, in that case, $\kappa_{n}=\bar{\kappa}$ and
we can choose $\bA_{n}=[\bI_{\bar{\kappa}} \; \bzero]$
and $\overline{\bA}_{n}=[\bzero   \; \bI_{m-\bar{\kappa}}]$.
This completes the proof.
\end{proof}

\begin{proof}[Proof of Theorem~\ref{thm:eg_n_instate_w_disturb} ]\label{proof:thm:eg_n_instate_w_disturb}
 As in~\eqref{eq:G_Factorized_as_tilde_G_F}, the transfer function $G(z)$ can be factored as
 $G(z) = \tilde{G}(z)F(z)$,
where $\tilde{G}(z)$ is stable and minimum phase and $F(z)$ is a stable FIR transfer function with all the non-minimum-phase zeros of $G(z)$ ($m$ in total).
Letting $\tilde{\rveu}^{1}_{n}\eq \tilde{\bG}_{n}\rveu^{1}_{n}$, we have that 
$
 h(\rvey^{1}_{n}) 	= h(\bF_{n}\tilde{\rveu}^{1}_{n} + \rvez^{1}_{n})
$,
$
h(\tilde{\rveu}^{1}_{n}) 	= h(\rveu^{1}_{n})
$,
and that $\set{\tilde{\rvau}_{i}}_{i=1}^{\infty}$ is entropy balanced (from Lemma~\ref{lem:filtering_preserves_entropy_balance}).
Thus,
\begin{align}
h(\rvey^{1}_{n})-h(\rveu^{1}_{n})=
h(\bG_{n}\rveu^{1}_{n}+\rvez^{1}_{n} ) - h(\rveu^{1}_{n})
 =
 h(\bF_{n}\tilde{\rveu}^{1}_{n}+\rvez^{1}_{n} ) - h(\tilde{\rveu}^{1}_{n}).
\end{align}
This means that
the entropy gain of $\bG_{n}$ due to the output disturbance $\rvaz_{1}^{\infty}$ corresponds to the entropy gain of $\bF_{n}$ due to the same output disturbance.
One can then evaluate the entropy gain of $\bG_{n}$ by applying Theorem~\ref{thm:eg_n_instate_w_disturb_ineq} to the filter $F(z)$ instead of $G(z)$, which we do next. 

Since only the first $m$ values of $\rvaz_{1}^{\infty}$ are non zero, it follows that in this case $\bPhi=[\,\bI_{m}\,|\, \bzero\,]^{T}$ (see Assumption~\ref{assu:z}).
 Therefore, 
 $\det(\rows[\bQ_{n}]{1}{m}\rows[\bPhi]{1}{n}(\rows[\bQ_{n}]{1}{m}\rows[\bPhi]{1}{n})^{T})
 =
 \det(\block[\bQ_{n}]{1}{m}(\block[\bQ_{n}]{1}{m})^{T})$
and 
 the sufficient condition given in Theorem~\ref{thm:eg_n_instate_w_disturb_ineq} will be satisfied for $\kappa=m$ if  
 $\lim_{n\to\infty}|\det(\block[\bQ_{n}]{1}{m})|>0$,
where now $\bQ_{n}^{T}$ is the left unitary matrix in the SVD $\bF_{n}=\bQ_{n}^{T}\bD_{n}\bR_{n}$.
We will prove that this is the case by using a contradiction argument.
Thus, suppose the contrary, i.e., that 
\begin{align}\label{eq:falseclaim}
 \lim_{n\to\infty}\det \block{1}{m}=0.
\end{align}
Then, there exists a sequence of unit-norm vectors $\set{\bv_{n}}_{n=1}^{\infty}$, with $\bv_{n}\in\Rl^{m}$ for all $n$, such that 
\begin{align}\label{eq:labase}
 \lim_{n\to\infty}\norm{\bv_{n}^{T}\, \block{1}{m}} =0
\end{align}
For each $n\in\Nl$, define the $n$-length image vectors $\bt_{n}^{T}\eq \bv_{n}^{T}\rows{1}{m}$,  and decompose them as 
\begin{align}
 \bt_{n}=\begin{bmatrix}
          \balpha_{n}\\
          \bbeta_{n}
         \end{bmatrix}
\end{align}
such that $\balpha_{n}\in\Rl^{m}$ and $\bbeta_{n}\in\Rl^{n-m}$.
Then, from this definition and from~\eqref{eq:labase}, we have that
\begin{subequations}\label{subeq:limxn_and_limyn}
\begin{align}
\norm{\balpha_{n}}^{2} + \norm{\bbeta_{n}}^{2}&=1,\fspace \forall n\in\Nl,\\
 \lim_{n\to\infty} \norm{\balpha_{n}}&=0\\
 \lim_{n\to\infty} \norm{\bbeta_{n}}&=1
\end{align}
\end{subequations}
As a consequence,
\begin{align}
\norm{ \bF_{n}^{T}\bt_{n}} 
= 
\norm{ \bR_{n}^{T}\bD_{n}\bQ_{n}\bt_{n}} 
=
\norm{\bD_{n}\bQ_{n}\bt_{n}}
=
\norm{\block[\bD_{n}]{1}{m} \rows{1}{m}  \bt_{n}},
\end{align}
where the last equality follows from the fact that, by construction, $\bt_{n}^{T}$ is in the span of the first $m$ rows of $\bQ_{n}$, together with the fact that $\bQ_{n}$ is unitary (which implies that $\rows{m+1}{n}\bt_{n}=\bzero$).
Since the top $m$ entries in $\bD_{n}$ decay exponentially as $n$ increases, we have that 
\begin{align}\label{eq:Hash-Ari_bound}
 \norm{ \bF_{n}^{T}\bt_{n}} \leq  \mathcal{O}(\zeta_{n}|\rho_{M}|^{-n}),
\end{align}
where $\zeta_{n}$ is a finite-order polynomial of $n$ (from Lemma~\ref{lem:hashimoto}, in the Appendix).
But 
\begin{align}
  \norm{ \bF_{n}^{T}\bt_{n}}
  &=
  \Norm{ ([\bF_{n}]^{1}_{m})^{T}\balpha_{n} + ([\bF_{n}]^{m+1}_{n})^{T}\bbeta_{n}}
  \\&
  \geq 
\Norm{([\bF_{n}]^{m+1}_{n})^{T}\bbeta_{n}}
-
    \Norm{ ([\bF_{n}]^{1}_{m})^{T}\balpha_{n}} 
    \\&
    \geq 
   \sigma_{min}( ([\bF_{n}]^{m+1}_{n})^{T} ) \norm{\bbeta_{n}}
   -
   \sigma_{max}(([\bF_{n}]^{1}_{m})^{T})
   \Norm{ \balpha_{n}}
\end{align}
Taking the limit as $n\to\infty$,
\begin{align}
 \lim_{n\to\infty}
   \norm{ \bF_{n}^{T}\bt_{n}}
   &
   \geq 
   \left(\lim_{n\to\infty}\sigma_{\text{min}}( ([\bF_{n}]^{m+1}_{n})^{T} )\right) 
   \left(\lim_{n\to\infty} \norm{\bbeta_{n}}\right)
   -
   \sigma_{\text{max}}(([\bF_{n}]^{1}_{m})^{T})
   \left(\lim_{n\to\infty} \Norm{ \balpha_{n}}\right)
   \\&
   =
\lim_{n\to\infty}\sigma_\text{min}( ([\bF_{n}]^{m+1}_{n})^{T} )\label{eq:limitalone}
\end{align}
where we have applied~\eqref{subeq:limxn_and_limyn} and the fact that  
$\sigma_{\text{max}}(([\bF_{n}]^{1}_{m})^{T})$ is bounded and does not depend on $n$.
Now, notice that 
$
 \rows[\bF_{n}]{m+1}{n} (\rows[\bF_{n}]{m+1}{n})^{T}
$
is a Toeplitz matrix with the convolution of $f$ and $f^{-}$ (the impulse response of $F$ and its time-reversed version, respectively) on its first row and column.
It then follows from~\cite[Lemma~4.1]{gray--06} that 
\begin{align}
 \lim_{n\to\infty}\lambda_{\text{min}}(\rows[\bF_{n}]{m+1}{n}(\rows[\bF_{n}]{m+1}{n})^{T}) = 
 \min_{\w:\w\in\pipi} |F\ejw|^{2}
 >0
\end{align}
(the inequality is strict because all the zeros of $F(z)$ are strictly outside the unit disk).
Substituting this into~\eqref{eq:limitalone} we conclude that
\begin{align}
 \lim_{n\to\infty}\sigma_{\text{min}}( (\rows[\bF_{n}]{m+1}{n} )^{T} )
 >0,
\end{align}
which contradicts~\eqref{eq:Hash-Ari_bound}.
Therefore,~\eqref{eq:falseclaim} leads to a contradiction, completing the proof.
\end{proof}
\begin{proof}[Proof of Theorem~\ref{thm:RDF_non_stat}]\label{proof:RDF_non_stat}
Denote the Blaschke product~\cite{sebrgo97} of $A(z)$ as
\begin{align}\label{eq:B_def}
 B(z) \eq \frac{\prod_{i=1}^{M}(z-p_{i}) }{\prod_{i=1}^{M}p_{i}^{*}(z-1/p_{i}^{*})  },  
\end{align}
which clearly satisfies
\begin{align}
 |B\ejw| &= 1,\fspace\forallwinpipi \label{eq:magBis1}\\
 b_{0}&\eq \lim_{|z|\to\infty} B(z) = 
 \frac{1}{\prod_{i=1}^{M}p_{i}^{*}  },\label{eq:b0_of_B}
\end{align}
where $b_{0}$ is the first sample in the impulse response of $B(z)$.
Notice that~\eqref{eq:magBis1} implies that 
$
\lim_{n\to\infty}\frac{1}{n}\norm{\bB_{n}\rveu^{1}_{n}}^{2} 
=
\lim_{n\to\infty}\frac{1}{n}\norm{\rveu^{1}_{n}}^{2} 
$
for every sequence of random variables $\rvau_{1}^{\infty}$ with uniformly bounded variance.
Since $B(z)$ has only stable poles and its zeros coincide exactly with the poles  of $A(z)$, it follows that $B(z)A(z)$ is a stable transfer function.
Thus, the asymptotically stationary process $\tilde{\rvax}_{1}^{\infty}$ defined in~\eqref{eq:tildex_def} can be constructed as
\begin{align}
 \tilde{\rvex}^{1}_{n}\eq \bB_{n}\rvex^{1}_{n},
\end{align}
where $\bB_{n}$ is a Toeplitz lower triangular matrix with its main diagonal entries equal to $b_{0}$.

The fact that $B(z)$ is biproper with $b_{0}$ as in~\eqref{eq:b0_of_B} implies that for any $\rveu^{1}_{n}$ with finite differential entropy
\begin{align}\label{eq:hgain_B_is_neg}
 h(\bB_{n}\rveu^{1}_{n}) 
 = 
 h(\rveu^{1}_{n}) - n\underbrace{\sumfromto{i=1}{M}\log\abs{p_{i}}}_{\eq \Gsp},
\end{align}
which will be utilized next.

For any given $n\geq M$, suppose that $C(z)$ is chosen and $\rvex^{1}_{n}$ and $\rveu^{1}_{n}$ are distributed so as to minimize 
$I(\rvex^{1}_{n};\bC_{n}\rvex^{1}_{n}+\rveu^{1}_{n})$ subject to the constraint 
$\expe{\norm{\rvey^{1}_{n}-\rvex^{1}_{n}}^{2}} = \expe{\norm{(\bC_{n} -\bI)\rvex^{1}_{n}}^{2}}+ \expe{\norm{\rveu^{1}_{n}}}^{2}\leq D$ 
(i.e., $\rvex^{1}_{n},\rveu^{1}_{n}$ is a realization of  $R_{\rvax,n}(D)$), yielding the reconstruction %
\begin{align}
 \rvey^{1}_{n} = \bC_{n}\rvex^{1}_{n}+\rveu^{1}_{n}. 
\end{align}
Since we are considering mean-squared error distortion, it follows that, for rate-distortion optimality, $\rveu^{1}_{n}$ must be jointly Gaussian with $\rvex^{1}_{n}$.
From these vectors, define
\begin{align}
 \tilde{\rveu}^{1}_{n}&\eq \bB_{n}\rveu^{1}_{n},
\\
 \tilde{\rvey}^{1}_{n}&\eq \bB_{n}\rvey^{1}_{n} = \bB^{1}_{n}\bC_{n}\bB_{n}^{-1}\tilde{\rvex}^{1}_{n} +\tilde{\rveu}^{1}_{n},
 \\
 \bar{\rvey}^{1}_{n}& 
 \eq 
 \tilde{\rvey}^{1}_{n} + \rved^{1}_{n}
 =
 \bB^{1}_{n}\bC_{n}\bB_{n}^{-1}\tilde{\rvex}^{1}_{n} +\tilde{\rveu}^{1}_{n} +  \rved^{1}_{n}.  \label{eq:ybar_def}
 \end{align}
 where $\rved^{1}_{n}$ is a zero-mean Gaussian vector independent of $(\tilde{\rveu}^{1}_{n},\tilde{\rvex}^{1}_{n})$ with finite differential entropy such that
 $\rvad_{k}=0$, $\forall k > M$. 
 %
%
Then, we have that%
\footnote{The change of variables and the steps in this chain of equations is represented by the block diagrams shown in Fig.~\ref{fig:rdfnsproof}.}
\begin{align}
n R_{\rvax,n}(D)=
 I(\rvex^{1}_{n};\rvey^{1}_{n}) 
 & 
\overset{(a)}{=}
 I(\bB_{n}\rvex^{1}_{n}; \bB_{n} \rvey^{1}_{n})
  =
  I(\tilde{\rvex}^{1}_{n};\tilde{\rvey}^{1}_{n})
  \\&
=
  h(\tilde{\rvey}^{1}_{n}) 
  -
  h(\tilde{\rvey}^{1}_{n}|\tilde{\rvex}^{1}_{n})
  \\& 
   \overset{(b)}{=}
  h(\tilde{\rvey}^{1}_{n}) 
  -
  h(\tilde{\rveu}^{1}_{n}|\tilde{\rvex}^{1}_{n})
  \\&
   \overset{(c)}{=}
  h(\tilde{\rvey}^{1}_{n}) -
  h(\tilde{\rveu}^{1}_{n})
  \label{eq:equal_rates}
  \\&
   \overset{(d)}{=}
  h(\tilde{\rvey}^{1}_{n}) 
  -
  \left(
  h(\tilde{\rveu}^{1}_{n}+\rved^{1}_{n})
  +
  [h(\rveu^{1}_{n})-h(\tilde{\rveu}^{1}_{n}+\rved^{1}_{n})]
  -
  n\Gsp 
  \right)
  \\&
     \overset{(e)}{=}
  h(\tilde{\rvey}^{1}_{n}) 
  -
  h(\tilde{\rveu}^{1}_{n}+\rved^{1}_{n}|\tilde{\rvex}^{1}_{n})
  +
  n\Gsp 
  -[h(\rveu^{1}_{n})-h(\tilde{\rveu}^{1}_{n}+\rved^{1}_{n})]
\\&
     \overset{(f)}{=}
  h(\tilde{\rvey}^{1}_{n}) 
  -
  h(\bar{\rvey}^{1}_{n}|\bar{\rvex}^{1}_{n})
  +
  n\Gsp 
  -[h(\rveu^{1}_{n})-h(\tilde{\rveu}^{1}_{n}+\rved^{1}_{n})]
  \\&
    =
    h(\tilde{\rvey}^{1}_{n})
    -
    h(\bar{\rvey}^{1}_{n})
  +    
  I(\tilde{\rvex}^{1}_{n};\bar{\rvey}^{1}_{n})
  +
  n\Gsp 
  -
  [h(\rveu^{1}_{n})-h(\tilde{\rveu}^{1}_{n}+\rved^{1}_{n})]
  \\&
  \overset{\hphantom{(a)}}{\geq}
    I(\tilde{\rvex}^{1}_{n};\bar{\rvey}^{1}_{n})
  +
  n\Gsp 
  -
  [h(\rveu^{1}_{n})-h(\tilde{\rveu}^{1}_{n}+\rved^{1}_{n})],
  \end{align}
where $(a)$ follows from $\bB_{n}$ being invertible,
$(b)$ is due to the fact that 
$\tilde{\rvey}^{1}_{n}=\bC_{n}\tilde{\rvex}^{1}_{n} + \tilde{\rveu}^{1}_{n}$, 
$(c)$ holds because $\rveu^{1}_{n}\Perp \rvex^{1}_{n}$.
The equality
$(d)$ stems from 
$h(\tilde{\rveu}^{1}_{n})=h(\rveu^{1}_{n}) - n\Gsp$ (see~\eqref{eq:hgain_B_is_neg}). 
Equality holds in $(e)$ because $\tilde{\rvex}^{1}_{n} \Perp (\tilde{\rveu}^{1}_{n},\rved^{1}_{n})$ and
in $(f)$ because of~\eqref{eq:ybar_def}.
The last inequality holds because $\bar{\rvey}^{1}_{n}=\tilde{\rvey}^{1}_{n}+\rved^{1}_{n}$ and $\rved^{1}_{n}\Perp\tilde{\rvey}^{1}_{n}$.
But from Theorem~\ref{thm:eg_n_instate_w_disturb},
$\lim_{n\to\infty}\frac{1}{n}(h(\tilde{\rveu}^{1}_{n} +\rved^{1}_{m}) - h(\rveu^{1}_{n}))=0$, and thus
$R_{\rvax,n}(D)\geq  \lim_{n\to\infty}\frac{1}{n}(\tilde{\rvex}^{1}_{n};\bar{\rvey}^{1}_{n})+\Gsp $.

At the same time, the distortion for the source 
$\tilde{\rvex}^{1}_{n}$ 
when reconstructed as $\bar{\rvey}^{1}_{n}$ is 
\begin{align}
\lim_{n\to\infty} 
\frac{1}{n}
\Expe{\norm{\bar{\rvey}^{1}_{n} - \tilde{\rvex}^{1}_{n} }^{2}} 
&=
\lim_{n\to\infty} 
\frac{1}{n}
\left(
\Expe{\norm{\tilde{\rvey}-\tilde{\rvex}^{1}_{n}}^{2}} 
+
\Expe{\norm{\rved^{1}_{n}}^{2}} 
\right)
\overset{(a)}{=}
\lim_{n\to\infty} 
\frac{1}{n}
\Expe{\norm{\tilde{\rvey}-\tilde{\rvex}^{1}_{n}}^{2}} 
\\&
=
\lim_{n\to\infty} 
\frac{1}{n}
\Expe{\norm{\bB_{n}(\rvey^{1}_{n}-\rvex^{1}_{n})}^{2}}
\overset{(b)}{=}
\lim_{n\to\infty} 
\frac{1}{n}
\Expe{\norm{\rvey^{1}_{n} - \rvex^{1}_{n} }^{2}},
\end{align}
where $(a)$ holds because $\norm{\rved^{1}_{n}}=\norm{\rved^{1}_{M}}$ is bounded, and $(b)$ is due to the fact that, in the limit, $B(z)$ is a unitary operator.
Recalling the definitions of $R_{\tilde{\rvax}}(D)$ and $R_{\tilde{\rvax}}(D)$, we conclude that
$\lim_{n\to\infty}\frac{1}{n}(\tilde{\rvex}^{1}_{n};\bar{\rvey}^{1}_{n})\geq  R_{\tilde{\rvax},n}(D)$, and therefore
\begin{align}
 R_{\rvax}(D) - R_{\tilde{\rvax}}(D) \geq \sumfromto{i=1}{M}\log|p_{i}|.
\end{align}
In order to complete the proof, it suffices to show that 
$R_{\rvax}(D) - R_{\tilde{\rvax}}(D) \leq \sumfromto{i=1}{M}\log|p_{i}|$.
For this purpose, consider now the (asymptotically) stationary source $\tilde{\rvex}^{1}_{n}$,
and suppose that 
$\hat{\rvey}^{1}_{n} = \tilde{\rvex}^{1}_{n}+\rveu^{1}_{n}$ 
realizes $R_{\tilde{\rvax},n}(D)$.
Again, 
$\tilde{\rvex}^{1}_{n}$ and $\rveu^{1}_{n}$ will be jointly Gaussian, satisfying 
$\hat{\rvey}^{1}_{n} \Perp \rveu^{1}_{n}$ (the latter condition is required for minimum MSE optimality).
From this, one can propose an alternative realization in which the error sequence is 
$\tilde{\rveu}\eq \bB_{n}\rveu^{1}_{n}$, yielding an output 
$\tilde{\rvey}^{1}_{n}=\tilde{\rvex}^{1}_{n} + \tilde{\rveu}^{1}_{n}$ with 
$\tilde{\rvey}^{1}_{n} \Perp \tilde{\rveu}^{1}_{n}$.
Then
\begin{align}
 n R_{\tilde{\rvax},n}(D)
 =
 I(\tilde{\rvex}^{1}_{n};\hat{\rvey}^{1}_{n})
 &
 =
 h(\tilde{\rvex}^{1}_{n})
 -
  h(\tilde{\rvex}^{1}_{n}| \hat{\rvey}^{1}_{n})
  \\&
  \overset{(a)}{=}
   h(\tilde{\rvex}^{1}_{n})
 -
  h(\rveu^{1}_{n})
  \\&
    \overset{(b)}{=}
   h(\tilde{\rvex}^{1}_{n})
 -
  h(\tilde{\rveu}^{1}_{n})
  -
  n\Gsp
  \\&
    \overset{(c)}{=}
  h(\tilde{\rvex}^{1}_{n})
 -
  h(\tilde{\rveu}^{1}_{n} |\tilde{\rvey}^{1}_{n})
  -
  n\Gsp
  \\& 
    \overset{(d)}{=}
  h(\tilde{\rvex}^{1}_{n})
 -
  h(\tilde{\rvex}^{1}_{n} |\tilde{\rvey}^{1}_{n})
  -
  n\Gsp
  \\&
  \overset{\phantom{(a)}}{=}
  I(\tilde{\rvex}^{1}_{n};\tilde{\rvey}^{1}_{n})
  -  
  n\Gsp
  \\&
  \overset{\phantom{(a)}}{=}
  I(\bB_{n}\rvex^{1}_{n};\bB_{n}\rvey^{1}_{n})
    -  
  n\Gsp
   \\&
  \overset{(e)}{=}
  I(\rvex^{1}_{n};\rvey^{1}_{n})
  -  
  n\Gsp,
  \end{align}
  where $(a)$ follows by recalling that $\hat{\rvey}^{1}_{n} = \tilde{\rvex}^{1}_{n}+\rveu^{1}_{n}$  and because 
  $\hat{\rvey}^{1}_{n}  \Perp \rveu^{1}_{n}$,
  $(b)$ stems from~\eqref{eq:hgain_B_is_neg},
  $(c)$ is a consequence of $ \tilde{\rvey}^{1}_{n} \Perp \tilde{\rveu}^{1}_{n}$,
  $(d)$ follows from the fact that 
  $\tilde{\rvey}^{1}_{n} = \tilde{\rvex}^{1}_{n}+\tilde{\rveu}^{1}_{n}$.
  Finally, $(e)$ holds because $\bB_{n}$ is invertible for all $n$.
  Since, asymptotically as $n\to\infty$, the distortion yielded by $\rvey^{1}_{n}$ for the non-stationary source $\rvex^{1}_{n}$ is the same which is obtained when $\tilde{\rvex}^{1}_{n}$ is reconstructed as $\hat{\rvey}^{1}_{n}$ (recall~\eqref{eq:magBis1}), we conclude that $R_{\rvax}(D) - R_{\tilde{\rvax}}(D) \leq \sumfromto{i=1}{M}\log|p_{i}|$, completing the proof. 
\end{proof}

\subsection{Technical Lemmas}\label{subsec:technical_lemmas}
\begin{lem}\label{lem:entropy_bounded_different_supports}
Let $\rvau_{1}^{\infty}$ be a random process with independent elements, and where each element $\rvau_i$ is uniformly distributed over possible different intervals $[-\frac{a_i}{2},\frac{a_i}{2}]$, such that $a_{\text{max}}>|a_{i}|>a_{\text{min}}>0, \forall i\in \Nl$, for some positive and bounded $a_{min}<a_{max}$.
Then $\rvau_{1}^{\infty}$ is entropy balanced.
\finenunciado
\end{lem}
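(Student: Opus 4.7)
My plan is to bound $h(\boldsymbol{\Phi}_n \rveu^{1}_{n}) - h(\rveu^{1}_{n})$ above and below by quantities of order $O(\log n)$, so that dividing by $n$ and taking the limit yields zero.

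For the upper bound, I would first note that $\rveu^{1}_{n}$ is uniformly distributed on the product box $B\eq \prod_{i=1}^{n}[-a_i/2,\,a_i/2]$, giving $h(\rveu^{1}_{n})=\sum_{i=1}^{n}\log a_i$, and that its image $\boldsymbol{\Phi}_n \rveu^{1}_{n}$ is supported on the zonotope $Z\eq \boldsymbol{\Phi}_n B = \sum_{i=1}^{n}[-a_i/2,\,a_i/2]\,\boldsymbol{\phi}_{n,i}\subset \Rl^{n-\nu}$, where $\boldsymbol{\phi}_{n,i}$ denotes the $i$-th column of $\boldsymbol{\Phi}_n$. Since the uniform distribution maximizes differential entropy on any bounded support, $h(\boldsymbol{\Phi}_n \rveu^{1}_{n}) \leq \log \mathrm{Vol}_{n-\nu}(Z)$. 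I would next invoke Shephard's zonotope-volume identity
\begin{align*}
\mathrm{Vol}_{n-\nu}(Z) \;=\; \sum_{|I|=n-\nu} \bigl|\det \boldsymbol{\Phi}_n^I\bigr| \prod_{i\in I} a_i,
\end{align*}
where $\boldsymbol{\Phi}_n^I$ is the $(n-\nu)\times(n-\nu)$ submatrix of $\boldsymbol{\Phi}_n$ formed by the columns indexed by $I$. Combining Cauchy--Schwarz with the Cauchy--Binet identity $\sum_{|I|=n-\nu} (\det \boldsymbol{\Phi}_n^I)^2 = \det(\boldsymbol{\Phi}_n \boldsymbol{\Phi}_n^T)=1$ (the latter because $\boldsymbol{\Phi}_n$ has orthonormal rows) gives $\sum_I |\det \boldsymbol{\Phi}_n^I| \leq \sqrt{\tbinom{n}{\nu}}$, and using $\prod_{i\in I}a_i \leq a_{\min}^{-\nu}\prod_{i=1}^{n}a_i$ would yield $h(\boldsymbol{\Phi}_n \rveu^{1}_{n}) - h(\rveu^{1}_{n}) \leq \tfrac{1}{2}\log\tbinom{n}{\nu} - \nu\log a_{\min} = O(\log n)$.

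For the lower bound, I would adjoin any $\boldsymbol{\Psi}_n\in\Rl^{\nu\times n}$ with orthonormal rows spanning the null-row space of $\boldsymbol{\Phi}_n$, so that $[\boldsymbol{\Psi}_n^T\,|\,\boldsymbol{\Phi}_n^T]^T$ is unitary. Since unitary transformations preserve differential entropy, the chain rule gives $h(\rveu^{1}_{n}) = h(\boldsymbol{\Phi}_n \rveu^{1}_{n}) + h(\boldsymbol{\Psi}_n \rveu^{1}_{n}\mid \boldsymbol{\Phi}_n \rveu^{1}_{n})$. The conditional term is upper bounded by the marginal $h(\boldsymbol{\Psi}_n \rveu^{1}_{n})$, which in turn, by the Gaussian maximum-entropy bound and the fact that all eigenvalues of $\boldsymbol{\Psi}_n \bSigma \boldsymbol{\Psi}_n^T$ are at most $a_{\max}^{2}/12$ (with $\bSigma\eq \diag\set{a_1^2/12,\ldots,a_n^2/12}$ the covariance of $\rveu^{1}_{n}$), is at most $\tfrac{\nu}{2}\log(2\pi\expo{}\,a_{\max}^{2}/12)$, a constant independent of $n$. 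Hence $h(\boldsymbol{\Phi}_n \rveu^{1}_{n}) - h(\rveu^{1}_{n}) \geq -O(1)$.

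Combining both estimates gives $|h(\boldsymbol{\Phi}_n \rveu^{1}_{n}) - h(\rveu^{1}_{n})| = O(\log n)$, whence $\lim_{n\to\infty}\tfrac{1}{n}\bigl(h(\boldsymbol{\Phi}_n \rveu^{1}_{n})-h(\rveu^{1}_{n})\bigr)=0$. I expect the main hurdle to be the upper-bound side, specifically justifying the zonotope-volume identity together with the Cauchy--Binet normalization for the rectangular $\boldsymbol{\Phi}_n$: this geometric computation is not used elsewhere in the paper and is crucial for ruling out the (misleading) Gaussian upper bound $\tfrac{n-\nu}{2}\log(2\pi\expo{}\,a_{\max}^2/12)$, which would grow linearly in $n$. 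The lower-bound side is essentially the chain-rule argument already used in the proof of Lemma~\ref{lem:sum_yields_entropy_balanced}, and all remaining manipulations are standard entropy inequalities.
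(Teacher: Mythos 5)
Your proof is correct, and the lower-bound half is essentially identical to the paper's argument: adjoin a $\boldsymbol{\Psi}_n$ with orthonormal rows to form a unitary $[\boldsymbol{\Psi}_n^T\,|\,\boldsymbol{\Phi}_n^T]^T$, apply the chain rule, drop the conditioning, and invoke the Gaussian maximum-entropy bound with variances bounded by $a_{\max}^2/12$. The paper writes this same chain via $\rvey^1_n=[\boldsymbol{\Psi}_n^T\,|\,\boldsymbol{\Phi}_n^T]^T\rveu^1_n$ and $h(\rvey^{\nu+1}_n)\geq h(\rveu^1_n)-h(\rvey^1_\nu)$, but it is the same argument.

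Where you diverge is the upper bound. The paper first normalizes WLOG to $a_i>1$ and then bounds the $(n-\nu)$-volume of the projection of the box by the \emph{total $(n-\nu)$-volume of all $(n-\nu)$-faces of the box}, counted as ${n\choose n-\nu}2^{\nu}$ faces each of volume at most $\prod_{i=1}^n a_i$; this is a purely combinatorial face-count that relies on the elementary fact that a convex-polytope projection is covered by the projections of its same-dimensional faces, each of which has projected volume no larger than its own. You instead invoke the exact Shephard--McMullen zonotope volume identity $\mathrm{Vol}_{n-\nu}(Z)=\sum_{|I|=n-\nu}|\det\boldsymbol{\Phi}_n^I|\prod_{i\in I}a_i$, then Cauchy--Schwarz together with Cauchy--Binet (which uses $\boldsymbol{\Phi}_n\boldsymbol{\Phi}_n^T=\bI$) to get $\sum_I|\det\boldsymbol{\Phi}_n^I|\le\sqrt{\tbinom{n}{\nu}}$, and finally $\prod_{i\in I}a_i\le a_{\min}^{-\nu}\prod_{i=1}^n a_i$ instead of the WLOG reduction. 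Both routes give the required $O(\log n)$ gap; yours is quantitatively a little sharper (a factor $\tfrac12\log\tbinom{n}{\nu}$ plus a constant, versus $\log\tbinom{n}{\nu}+\nu\log2$) and avoids the rescaling step, at the price of invoking the zonotope volume formula and Cauchy--Binet, which are less elementary than the paper's face-counting argument. Either way the argument is sound, and your diagnosis of why a naive Gaussian upper bound fails (it would grow linearly in $n$) is exactly the point that forces the geometric volume estimate.
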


\begin{proof}
Without loss of generality, we can assume that $a_{i}>1$, for all $i$ (otherwise, we could scale the input by $1/a_{\text{min}}$, which would scale
the output by the same proportion,
increasing the  input entropy by $n \log(1/a_{\text{min}})$ and the output entropy by $(n-\nu)\log(1/a_{\text{min}})$,
without changing the result).
The input vector $\rveu^{1}_{n}$ is confined to an $n$-box $\Usp_n$ (the support of $\rvau_1^n$) of volume 
$\Vsp_n(\Usp_n) = \prod_{i=1}^{n}a_{i}$
and has entropy
$\log (\prod_{i=1}^{n}a_{i})$.
This support is an $n$-box which contains 
${n\choose k}2^{n-k}$ $k$-boxes of different $k$-volume.
Each of these $k$-boxes is determined by fixing $n-k$ entries in $\rveu^{1}_{n}$ to  $\pm a_{i}/2$, and letting the remaining $k$ entries sweep freely over $[-\frac{a_i}{2},\frac{a_i}{2}]$.
Thus, the $k$-volume of each $k$-box is the product of the $k$ support sizes $a_{i}$ of the associated  selected free-sweeping entries.
But recalling that $a_{i}>1$ for all $i$, the volume of each $k$-box can be upper bounded by $\prod_{i=1}^{n}a_{i}$.
With this, the added volume of all the $k$-boxes contained in the original $n$-box can be upper bounded as
\begin{align}
 \Vsp_{k}^{\Box} (\Usp_n)
 \leq 
 {n \choose k}2^{n-k}
 \prod_{i=1}^{n}a_{i}\label{eq:upperboundVk} .
\end{align}
We now use this result to upper bound the entropy rate of $\rvey^{\nu+1}_{n}$.

Let $\rvey^{1}_{n}\eq [\bPsi_{n}^{T} | \bPhi_{n}^{T}]^{T}\rveu^{1}_{n}$ where $[\bPsi_{n}^{T} |\bPhi_{n}^{T}]^{T}\in\Rl^{n\times n}$ is a unitary matrix and where 
 $\bPsi_{n}\in\Rl^{\nu\times n}$ and $\bPhi_{n}\in\Rl^{(n-\nu)\times n}$ have orthonormal rows.
From this definition, $\rvey^{\nu+1}_{n}$ will distribute over a finite region $\Ysp^{\nu+1}_n \subseteq \Rl^{n-\nu}$,
corresponding to the projection onto the $k$-dimensional span of the rows of $\bPhi_{n}$.
Hence, $h(\rvey^{\nu+1}_{n})$ is upper bounded by the entropy of a uniformly distributed vector over the same support, i.e., by 
$\log\Vsp_{n-\nu}(\Ysp^{\nu+1}_n)$, 
where
$\Vsp_{n-\nu}(\Ysp^{\nu+1}_n)$ is the $(n-\nu)$-dimensional volume of this support.
In turn, $\Vsp_{n-\nu}(\Ysp^{\nu+1}_n)$ is upper bounded by  the sum of the volume of all $(\nu-k)$-dimensional boxes contained in the $n$-box in which $\rveu^{1}_{n}$ is confined, which we already denoted by $\Vsp_{n-\nu}^\Box(\Usp_n)$, and which is upper bounded as in~\eqref{eq:upperboundVk}. 
Therefore,
\begin{align*}
 h(\rvey^{1+\nu}_{n}) &
 \leq  \log\Vsp_{n-\nu}(\Ysp^{\nu+1}_n)
 \leq 
 \log \Vsp_{n-\nu}^{\Box} ( \Usp_n )
 \leq 
\log\left( 
 \frac{n!}{(n-\nu)!\nu!}
2^{\nu}
 \prod_{i=1}^{n}a_{i} 
 \right)
 \\&
 =
 \log\left( 
 n^{\nu}
2^{\nu} 
 \right)
 +
 \log\left( 
 \frac{n!}{(n-\nu)!n^{\nu}\nu!}
 \right)
 +
 \log\left( \prod_{i=1}^{n}a_{i}\right).
\end{align*}
Dividing by $n$ and taking the limit as $n\to\infty$ yields
\begin{align}\label{eq:is_non_pos}
 \lim_{n\to\infty}\frac{1}{n}\left(h(\rvey_{n}^{\nu+1})-h(\rveu_{n}^{1}) \right)
 \leq 0 
\end{align}

 On the other hand,
 \begin{align}
 h(\rvey^{\nu+1}_{n}) 
 =
 h(\rvey^{1}_{n}) 
  -
 h(\rvey^{1}_{\nu}|\rvey^{\nu+1}_{n})
 \overset{(a)}{=}
  h(\rveu^{1}_{n})
 -
 h(\rvey^{1}_{\nu}|\rvey^{\nu+1}_{n})
 \geq 
   h(\rveu^{1}_{n})
 -
 h(\rvey^{1}_{\nu}),\label{eq:prior_to}
 \end{align}
 where $(a)$ follows because $[\bPsi_{n}^{T} | \bPhi_{n}^{T}]^{T}$ is an orthogonal matrix.
 Letting $(\rvey_{G})^{1}_{\nu}$ correspond to the jointly Gaussian sequence with the same second-order moments as $\rvey^{1}_{\nu}$, and recalling that the Gaussian distribution maximizes differential entropy for a given covariance, we obtain the upper bound
 \begin{align}
   h(\rvey^{1}_{\nu}) \leq h((\rvey_{G})^{1}_{\nu}) 
   \overset{(a)}{=} 
   \frac{1}{2}
   \log\left((2\pi\expo{})^\nu \det(\bPsi_{n}\diag\set{\sigsq_{\rvau_i}}_{i=1}^{n}\bPsi_{n}^{T} )\right) 
   \overset{(b)}{\leq} 
   \frac{\nu}{2} 
   \log\left(2\pi\expo{} \max\set{\sigsq_{\rvau_i}}_{i=1}^{n} \right), \label{eq:crece}
 \end{align}
where $(a)$ follows since the $\set{\rvau_i}_{i=1}^{n}$ are independent,
and $(b)$ stems from the fact that $\bPsi_{n}\in\Rl^{\nu\times n}$ has orthonormal rows and from the Courant-Fischer theorem~\cite{horjoh85}.
Since $\max\set{\sigsq_{\rvau_i}}_{i=1}^{n}$ is bounded for all $n$, we obtain by substituting~\eqref{eq:crece} into~\eqref{eq:prior_to} that 
$\lim_{n\to\infty} \frac{1}{n}(h(\rvey^{\nu+1}_{n})-h(\rveu^{1}_{n}) )\geq 0$.
The combination of this with~\eqref{eq:is_non_pos} yields
$\lim_{n\to\infty} \frac{1}{n}(h(\rvey^{\nu+1}_{n})-h(\rveu^{1}_{n}) )=0$, completing the proof. 

\end{proof}

We re-state here (for completeness and convenience) the unnumbered lemma in the proof of~{\cite[Theorem~1]{hasari80}} as follows:
\begin{lem}\label{lem:hashimoto}
Let the function $\iota$ be as defined in~\eqref{eq:iota_def} but for a transfer function $G(z)$ with no poles and having only a finite number of zeros, $m$ of which lie outside the unit circle.
Then,
\begin{align}
 \lambda_{l}(\bG_n\bG_n^T) 
 = 
 \begin{cases}  
 \alpha_{n,l}^{2}(\rho_{ \iota(l)})^{-2n}	&, \text{if } l\leq m,\\
 \alpha_{n,l}^{2}				&, \text{otherwise },
 \end{cases}
\end{align}
where the elements in the sequence $\set{\alpha_{n,l} }$ are positive and increase or decrease at most polynomially with $n$.
\finenunciado
\end{lem}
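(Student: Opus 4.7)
The plan is to reduce the claim to the case in which $G(z)$ has only NMP zeros, and then to prove that case by induction on the number $m$ of NMP zeros. Since $G$ is FIR, I would factor $G(z)=\prod_{i=1}^{p}(1-\rho_i z^{-1})$, so that the associated Toeplitz matrix admits the factorization $\bG_n=\prod_{i=1}^{p}\bB_n^{(i)}$, where $\bB_n^{(i)}$ is the lower bidiagonal Toeplitz matrix with $1$'s on its main diagonal and $-\rho_i$'s on its first subdiagonal. Since lower triangular Toeplitz matrices commute, I may reorder and regroup the factors as $\bG_n=\bM_n\bN_n$, with $\bM_n$ corresponding to the $p-m$ MP zeros and $\bN_n$ to the $m$ NMP zeros. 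Lemma~\ref{lem:singular_values_bounded} applied to $\bM_n$ yields $0<c\leq\sigma_{\min}(\bM_n)\leq\sigma_{\max}(\bM_n)\leq C<\infty$ uniformly in $n$, so a Weyl-type sandwich gives $c\,\sigma_l(\bN_n)\leq\sigma_l(\bG_n)\leq C\,\sigma_l(\bN_n)$ for every $l,n$. The exponential asymptotics of $\sigma_l(\bG_n)$ therefore coincide with those of $\sigma_l(\bN_n)$, with the bounded ratio absorbed into the polynomial prefactor $\alpha_{n,l}$, and the problem reduces to analyzing $\bN_n$ alone.

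For the NMP part I would proceed by induction on $m$, ordering the zeros so that $|\rho_1|\geq|\rho_2|\geq\cdots\geq|\rho_m|>1$. The base case $m=1$ is handled by direct inversion: $(\bB_n^{(1)})^{-1}$ is the lower triangular Toeplitz matrix with $(i,j)$ entry $\rho_1^{\,i-j}$ for $i\geq j$, so a geometric-series estimate gives $\|(\bB_n^{(1)})^{-1}\|^2=\Theta(|\rho_1|^{2n})$ up to a polynomial in $n$, hence $\sigma_{\min}^2(\bB_n^{(1)})=\Theta(|\rho_1|^{-2n})$. Because $|\det \bB_n^{(1)}|=1$ and $\bB_n^{(1)}\bB_n^{(1)T}$ is tridiagonal with entries bounded in $n$, the remaining $n-1$ eigenvalues stay inside a fixed positive interval, which matches the form claimed in the lemma. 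For the inductive step I would use the commutativity of lower triangular Toeplitz matrices to write $\bN_n^{-1}=\prod_{i=1}^{m}(\bB_n^{(i)})^{-1}$ in any order, and analyze the symbol $1/\prod_i(1-\rho_i z^{-1})$: its causal impulse response is a linear combination of the geometric sequences $\{\rho_i^k\}_{k\geq 0}$ and therefore has magnitude $O(|\rho_1|^{n})$ with polynomial prefactor, showing that $\bN_n^{-1}$ has at most $m$ singular values growing exponentially.

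The main obstacle will be the inductive step, because Weyl's multiplicative inequalities alone do not rule out a single zero $\rho_{\iota(l)}$ generating more than one exponentially small singular value of the product, nor do they enforce the precise pairing $\sigma_l(\bN_n)\sim |\rho_{\iota(l)}|^{-n}$. To close this gap I would combine (i) a Hadamard-type upper bound on $k\times k$ minors of $\bN_n^{-1}$, yielding $\prod_{l=1}^{k}\sigma_{n-l+1}(\bN_n^{-1})\leq C_k\prod_{l=1}^{k}|\rho_{\iota(l)}|^{n}$, with (ii) a matching lower bound obtained by testing $\bN_n^{-T}\bN_n^{-1}$ against $k$ explicit near-eigenvectors supported near the last coordinates of $\Rl^n$, whose Rayleigh quotients are of order $|\rho_{\iota(l)}|^{2n}$. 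Together, these two bounds pin down $\sigma_l^2(\bN_n)=\alpha_{n,l}^2|\rho_{\iota(l)}|^{-2n}$ with $\alpha_{n,l}$ of polynomial growth, which, combined with the MP reduction of the first paragraph, gives exactly the form claimed by the lemma.
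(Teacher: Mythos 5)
The paper does not actually give a proof of this lemma: it is restated verbatim as the unnumbered lemma appearing in the proof of Theorem~1 of Hashimoto and Arimoto~\cite{hasari80}, and the paper simply defers to that reference. There is therefore no internal proof to compare your attempt against, so what follows assesses your sketch on its own merits.

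Your reduction to the pure NMP factor is sound. The Toeplitz factorization $\bG_n=\prod_i\bB_n^{(i)}$ is exact because lower-triangular Toeplitz matrices form a commutative ring in which multiplication realizes truncated convolution, and writing $\bG_n$ as (MP factor)$\,\times\,\bN_n$, invoking Lemma~\ref{lem:singular_values_bounded}, and using the multiplicative Weyl inequalities $\sigma_{\min}(\cdot)\,\sigma_l(\bN_n)\leq\sigma_l(\bG_n)\leq\sigma_{\max}(\cdot)\,\sigma_l(\bN_n)$ correctly transfers the exponential rates from $\bN_n$ to $\bG_n$, with the bounded ratio absorbed into $\alpha_{n,l}$. (You are implicitly assuming that the non-NMP zeros lie strictly inside the unit circle, since Lemma~\ref{lem:singular_values_bounded} requires the MP factor to be minimum phase; the lemma's ``otherwise'' clause also covers unit-circle zeros, which would need a separate argument.)

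Two steps, however, are not yet proofs. In the base case, the observation that $\bB_n^{(1)}(\bB_n^{(1)})^T$ is tridiagonal with bounded entries yields only an eigenvalue upper bound; to prevent more than one eigenvalue from collapsing you need something like: $\bB_n^{(1)}(\bB_n^{(1)})^T$ equals the symmetric Toeplitz matrix with symbol $|1-\rho_1 e^{-j\omega}|^2$ minus the rank-one matrix $\rho_1^2\,\mathbf{e}_1\mathbf{e}_1^T$, so Cauchy interlacing keeps all but the smallest eigenvalue above $\min_\omega|1-\rho_1 e^{-j\omega}|^2=(|\rho_1|-1)^2>0$. More seriously, the inductive step remains a plan, and it is precisely the content of the lemma. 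The proposed lower bound via test vectors ``supported near the last coordinates'' does not directly work: the columns $\bN_n^{-1}\mathbf{e}_j$ all have norms governed by the single dominant rate $|\rho_1|^{n-j}$, so to expose the slower rates $|\rho_{\iota(l)}|^n$ for $l\geq 2$ you must cancel the $\rho_1$ contribution, e.g.\ by successive differencing along the partial-fraction decomposition $1/N(z)=\sum_i A_i/(1-\rho_i z^{-1})$ (with separate treatment of repeated zeros, which your ordering $|\rho_1|\geq|\rho_2|\geq\cdots$ silently allows), and the matching Hadamard bound on $k\times k$ minors needs to be established with that bookkeeping. Until both bounds are actually constructed, the inductive step does not close, so the sketch as written has a genuine gap, even though the overall strategy is workable.
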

\begin{lem}\label{lem:initial_states}
 Let $P(z)=\frac{N(z)}{D(z)}$ be rational transfer function of order $p$ with relative degree 1, with initial state $\bx_{0}\in\Rl^{p}$.  
 Let $T(z)=\frac{\Gamma (z)}{\Theta(z)}$ be a biproper rational transfer function of order $t$  
 with initial state $\bs_{0}\in\Rl^{t}$.
 Let 
 \begin{align}
  y \eq u - P(z)T(z) y,
 \end{align}
where $u$ is an exogenous signal.
Then
\begin{align}
 y = D \cdot \frac{\Theta}{\Theta D +N \Gamma  }u,
\end{align}
where the initial state of $D(z)$ is $\bx_{0}$ and the initial state of 
$\Theta/(\Theta D +N \Gamma )$ can be taken to be $[\bx_{0}\;\;\bs_{0}]$.
\end{lem}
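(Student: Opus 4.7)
My plan is to use the $z$-transform formulation with explicit initial-condition terms. Recall that for any rational LTI block $H(z)=b(z)/a(z)$ with initial state $\xi_{0}$ driven by input $v$, the output $w$ obeys a relation of the form $a(z)W(z)=b(z)V(z)+r_{\xi_{0}}(z)$, where $r_{\xi_{0}}$ is a polynomial determined linearly by $\xi_{0}$. I would apply this to $P$ (driven by $T$'s output $Y_{T}$, with initial state $\bx_{0}$) and to $T$ (driven by $Y$, with initial state $\bs_{0}$), together with the feedback constraint $Y_{P}=U-Y$, yielding the three $z$-domain relations $DY_{P}=NY_{T}+r_{\bx_{0}}$, $\Theta Y_{T}=\Gamma Y+r_{\bs_{0}}$, and $Y_{P}=U-Y$.

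The next step is pure algebraic elimination. Substituting the feedback constraint into the $P$-equation gives $NY_{T}=DU-DY-r_{\bx_{0}}$. Multiplying the $T$-equation by $N$ and replacing $NY_{T}$ by this expression yields
\begin{align*}
(\Theta D + N\Gamma)\,Y \;=\; \Theta D\,U \;-\; \Theta\, r_{\bx_{0}} \;-\; N\, r_{\bs_{0}}.
\end{align*}
The forced-response part of $Y$ is thus $\Theta D/(\Theta D+N\Gamma)\cdot U$, which factors as $D\cdot\bigl[\Theta/(\Theta D+N\Gamma)\bigr]\cdot U$ and matches the transfer function asserted in the lemma.

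It then remains to interpret the natural-response term $-(\Theta r_{\bx_{0}}+N r_{\bs_{0}})/(\Theta D+N\Gamma)$ as the one produced by the cascade $D\cdot[\Theta/(\Theta D+N\Gamma)]$ when the inner rational block carries initial state $[\bx_{0},\bs_{0}]$ and the FIR factor $D$ carries initial state $\bx_{0}$. Writing the inner block's natural response as $\tilde{r}(z)/(\Theta D+N\Gamma)$ and letting $p_{0}(z)$ denote the polynomial correction arising from $D$'s initial state, the cascade's natural response takes the form $[D\tilde{r}+(\Theta D+N\Gamma)p_{0}]/(\Theta D+N\Gamma)$. Matching this with $-(\Theta r_{\bx_{0}}+N r_{\bs_{0}})/(\Theta D+N\Gamma)$ reduces to a single polynomial identity in the unknowns $(\tilde{r},p_{0})$.

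The main obstacle will be this last polynomial matching step, which I would settle by exploiting the standard linear bijections between initial-state vectors and natural-response polynomials for each block (together with the coprimeness of $N$ and $D$ coming from the minimality of $P$) to show that the specific choice $(\tilde{r},p_{0})$ induced by $[\bx_{0},\bs_{0}]$ and $\bx_{0}$ indeed reproduces $-(\Theta r_{\bx_{0}}+N r_{\bs_{0}})$. The cascade is over-parameterized by $p$ redundant states relative to the closed loop, and this non-uniqueness is precisely the freedom encoded in the ``can be taken'' phrasing of the statement: the point is merely to exhibit one valid initial-state assignment, not to force a unique one.
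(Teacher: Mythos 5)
Your $z$-domain elimination is correct: combining $DY_{P}=NY_{T}+r_{\bx_{0}}$, $\Theta Y_{T}=\Gamma Y+r_{\bs_{0}}$ and $Y_{P}=U-Y$ does give $(\Theta D+N\Gamma)Y=\Theta D\,U-\Theta r_{\bx_{0}}-Nr_{\bs_{0}}$, and your observation that the cascade carries $p$ redundant state components relative to the closed loop is accurate. This is also a genuinely different route from the paper's, which never leaves the time domain.

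However, the proposal stops exactly where the content of the lemma lies. The statement is not merely a transfer-function identity (that part is trivial); it is an assertion about which \emph{particular} initial-state vectors make the cascade reproduce the closed loop. Your plan to settle this by "standard linear bijections between initial-state vectors and natural-response polynomials" is underspecified: those bijections are realization-dependent, and the lemma does not declare which realizations of $D(z)$ and of $\Theta/(\Theta D+N\Gamma)$ it is using. Without fixing them, the polynomials $\tilde r$ and $p_{0}$ are not defined, and the target identity $D\tilde r+(\Theta D+N\Gamma)p_{0}=-(\Theta r_{\bx_{0}}+N r_{\bs_{0}})$ has no content. Dimension-counting plus coprimeness could at best show that \emph{some} initial-state pair works, not that the pair of the form $(\bx_{0},[\bx_{0},\bs_{0}])$ with the \emph{same} $\bx_{0}$ in both slots does — and that repeated $\bx_{0}$ is precisely what the lemma asserts.

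The paper closes this gap constructively and much more cheaply. It introduces $x=y/D$, $w=Nx$, $s=w/\Theta$, $v=\Gamma s$, writes out the three resulting scalar recursions, and simply reads off that the delay contents of the block $\Theta/(\Theta D+N\Gamma)$ driven by $u$ are $(x_{1-p},\dots,x_{0},s_{1-t},\dots,s_{0})=(\bx_{0},\bs_{0})$, while the delay contents of the FIR block $D$ acting on $x$ are $(x_{1-p},\dots,x_{0})=\bx_{0}$ — the same memory. No polynomial identity is needed: the shared internal signal $x$ is what forces $\bx_{0}$ to appear in both places. To complete your proof you would need to make an analogous explicit choice of realization and compute $\tilde r$, $p_{0}$, $r_{\bx_{0}}$, $r_{\bs_{0}}$ from it, at which point you are effectively redoing the paper's construction in $z$-domain dress.
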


\begin{proof}
Let  
$D(z)=1-\sumfromto{i=1}{p}d_{i}z^{-i}$ and
$N(z) = \sumfromto{i=1}{p}n_{i}z^{-i}$.
Define the following variables:
\begin{align}
 x = \frac{1}{D}y, && w = N x, && s = \frac{1}{\Theta} w, && v = \Gamma s.
\end{align}
Then the recursion corresponding to $P(z)$ is
\begin{align}
 x_{k} &= \sumfromto{i=1}{p}d_{i}x_{k-i} + y_{k},\fspace k\geq 1,
 \\
w_{k} & = \sumfromto{i=1}{p}n_{i}x_{k-i},\fspace k\geq 1.
\end{align}
This reveals that the initial state of $P(z)$ corresponds to
\begin{align}
 \bx_{0}= [x_{1-p}\; \;x_{2-p}\;\; \cdots \; \;x_{0}].
\end{align}
Let 
$\Gamma(z)=\sumfromto{i=0}{t}\g_{i}z^{-i}$
and
$\Theta(z) = 1-\sumfromto{i=1}{t} \theta_{i}z^{-i}$.
Then $v = T(z) w$ can be written as
\begin{align}
s_{k} &= \sumfromto{i=1}{t}\theta_{i}s_{k-i} +  w_{k},
\\ 
v_{k} &= \sumfromto{i=0}{t}\g_{i}s_{k-i},\fspace k\geq 1,
\end{align}
which reveals that the initial state of $T(z)$ can be taken to be 
\begin{align}
 \bs_{0}\eq [s_{1-t} \;\; s_{2-t}\;\;\cdots \;\; s_{0} ].
\end{align}
Since $y_{k}=u_{k}-v_{k}$, it follows that 
\begin{align}
 x_{k}
& 
=
\sumfromto{i=1}{p}d_{i}x_{k-i}  -v_{k} +u_{k}
=
\sumfromto{i=1}{p}d_{i}x_{k-i}  
- 
\sumfromto{i=1}{t}\g_{i}s_{k-i}
+
u_{k},\fspace k\geq 1.
\end{align}
Combining the above recursions, it is found that $y$ is related to the input $u$ by the following recursion:
\begin{align}
 x_{k} & = \sumfromto{i=1}{p}d_{i}x_{k-i}  - \sumfromto{i=1}{t}\g_{i}s_{k-i}
+
u_{k},\fspace k\geq 1,\\
 s_{k} & = \sumfromto{i=1}{t}\theta_{i}s_{k-i} + \sumfromto{i=1}{p}n_{i}x_{k-i},\fspace k\geq 1,\\
 y_{k} & = x_{k} -\sumfromto{i=1}{p}d_{i}x_{k-i},\fspace k\geq 1,
\end{align}
which corresponds to 
\begin{align}
 y = 
 \underbrace{D}_{\text{init. state $\bx_{0}$}}
 \cdot
 \overbrace{
 \underbrace{\frac{\Theta}{\Theta D+N\Gamma}}_{\text{init. state $[\bx_{0},\bs_{0}]$}}
 u}^{x}.
\end{align}

\end{proof}

%

\end{document}